\documentclass[12pt]{article}

\usepackage[T1]{fontenc}

\usepackage[
left=0.6in,
right=0.6in,
top=0.6in,
bottom=0.6in
]{geometry}

\usepackage{setspace}
\usepackage{amsmath,amssymb,amsfonts}
\usepackage{amsthm}
\newtheorem{remark}{Remark}

\usepackage{algorithm}
\usepackage{algpseudocode}
\usepackage{float}
\usepackage{placeins}
\usepackage{parskip}
\usepackage{mathrsfs}
\usepackage{graphicx}
\usepackage{xcolor}
\usepackage{subcaption}
\usepackage{forest}
\usepackage{hyperref}
\usepackage{booktabs}
\usepackage{longtable}
\usepackage{array}
\usepackage{tabularx}

\usepackage{amsmath,amssymb,amsfonts}

\usepackage{parskip}

\usepackage{threeparttable}

\usepackage{graphicx}
\usepackage{subcaption}
\usepackage{tikz}
\usetikzlibrary{positioning,arrows.meta}
\usepackage{capt-of}
\usepackage{afterpage}
\usepackage{algorithm}
\usepackage{algpseudocode}

\usepackage{xspace}
\usepackage{color}
\usepackage{enumerate}
\usepackage{tikz}
\usetikzlibrary{arrows.meta,positioning}
\usepackage{hyperref}
\usepackage{array}
\usepackage{changepage}
\usepackage{booktabs}   % for \toprule, \midrule, \bottomrule
\algnewcommand{\algorithmicswitch}{\textbf{switch}}
\algnewcommand{\algorithmiccase}{\textbf{case}}

\algdef{SE}[SWITCH]{Switch}{EndSwitch}[1]
{\algorithmicswitch\ #1\ \textbf{do}}
{\textbf{end switch}}

\algdef{SE}[CASE]{Case}{EndCase}[1]
{\algorithmiccase\ #1\textbf{:}}
{\textbf{end case}}

\newenvironment{keywords}
{\par\noindent\textbf{Keywords: }\ignorespaces}
{\par}

\newcommand{\sep}{;\ }

\newtheorem{lemma}{{\bf Lemma}}
\newtheorem{property}{{\bf Property}}
\newtheorem{observation}{{\bf Observation}}
\newtheorem{definition}{{\bf Definition}}
\newtheorem{theorem}{{\bf Theorem}}

 \newcommand{\1}{$ {\mathcal Multiple}$}
\newcommand{\2}{$ {\mathcal Dense}$}
 \newcommand{\3}{$ {\mathcal Non-Dense}$}
  \newcommand{\4}{$ {\mathcal Final}$}

\title{\textbf{Conflicting Pattern Formation by Teams of Anonymous, Fully Disoriented Robots}}

\author{
Animesh Maiti\thanks{Corresponding author: \texttt{p22ma201@iitj.ac.in}}
\and
Prakhar Shukla
\and
Subhash Bhagat
}

\date{
\small
Department of Mathematics, Indian Institute of Technology Jodhpur\\
Jodhpur, Rajasthan, India
}

\begin{document}

\maketitle

\begin{abstract}
Two groups of autonomous, anonymous, and oblivious mobile robots are deployed in the two-dimensional Euclidean plane, each assigned a distinct task. We study a setting where the two groups must simultaneously solve two conflicting pattern formation problems: the \textit{gathering problem}, where robots gather at a point not known to them a priori, and the \textit{circle formation problem}, where robots occupy distinct positions on the boundary of a circle. Although each robot knows its own task, it cannot identify other members of its group. A prior solution~\cite{Conflict-1} addressed this problem for asynchronous robots having {\it direction-only axis agreement} and {\it global weak multiplicity detection} capability available to all robots in both groups. In contrast, in this work, we consider fully {\it disoriented robots} without any axis agreement or common \textit{chirality}. We study the feasibility of a solution to this problem for {\it disoriented robots}. We propose a distributed algorithm that solves the problem for semi-synchronous disoriented robots with non-rigid movements. Our proposed algorithm assumes global weak multiplicity detection only for the gathering group, while for the circle formation group, it requires local weak multiplicity detection. 
\end{abstract}

              % typeset the header of the contribution
\begin{keywords}
\setlength{\parskip}{0pt}
Mobile robots
\sep Conflicting pattern formation
\sep Gathering
\sep Circle formation
\sep Fully disoriented robots
\sep Multiplicity detection
\end{keywords}
\section{Introduction}
\label{sec:introduction}

A {\it swarm of robots} consists of several small autonomous mobile robots that cooperatively perform tasks without any centralized control. The robots are anonymous, i.e., they can not be distinguished by their physical appearances or by identity. Robots are homogeneous, i.e., all the robots have the same set of capabilities, and they run the same distributed algorithm. They are oblivious, i.e., they do not carry forward any information from the previous computational cycles. They do not have explicit communication capabilities. They communicate implicitly by changing their positions. Robots do not have access to any global coordinate system. However, each robot has its own local coordinate system centered at its current position. The direction and the orientation of the local coordinate axes may vary among the robots. 

At a point in time, a robot is either idle or active. The activation of the robots is controlled by an adversarial scheduler. There are mainly three types of scheduler: {\it fully-synchronous} ($\mathrm{FSYNC}$), {\it semi-synchronous} ($\mathrm{SSYNC}$), or {\it asynchronous} ($\mathrm{ASYNC}$). In $\mathrm{FSYNC}$ and $\mathrm{SSYNC}$ scheduling, time is divided into discrete rounds. An $\mathrm{FSYNC}$ scheduler activates all the robots in each round, whereas an $\mathrm{SSYNC}$ scheduler activates only a subset of robots in each round. The $\mathrm{ASYNC}$ scheduler is the most general one. Under this scheduler, there is no notion of common rounds. The activations and execution times are unpredictable but finite. Each active robot repeatedly executes an atomic {\it Look-Compute-Move} cycle. During the {\it Look} phase, a robot observes the positions of other robots with respect to its own local coordinate system. In the {\it Compute} phase, it computes a destination point using the locations obtained in the {\it Look} phase, and in the {\it Move} phase, it moves toward that destination point. We assume a {\it fair scheduler} that activates each robot infinitely many times.

 Robot movements may be {\it rigid} or {\it non-rigid}. For non-rigid  movement, an adversary may stop a robot before it reaches its destination; however, the robot always moves at least $\delta>0$ distance unless it reaches the destination. Robots may be endowed with some extra capabilities, and they may have some agreements on the directions and orientations of the local coordinate axes. Two or more robots may occupy the same point in the plane, and such a point is called a {\it multiplicity point}. If robots have {\it global weak multiplicity detection} capability, then they can identify a multiplicity point even if they do not lie on the multiplicity point. On the other hand, if robots have {\it local weak multiplicity detection} capability, then they can identify a multiplicity point only when they lie at that multiplicity point.

The literature contains a large volume of studies on different geometric pattern formation problems under different computational models~\cite{Santoro2012}. The {\it gathering problem} and the {\it circle formation problem} are two of the fundamental formation problems studied in the literature. The gathering problem asks a set of autonomous robots to coordinate their movements in such a way that, within a finite time, all of them meet at a point not known to them a priori. Whereas the circle formation problem requires the robots to place themselves at distinct points on the boundary of a circle. These two problems have been studied separately under different computational models, and several results and algorithms have been proposed for these two problems in the literature.

\section{Related Work}

% One of the major focuses of research in this field is to establish minimal sets of robot capabilities needed to solve a given geometric formation problem under a given model. 
One of the major research directions in distributed mobile robotics is to determine the minimum set of robot capabilities required to solve geometric pattern formation problems under different computational models. Among these problems, gathering and circle formation are two of the most fundamental and extensively studied tasks in the literature.

{\bf The gathering problem:}
The gathering problem has been studied extensively in the literature under different computational models~\cite{Santoro2012}. The problem is solvable for fully synchronous robots without any extra assumption on the capabilities of the robots~\cite{Santoro2012}; however, the problem is not solvable for $n\ge 2$ semi-synchronous (and hence for asynchronous) robots without additional capabilities of the robots~\cite{Prencipe2007,Suzuki1999}. The problem is solvable in finite time for $n\ge 5$ asynchronous oblivious robots when robots are endowed with global weak multiplicity detection capability~\cite{Cieliebak2012}. The algorithm proposed in~\cite{Viglietta2013} solves the gathering problem for $n=2$ robots using lights with two colors. Under the limited visibility model, the problem is solvable for asynchronous robots with a consistent compass~\cite{Flocchini2005}. The gathering problem has also been studied for $fat$ robots (robots having physical extents). Czyzowicz et al. solved the gathering problem for $n = 4$ fat robots~\cite{czyzowicz2009gathering}. Later, Agathangelou et al. proposed a gathering algorithm for an arbitrary number of robots~\cite{Agathangelou2012}. Their algorithm assumes common chirality. The gathering problem has also been studied under various fault models~\cite{agmon2006fault,BhagatM17,PattanayakMRM19,Bhagat201650,Bouzid2013}. 

{\bf The circle formation problem:}
This problem is solvable for asynchronous robots in finite time without
any additional capabilities of the robots~\cite{Santoro2012}.
The circle formation problem was also studied with the additional
objective of minimizing the maximum distance traversed by any
robot~\cite{bhagat2018optimum}. Several variants of circle formation
have subsequently been investigated. In particular, the $k$-circle
formation problem has been studied for asynchronous robots under
one-axis agreement and, later, for fully disoriented robots
~\cite{bhagat2021kcircle,das2022kcircle}.

One important variation is the {\it uniform circle formation problem},
which requires the robots to occupy distinct equally spaced positions
on a common circle. Uniform circle formation is solvable for
asynchronous robots without additional assumptions
~\cite{flocci2014,vigl2016}. More recent works have considered uniform
circle formation for opaque luminous robots under different
schedulers~\cite{feletti2023uniform}, improved the asynchronous time
complexity to $O(\log n)$ while using a constant number of colors
~\cite{feletti2024log}, and obtained an asymptotically optimal
constant-time, constant-color solution for asynchronous luminous
robots~\cite{feletti2024optimal}. Recently, uniform $k$-circle
formation has also been investigated for asynchronous fat robots
~\cite{das2025uniform}.

{\bf The conflicting patterns formation problem:} The majority of the existing works assume that all robots cooperate to achieve a single common objective. Bhagat et al.~\cite{Conflict-1} initiated the study of conflicting task formation, where two groups of robots simultaneously solve two distinct pattern formation problems. They considered the gathering and circle formation as the conflicting patterns and proposed an algorithm to solve the problem for asynchronous robots. Their algorithm assumes direction-only axis agreement, global weak multiplicity detection capability for both the groups and more than $5$ robots in the gathering group. 

In this paper, we extend this line of work for {\it disoriented robots}, i.e., for robots without any form of axis agreement. Our objective is to investigate the feasibility of a solution for this problem for disoriented robots.

 %%%%%%%%%%%%%%%%%%%%%%%%%%%
\section{Model and Terminologies}
\subsection{Model}
\label{mod}

Let $\mathcal R =\{r_1, r_2, \dots, r_n\}$ denote the set of $n$ robots represented by points in the Euclidean plane. Robots are anonymous and oblivious, and they do not have any form of direct communication capabilities. Robots do not have any form of axis agreement, and they have non-rigid movements. Robots work under the semi-synchronous scheduler (the $\mathrm{SSYNC}$ model). Let $\mathcal R_g\subset \mathcal R$ denote the set of robots required to solve the {\it gathering problem}, and $\mathcal R_f\subset \mathcal R$ be the team of robots required to solve the {\it circle formation problem}. We assume $\mathcal R=\mathcal R_g\cup\mathcal R_f$ and   $|\mathcal R_g| \ge 6$, $|\mathcal R_f|\ge2$. Let $N_f$ and $N_g$ denote the cardinalities of $\mathcal R_f$ and, $\mathcal R_g$ respectively. We assume the following: (i) robots in $\mathcal R_g$ have global weak multiplicity detection capability, and (ii) robots in $\mathcal R_f$ have local weak multiplicity capability, and they know the value of $|\mathcal R_f|$, i.e., $N_f$. A robot $r_i\in\mathcal R$ knows its task to perform, i.e., it knows the team to which it belongs. However, it can not identify its other team members. Initially, all robots are stationary and lie on distinct points on the plane.

\subsection{Terminologies}

Let $r_i(t)$ denote the position of robot $r_i\in\mathcal R$ at time $t$.
Let $\widetilde{\mathcal R}(t)=\{r_1(t),\ldots,r_n(t)\}$ be the multiset
of robot positions occupied by the robots in $\mathcal R$ at time $t$.
Let $\mathcal R(t)
=
\{p\in\mathbb{R}^2 \mid p \text{ occurs in }
\widetilde{\mathcal R}(t)\}$ denote the set of distinct occupied positions of the robots in
$\mathcal R$ at time $t$.
Similarly, let $\mathcal R_g(t)$ and $\mathcal R_f(t)$ denote the sets
of distinct positions occupied by the robots in $\mathcal R_g$ and
$\mathcal R_f$, respectively, at time $t$.
A point $p\in\mathcal R(t)$ occupied by at least two robots is called
a \emph{multiplicity point}. A multiplicity point occupied by at least
two robots from $\mathcal R_g$ is called a \emph{stable multiplicity point}.

Let $S(t)$ denote the smallest enclosing circle ($\operatorname{SEC}$) of the robot positions at time $t$, and let $\mathcal O(t)$ denote its centre. We use $\partial S(t)$ to denote the circumference of $S(t)$. Let $S_{out}(t)$ and $S_{in}(t)$ denote the robot positions in $\mathcal R(t)$ lying on the circumference of $S(t)$ and inside $S(t)$, respectively. $S(0)$ is the smallest enclosing circle for the initial robot configuration $\mathcal R(0)$. Let $D(t)=
\left\{
\|p-\mathcal O(t)\| :
p\in\mathcal R(t),\;
p\neq \mathcal O(t)
\right\}.$ Let $0<\rho_1(t)<\rho_2(t)<\cdots<\rho_m(t)$ be the distinct values in $D(t)$. For each $k\in\{1,\ldots,m\}$,
let $C_k(t)$ denote the circle centered at $\mathcal O(t)$ with
radius $\rho_k(t)$. Thus, the circles are indexed from the innermost
to the outermost radial level, and $C_m(t)=S(t)$. (see Figure~\ref{fig:c_k}). Let $a$ and $b$ be two points in the plane. By $(a,b)$ and $\overline{ab}$, we denote the open (excluding $a$ and $b$) and closed (including $a$ and $b$) line segments joining $a$ and $b$, respectively.

\begin{figure}[htbp]
    \centering
    \includegraphics[width=0.5\textwidth]{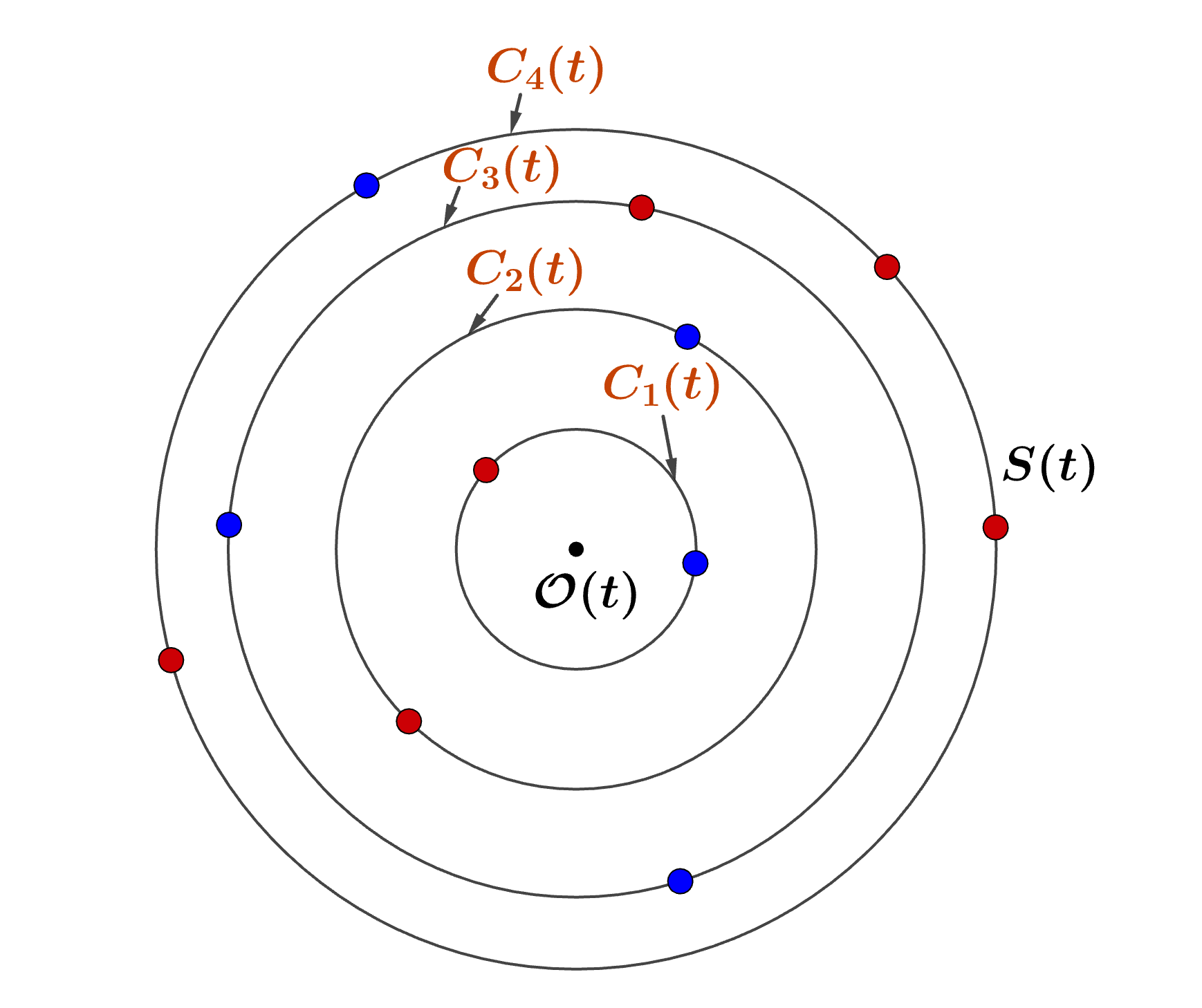}

    \caption{Illustrations of the circles $C_k(t)$ for $k \geq 1$: blue robots represent the set $\mathcal{R}_f$ and red robots represent the set $\mathcal{R}_g$.}

\label{fig:c_k}
\label{App-f11}
\end{figure}

We use the concept of \emph{quasi-regularity} originally studied in~\cite{abs-1207-0226,BhagatM17}. Before defining quasi-regularity, we introduce a few related definitions.

\begin{definition}[Successor]
Let $\mathcal R$ be a set of robot positions in $\mathbb{R}^2$ and let $c \in \mathbb{R}^2$ be a point. For any robot position $r_i \in \mathcal R$, the \emph{successor} of $r_i$ with respect to $c$, denoted by $S(r_i,c)$, is the next robot position in $\mathcal R$ encountered when moving in clockwise order around $c$. Robot positions lying on the same ray from $c$ are ordered consecutively according to their increasing distance from $c$. The $k$-th successor is defined recursively as $S^k(r_i,c)=S(S^{k-1}(r_i,c),c),$ with $S^0(r_i,c)=r_i$. (see Figure~\ref{Three}(b))
\end{definition}

\begin{definition}[String of Angles]
Let $\mathcal R$ be a set of robot positions and let $c \in \mathbb{R}^2$. Let $r_1,r_2,\dots,r_n$ be the robot positions of $\mathcal R$ ordered clockwise around $c$ according to the successor relation. The \emph{string of angles} of $\mathcal R$ with respect to $c$, denoted by $SA(\mathcal R,c)$, is defined as $SA(\mathcal R,c)=(\alpha_1,\alpha_2,\dots,\alpha_n),$ where $\alpha_i=\angle(r_i,c,r_{i+1})
\ \text{for } i=1,\dots,n,$ and $r_{n+1}=r_1.$
Note that if multiple robots lie on the same ray from $c$, then they appear consecutively in the ordering and contribute zero angular gaps in the string of angles. Thus, $SA(\mathcal R,c)$ captures both the angular ordering and the multiplicities of robots. (see Figure~\ref{Three}(b) )
\end{definition}

\begin{figure}[htbp]
    \centering

    \begin{subfigure}[t]{0.40\textwidth}
        \centering
        \includegraphics[width=\textwidth]{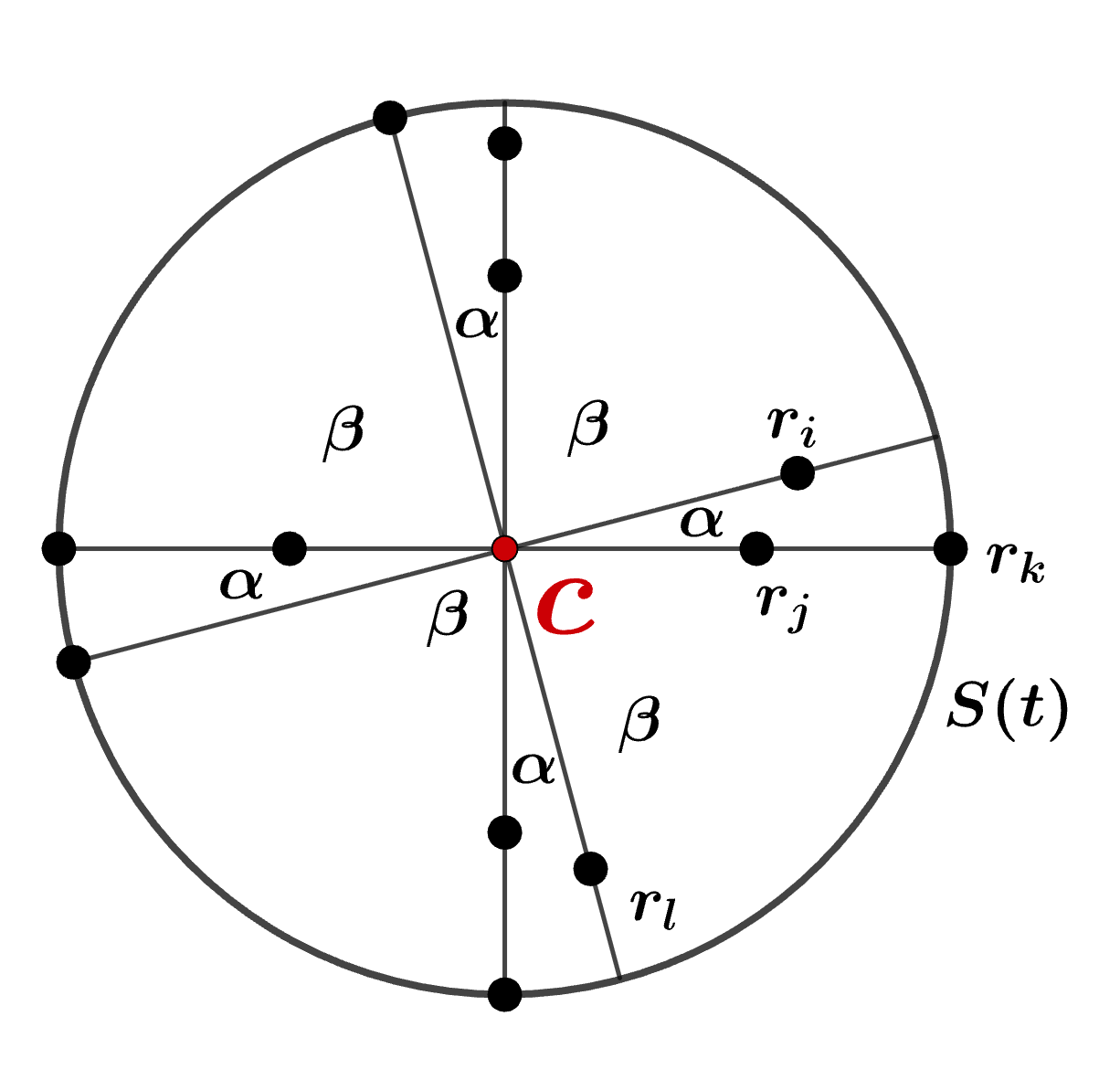}
        \caption{}
        
    \end{subfigure}
   \hspace{0.04\textwidth}%
    \begin{subfigure}[t]{0.40\textwidth}
        \centering
        \includegraphics[width=1.2\textwidth]{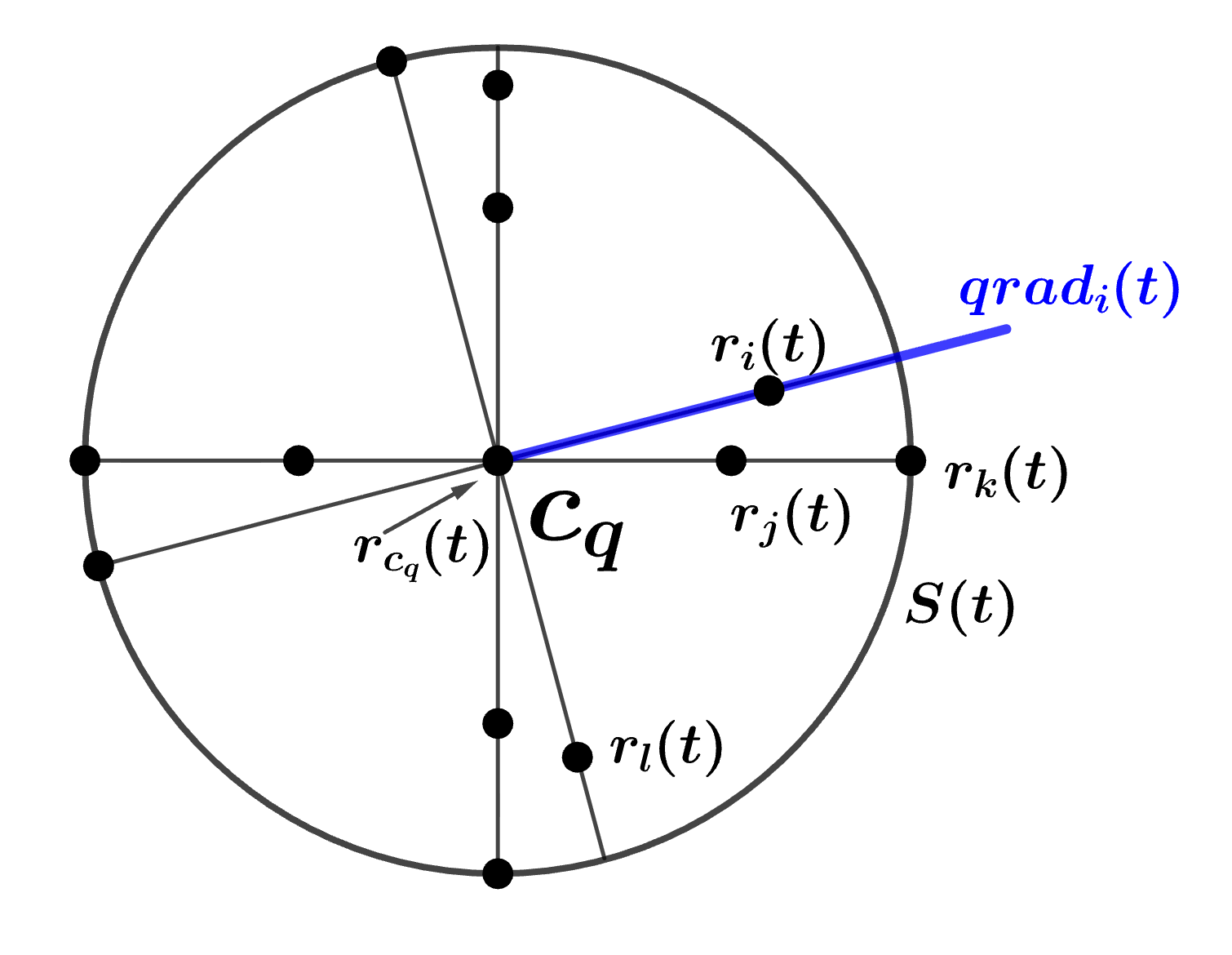}
        \caption{}
        
    \end{subfigure}

   \caption{(a) Illustration of the successor relation and the string of angles around the centre $c$. In the clockwise ordering, $S(r_i,c)=r_j$, $S(r_j,c)=r_k$, $S(r_k,c)=r_l$, and so on. The string of angles is $SA(\mathcal R,c)=(\alpha,0,\beta,\alpha,0,\beta,\alpha,0,\beta,\alpha,0,\beta)$, which can be written as $SA(\mathcal R,c)=X^4,
\quad \text{where } X=(\alpha,0,\beta)$. Therefore, the configuration is regular with $reg(\mathcal R)=4$. (b) Illustration of a quasi-regular configuration $\mathcal R(t)$ with centre of quasi-regularity $c_q$, where $\mathcal B(t)=\mathcal R(t)\setminus \{r_{c_q}(t)\}$. The half-line $qrad_i(t)$ denotes the ray starting from $c_q$ and passing through $r_i(t)$.}

\label{Three}
\end{figure}

\begin{definition}[Regular Configuration]
A configuration $\mathcal R$ is said to be \emph{regular} with respect to a point $c$ if the string of angles $SA(\mathcal R,c)$ can be written as $SA(\mathcal R,c)=X^k,$ for some non-empty sequence $X$ and integer $k>1$. The regularity of $\mathcal R$, denoted by $reg(\mathcal R)$, is defined as the maximum such integer $k$. The point $c$ is called the \emph{centre of regularity}. The point $c$ is called the \emph{centre of regularity}. (see Figure~\ref{Three}(b))
\end{definition}

\begin{definition}[Quasi-Regularity]
\label{def-Q}
Let $\mathcal R(t)$ be a configuration of robots at time $t$. The configuration $\mathcal R(t)$ is said to be \emph{quasi-regular (Q-regular)} if there exists a subset $\mathcal B(t)\subseteq \mathcal R(t)$ and a point $c\in \mathbb R^2$ such that the subset $\mathcal B(t)$ forms a regular configuration with respect to $c$, i.e., the string of angles $SA(\mathcal B(t),c)$ is periodic with period greater than $1$, and all robots in $\mathcal R(t)\setminus \mathcal B(t)$ lie at the point $c$. The point $c$ is called the \emph{centre of quasi-regularity}, denoted by $c_q$. The quasi-regularity of $\mathcal R(t)$ is defined as $qreg(\mathcal R(t))=reg(\mathcal B(t))$, and $qreg(\mathcal R(t))=1$ if no such subset exists. For each robot $r_i(t)\in \mathcal R(t)$ with $r_i(t)\neq c_q$, the half-line $qrad_i(t)$ is defined as the ray starting from $c_q$ and passing through $r_i(t)$. (see Figure~\ref{Three}(b))
\end{definition}

Let $\mathcal{R}(t)$ be a quasi-regular configuration with center $c_q$. For each robot $r_i(t)$, let $qrad_i(t)$ be the segment joining $c_q$ to $r_i(t)$. The configuration is said to be \emph{free-path quasi-regular} if no other robot lies on $qrad_i(t)$ between $c_q$ and $r_i(t)$ for any $r_i(t) \in \mathcal{R}(t)$ (see Figure~\ref{QR}(a)). The configuration is said to be \emph{totally symmetric} if $\mathcal{O}(t) = c_q$ (see Figure~\ref{QR}(a)); otherwise, $\mathcal{O}(t) \neq c_q$ (see Figure~\ref{QR}(c)). 

\begin{definition}[Radial ray]
\label{def:rad}
For a robot position $r_i(t)\in\mathcal R(t)$ with $r_i(t)\ne\mathcal O(t)$,
the half-line $rad_i(t)$ is defined as the ray starting from
$\mathcal O(t)$ and passing through $r_i(t)$. (Compare $qrad_i(t)$ in
Definition~\ref{def-Q}, which is defined analogously with respect to the centre
of quasi-regularity $c_q$ rather than $\mathcal O(t)$.)
\end{definition}

\begin{figure}[htbp]
    \centering

    \begin{subfigure}[t]{0.3\textwidth}
        \centering
        \includegraphics[width=1.25\textwidth]{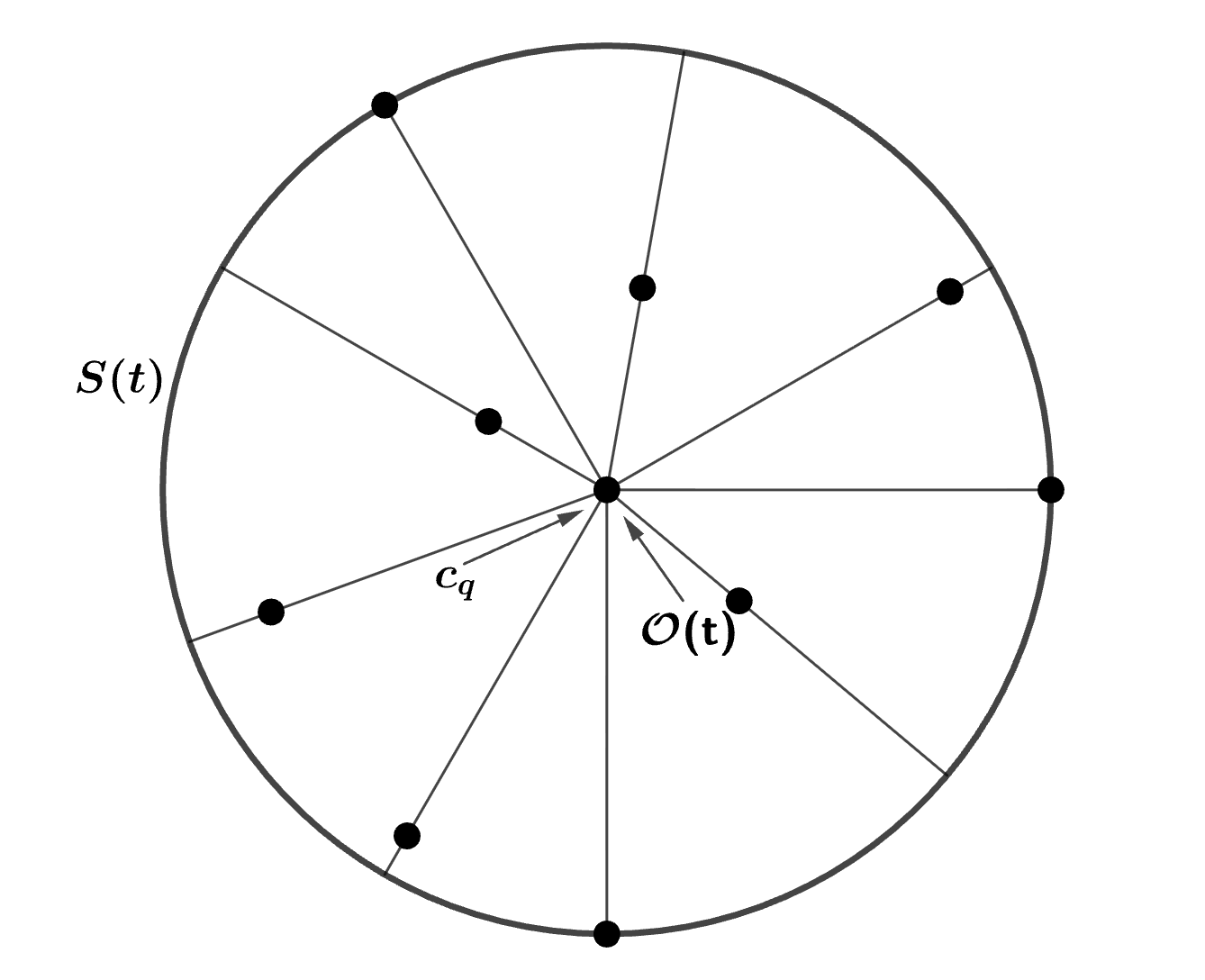}
        \caption{}

    \end{subfigure}
    \hspace{0.03\textwidth}
    \begin{subfigure}[t]{0.3\textwidth}
        \centering
        \includegraphics[width=1.2\textwidth]{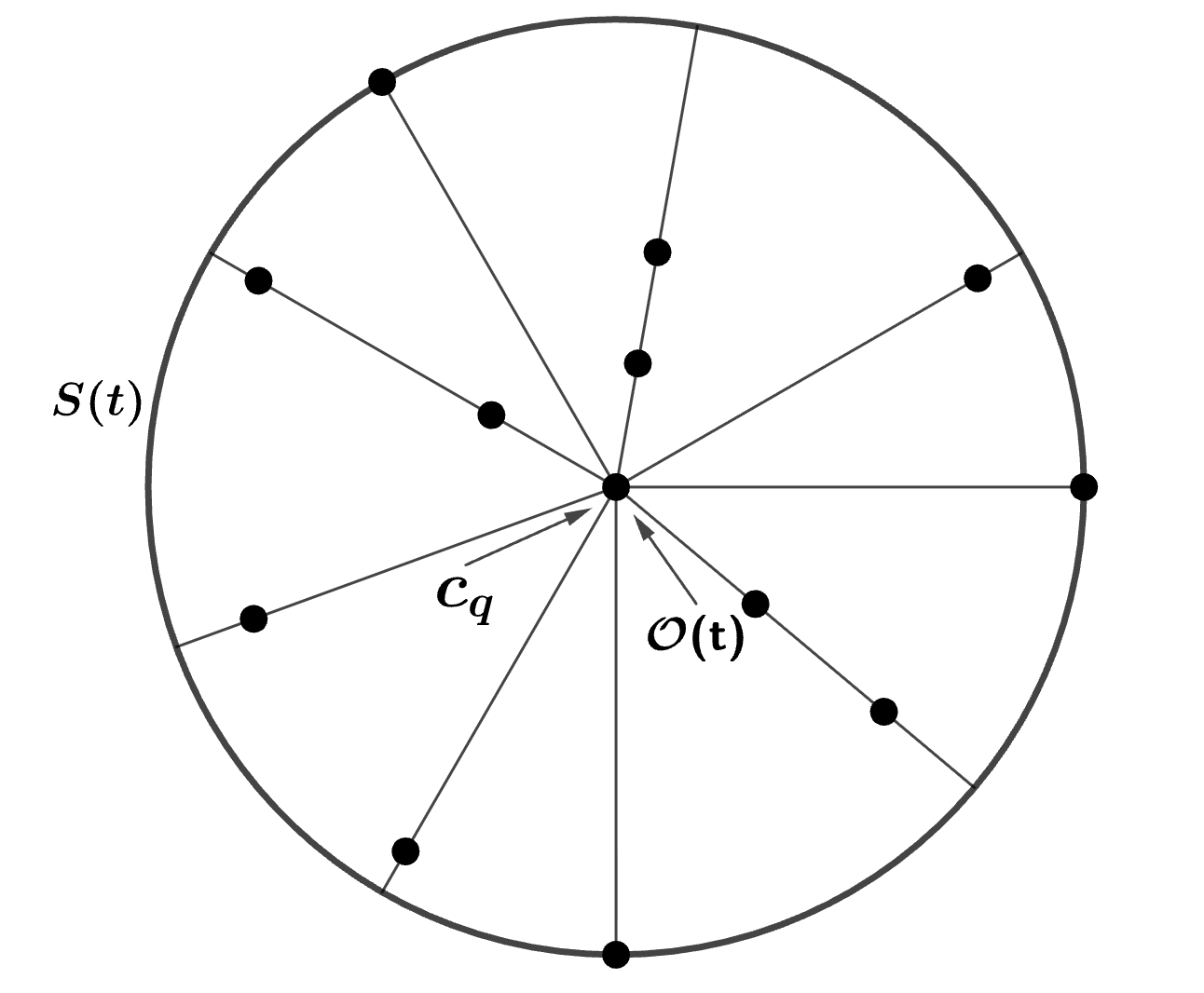}
        \caption{}
  
    \end{subfigure}
    \hspace{0.03\textwidth}
    \begin{subfigure}[t]{0.3\textwidth}
        \centering
        \includegraphics[width=1.2\textwidth]{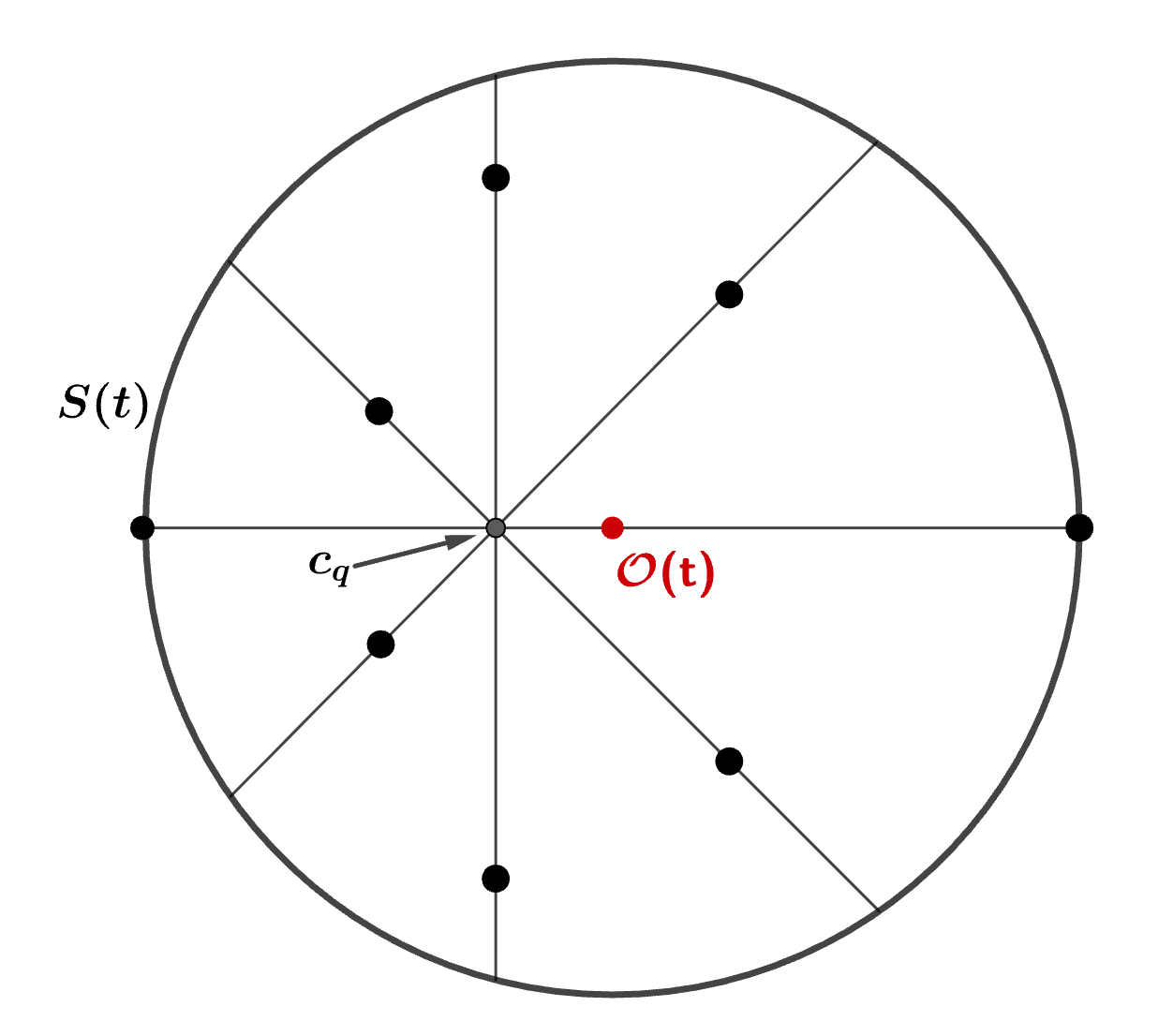}
        \caption{}
  
    \end{subfigure}

\caption{Illustration of different types of quasi-regular configurations:
(a) a \textit{free-path quasi-regular} robot configuration which is also
\textit{totally symmetric}, (b) a \textit{quasi-regular configuration}
which is not \textit{free-path quasi-regular}, (c) a \textit{quasi-regular}
configuration with $\mathcal O(t)\neq c_q$.}
\label{QR}
\end{figure}

To describe our algorithm, we will consider the following  classes of configurations:

\textbf{\boldmath \1:} This class contains all the robot configurations having exactly one multiplicity point.
 
\textbf{\boldmath \2:} This class contains all robot configurations having no multiplicity points, with  $|S_{out}(t)|\le 4$. 

\textbf{\boldmath \3:} A robot configuration belongs to this class if it does not contain any multiplicity points, with $|S_{out}(t)|>4$.

\textbf{\boldmath \4:} A robot configuration $\mathcal{R}(t)$ belongs to this class if it satisfies both of the following conditions:

\begin{enumerate}[(i)]
    \item $\exists\, p^* \in \mathbb{R}^2$ such that
$ r_i(t') = p^*, \quad \forall\, r_i \in \mathcal R_g,\; \forall\, t' \geq t,$  i.e., all the robots in $\mathcal R_g$ achieve gathering.

    \item $\exists$ a circle $S \subset \mathbb{R}^2$ such that $ r_j(t') \in \partial S, \quad \forall\, r_j \in \mathcal R_f,\; \forall\, t' \geq t,$  i.e., all the robots in $\mathcal R_f$ achieve circle formation.
\end{enumerate}

%%%%%%%%%%%%%%%%%%%%%%%%%%%  
\section{Algorithm \textsc{PatternFormation()}}

  The algorithm proposed in~\cite{Conflict-1} has three phases: the {\it reduction phase}, the {\it multiplicity phase} and the {\it formation phase}. The reduction phase ensures that at least two gathering robots are inside the $\operatorname{SEC}$, $S(0)$. This phase depends on the assumption of axis-only agreements, and this helps to keep $S(0)$ intact. Keeping $S(0)$ intact ensures finite-time termination of the reduction phase with at least two robots inside $S(0)$ from the gathering group $\mathcal R_g$. Furthermore, this also helps to decide and keep the final gathering point (the centre of $S(0)$) invariant during the execution of the whole algorithm. The global weak multiplicity detection for the robots forming a circle helps to design synchronized movements of the robots during the formation phase, which is essential to avoid collisions and the creation of multiple multiplicity points (robots use multiplicity detection capability to identify the gathering point).

\subsection{Challenges}

To solve the gathering problem, the robots must first establish a unique and
persistent gathering point. Our approach achieves this by creating a unique stable multiplicity point, which is subsequently identified by robots in
\(\mathcal{R}_g\) using global weak multiplicity detection. Since the robots are
anonymous and cannot distinguish members of their own team, creating such a
multiplicity point is non-trivial: it must not be formed solely by robots in
\(\mathcal{R}_f\), and if an \(\mathcal{R}_f\) robot participates in its
creation, it must remain there until the multiplicity becomes stable. To this
end, robots in \(\mathcal{R}_f\) are equipped with local weak multiplicity
detection. During the multiplicity-creation phase, the initial smallest
enclosing circle is preserved by fixing a small set of pivotal boundary robots.
The assumption \(|\mathcal{R}_g|\ge6\) guarantees that at least two gathering
robots remain in the interior of the $\operatorname{SEC}$ to create the stable multiplicity
point. Consequently, unlike the algorithm of~~\cite{Conflict-1}, new movement
strategies are required. The main challenges are:
\begin{enumerate}
    \item creating a unique stable multiplicity point while handling symmetric
    configurations without any axis agreement;
    \item ensuring that no additional stable multiplicity point is created
    throughout the execution; and
    \item coordinating the gathering and circle-formation phases so that both
    terminate within a finite time.
\end{enumerate}
%--------------------------
\subsection{Overview of the Algorithm}

Algorithm \textsc{PatternFormation()} consists of three phases:
\emph{multiplicity creation}, \emph{gathering}, and \emph{circle formation}.
These phases are not globally synchronized. In particular, gathering and circle
formation may overlap after a multiplicity point has been created.

If the initial configuration is \3, the algorithm first moves selected
non-pivotal boundary robots into the interior of the current smallest enclosing
circle \(S(t)\). By Lemma~\ref{lemma-sec}, the pivotal robots preserve
\(S(t)\). This reduction either creates a stable multiplicity point directly or produces a \2 configuration.

In a \2 configuration, at most four robot positions lie on
\(\partial S(t)\). Since \(|\mathcal R_g|\geq 6\), at least two gathering robots lie
inside \(S(t)\). These robots move toward a common geometrically defined point,
namely the quasi-regular centre \(c_q\) when the configuration is free-path
quasi-regular, and $\mathcal O(t)$ otherwise. Hence, a unique stable multiplicity point
$p_m$ is eventually created.

Once $p_m$ exists, every gathering robot moves toward $p_m$. At the same
time, circle-forming robots may start moving outward when $|\mathcal R(t)|\leq N_f+1$. The movement rules ensure that gathering remains directed toward $p_m$ and
that circle-forming robots occupy distinct positions on the current $\operatorname{SEC}$. The high-level execution flow of Algorithm~\textsc{PatternFormation()} is illustrated in Figure~\ref{fig:state-transition}.

\begin{figure}[htbp]
\centering

\begin{tikzpicture}[
    box/.style={
        draw,
        rounded corners=2pt,
        align=center,
        minimum width=22mm,
        minimum height=7mm,
        inner sep=2pt,
        font=\small
    },
    arr/.style={
        -{Latex[length=1.7mm]},
        thick
    },
    lab/.style={
        font=\scriptsize,
        fill=white,
        inner sep=1pt
    }
]

\node[box] (nd) {\3};
\node[box, right=14mm of nd] (dense) {\2};
\node[box, right=14mm of dense] (mult) {Stable\\multiplicity};

\node[box, below left=12mm and -2mm of mult]
    (gather) {Gathering\\($\mathcal{R}_g$)};

\node[box, below right=12mm and -2mm of mult]
    (form) {Circle\\formation ($\mathcal{R}_f$)};

\node[box, below=22mm of mult] (final) {\4};

\draw[arr]
    (nd) -- node[lab, above, yshift=1pt] {reduction} (dense);

\draw[arr]
    (dense) -- node[lab, above, yshift=1pt] {creation} (mult);

\draw[arr, bend left=24]
    (nd.north east) to
    node[lab, above, yshift=2pt] {direct creation}
    (mult.north west);

\draw[arr] (mult) -- (gather);
\draw[arr] (mult) -- (form);

\draw[arr] (gather) -- (final);
\draw[arr] (form) -- (final);

\end{tikzpicture}

\caption{High-level execution of Algorithm
\textsc{PatternFormation()}.}
\label{fig:state-transition}

\end{figure}

%--------------------------

% In order to implement the above-outlined ideas, the main challenges are faced due to the anonymity of the robots, the semi-synchrony of the scheduler, and the non-rigid movements of the robots. 
The algorithm applies the following priority order:

\begin{enumerate}
    \item if a multiplicity point exists, execute the gathering or formation
          rule according to the number of distinct robot positions in the system.
    \item otherwise, reduce a \3 configuration into \2 configuration while preserving the $\operatorname{SEC}$;
    \item in a \2 configuration, create a stable multiplicity point.
\end{enumerate}

\subsection{Routine Pivotal-Robot-Position-Selection()}

When the current configuration is not totally symmetric, our algorithm computes
a subset $\mathcal P'(t)\subseteq S_{\mathrm{out}}(t),$
called the \emph{pivotal set}. A subset
$\mathcal P'(t)\subseteq S_{\mathrm{out}}(t)$ is called a \emph{pivotal set} if
$\operatorname{SEC}(\mathcal P'(t))=S(t).$  Robots occupying pivotal positions remain
stationary during the multiplicity-creation phase, while the remaining boundary
robots may move. The pivotal set is computed according to the geometric symmetry of the current configuration.
\begin{itemize}
    \item {\it Asymmetric configurations:}
Let \(p_0\) be the first robot position in $S_{out}(t)$ under the canonical ordering of the configuration (since $\mathcal R(t)$ is asymmetric, we can obtain an ordering of the robot positions in $\mathcal R(t)$ (\cite{chaudhuri2015leader}). Let \(\mathcal L\) be the line through \(p_0\) and $\mathcal O(t)$. If the antipodal point of \(p_0\) is occupied by a robot position in $S_{out}(t)$, then \(\mathcal P'(t)\) consists of these two antipodal positions. Otherwise, \(\mathcal P'(t)\) contains \(p_0\) together with the two robot positions in $S_{out}(t)$ adjacent to its antipodal point.

\item {\it Configurations with one symmetry axis:}
Let $\mathcal L$ be the unique symmetry axis, and let \(a,b\in\partial S(t)\) be its
intersections with \(\partial S(t)\). The pivotal set consists of every
occupied intersection and, for each unoccupied intersection \(x\in\{a,b\}\),
the first occupied boundary positions encountered from \(x\) along the two
circular directions of \(\partial S(t)\), after removing duplicates.(see Figure~\ref{pivotes})

\item {\it Quasi-regular configurations with $c_q\neq \mathcal O(t)$.}
 Let \(D\) denote the set of robot positions in $S_{out}(t)$
having maximum distance from \(c_q\). Note that $1\le D\le2$, as $c_q\neq O(t)$. The canonical line is defined as the line
through $\mathcal O(t)$ and the unique point determined by \(D\): if \(D\) consists of
a single position, that position is used; otherwise, the midpoint of the
positions in \(D\) is used. The pivotal set is then computed exactly as in the
single-symmetry-axis case.
\end{itemize}

\begin{property}
\label{prop:sec}
Let $A$ be a set of points in the Euclidean plane with $|A|\ge 3$, and
let $B\subseteq A$ with $2\le |B|\le 4$. Consider the non-overlapping
division of the circumference of the smallest enclosing circle of $A$
into arcs by the points of $B$ lying on that circumference (or, if
$|B|=2$ and the two points are diametrically opposite, the trivial
division by the diameter). If no arc exceeds a semicircle, then the
smallest enclosing circles of $A$ and $B$ coincide, i.e., $B$ alone
suffices to determine the smallest enclosing circle of $A$.
\end{property}

\begin{figure}[htbp]
    \centering

    \begin{subfigure}[t]{0.3\textwidth}
        \centering
        \includegraphics[width=\textwidth]{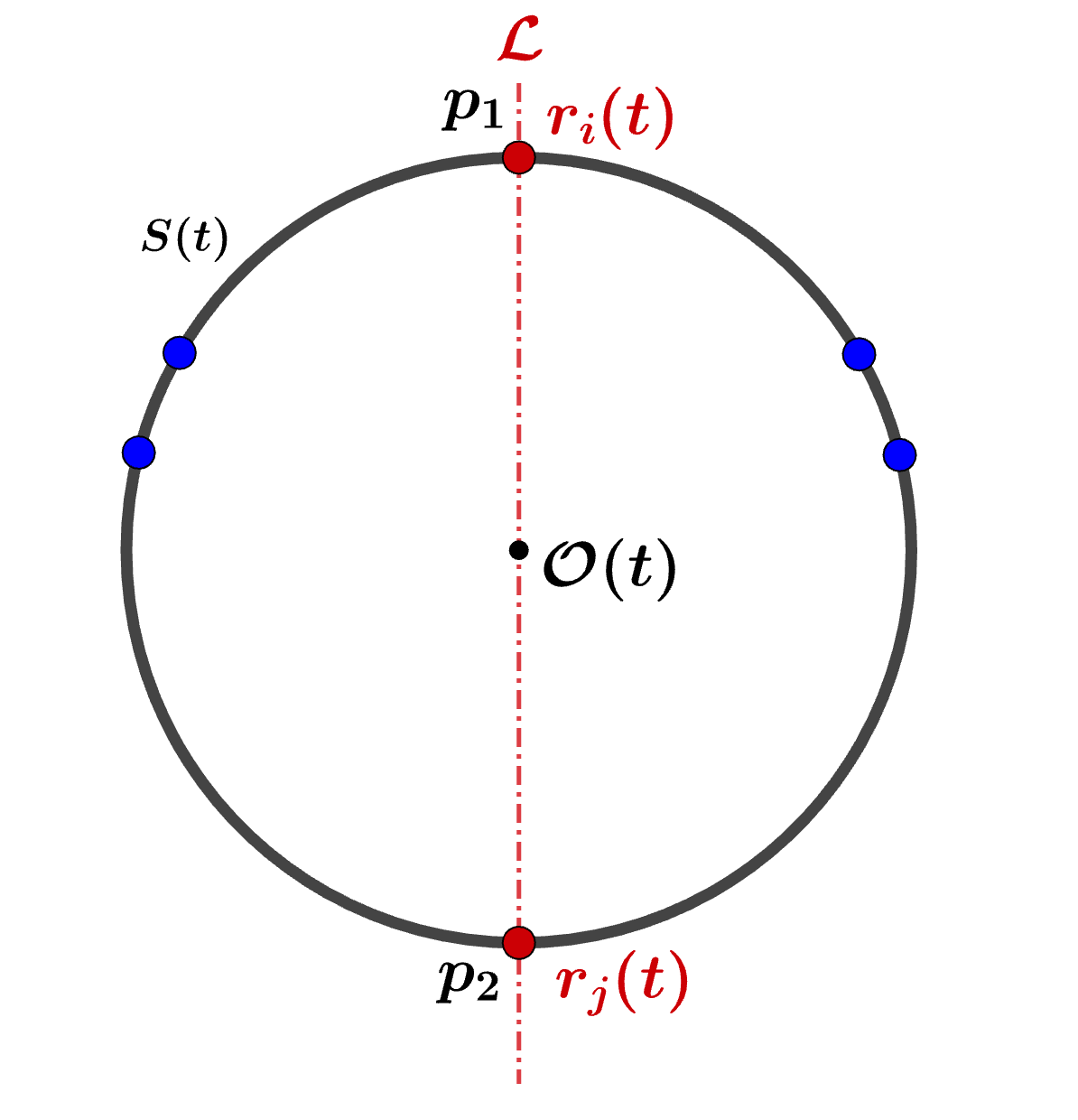}
        \caption{}
    \end{subfigure}
    \hspace{0.03\textwidth}
    \begin{subfigure}[t]{0.31\textwidth}
        \centering
        \hspace*{-0.15\textwidth} % shifts only figure (b) slightly left
        \includegraphics[width=1.2\textwidth]{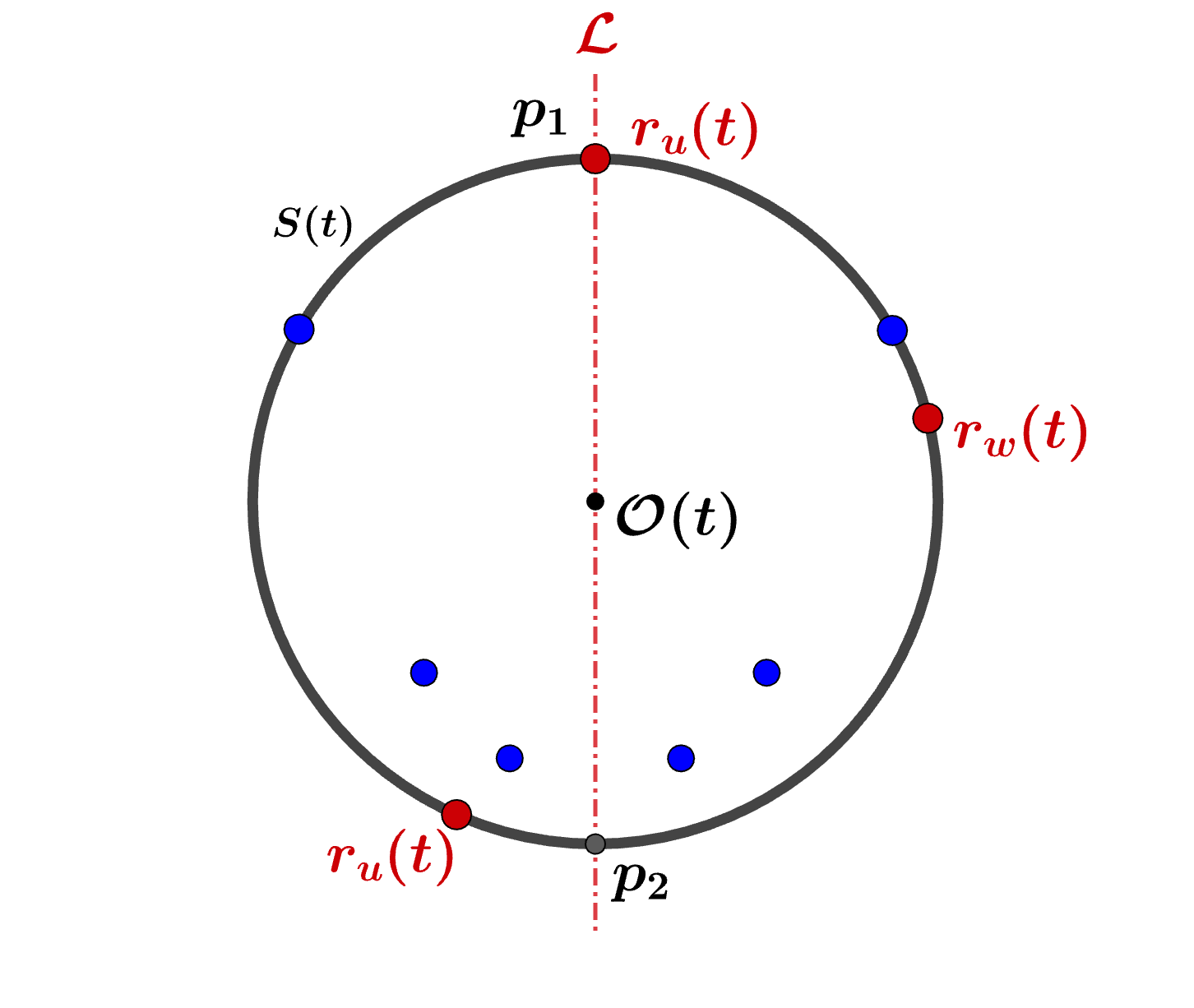}
        \caption{}
    \end{subfigure}
    \hspace{0.03\textwidth}
    \begin{subfigure}[t]{0.3\textwidth}
        \centering
        \includegraphics[width=\textwidth]{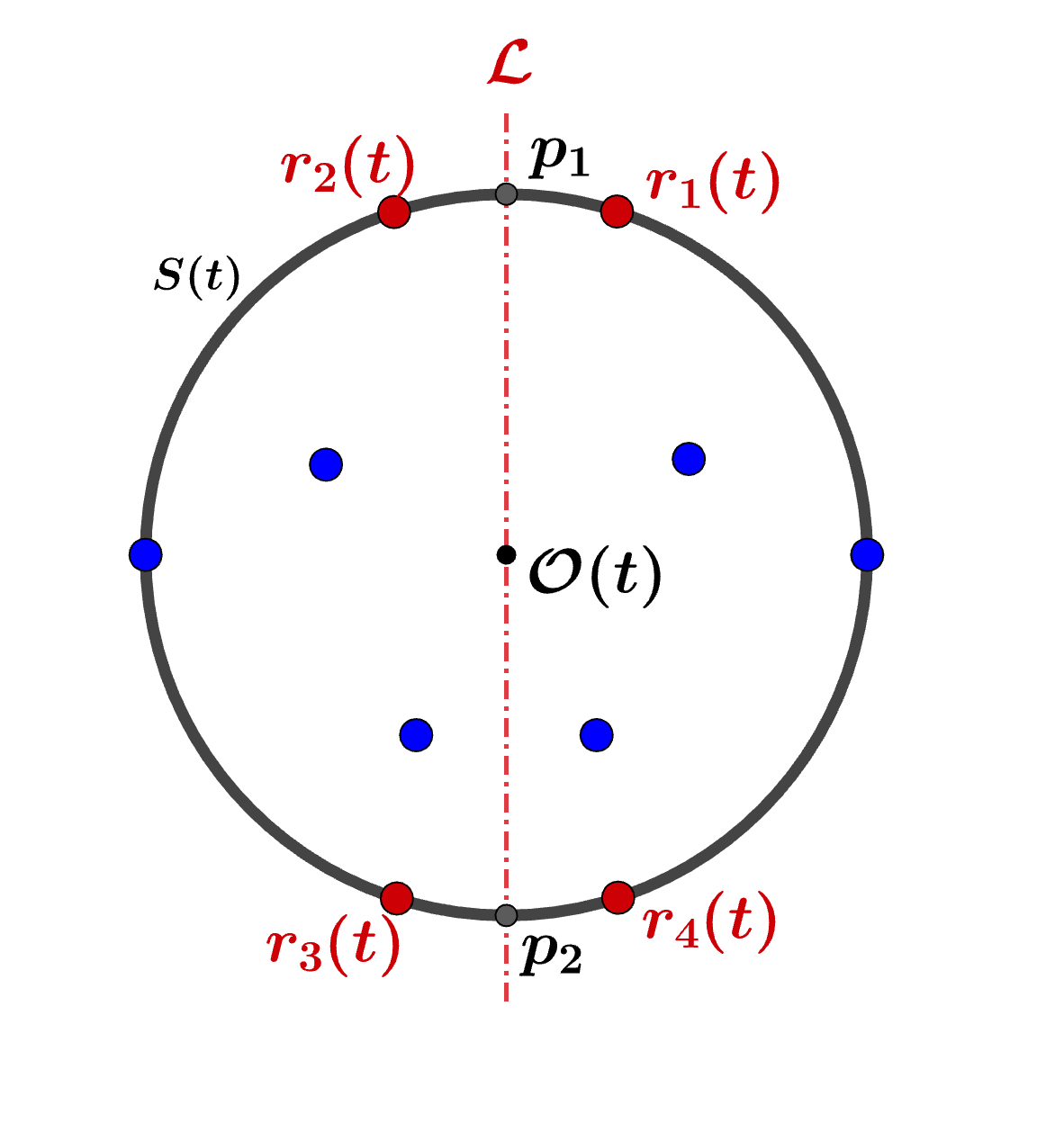}
        \caption{}
    \end{subfigure}

    \caption{Illustrations of different scenarios of pivotal selection:
    (a) both $p_1$ and $p_2$ contain robot positions,
    (b) exactly one of $p_1$ and $p_2$ contains a robot position, and
    (c) none of $p_1$ and $p_2$ contains a robot position.}
    \label{pivotes}
\end{figure}

\begin{lemma}
\label{lemma-sec}
For every configuration in which the pivotal set is defined, the smallest
enclosing circle of $\mathcal P'(t)$ coincides with $S(t)$; that is, $\operatorname{SEC}\bigl(\mathcal P'(t)\bigr)=S(t).$
\end{lemma}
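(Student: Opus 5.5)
We will prove the lemma by reducing it to Property~\ref{prop:sec}, applied with $A=\mathcal R(t)$ and $B=\mathcal P'(t)$. Every point of $\mathcal P'(t)$ lies on $\partial S(t)$, so it is enough to show two things in each case of Routine Pivotal-Robot-Position-Selection(). First, $2\le|\mathcal P'(t)|\le 4$. Second, the arcs into which the points of $\mathcal P'(t)$ cut $\partial S(t)$ all have length at most a semicircle; in the two-point case the two points must instead be antipodal.

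We will rely on one standard fact about $S(t)$: every closed semicircle of $\partial S(t)$ contains a point of $S_{out}(t)$. Equivalently, no open semicircular arc of $\partial S(t)$ contains all of $S_{out}(t)$. Otherwise $S(t)$ could be shrunk, contradicting minimality. In particular, for any $x\in\partial S(t)$, the two open semicircles determined by $x$ and its antipode $\bar x$ each contain a point of $S_{out}(t)$, unless $x$ or $\bar x$ itself is occupied.

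\textbf{Asymmetric case.} If the antipode $\bar p_0$ of $p_0$ is occupied, then $\mathcal P'(t)=\{p_0,\bar p_0\}$ is a diameter, and the lemma is immediate. Otherwise, let $q_1,q_2$ be the occupied boundary points nearest to $\bar p_0$ on either side. By the fact above, each open semicircle bounded by $p_0,\bar p_0$ contains an occupied point, so $q_1$ and $q_2$ lie on opposite sides of the line $\mathcal L$. The arc $q_1q_2$ through $\bar p_0$ contains no occupied point. If it were longer than a semicircle, all of $S_{out}(t)$ would fit in an open semicircle, which is a contradiction. The arcs $p_0q_1$ and $p_0q_2$ each lie within one semicircle bounded by $p_0$ and $\bar p_0$, so they are at most semicircles. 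Hence $B=\{p_0,q_1,q_2\}$ satisfies Property~\ref{prop:sec}.

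\textbf{Symmetry-axis case and quasi-regular case.} Here $a$ and $b$ are antipodal, since the axis passes through $\mathcal O(t)$. The quasi-regular case reduces to this one, because its canonical line also passes through $\mathcal O(t)$. We split into three subcases according to how many of $a,b$ are occupied (Figure~\ref{pivotes}).
\begin{enumerate}[(a)]
\item If both are occupied, $\mathcal P'(t)=\{a,b\}$ is a diameter.
\item If exactly $a$ is occupied, the argument is the asymmetric one with $p_0=a$.
\item If neither is occupied, we take the nearest occupied points $a_1,a_2$ on either side of $a$ and $b_1,b_2$ on either side of $b$. By symmetry, $a_1,a_2$ are reflections of each other, and so are $b_1,b_2$. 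Each gap arc $a_1a_2$ and $b_1b_2$ is free of robots, so it is at most a semicircle by the fact above. Each remaining arc $a_ib_i$ lies inside one half determined by $\mathcal L$, so it is less than a semicircle.
\end{enumerate}
After removing duplicates, $|\mathcal P'(t)|\le 4$, and $|\mathcal P'(t)|\ge 2$ because $|S_{out}(t)|\ge2$.

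\textbf{Main obstacle.} The delicate point is the degenerate situation in which a gap arc (for example $q_1q_2$) is exactly a semicircle, or in which duplicates collapse the pivotal set to two points. In the first situation the two endpoints are antipodal and already determine $S(t)$. For the second, we check that two coinciding choices are either antipodal or arise only when the only occupied points are antipodal. We must also make sure that the property is applied with the arcs measured between consecutive points of $B$, and not only the arcs adjacent to $\bar p_0$.
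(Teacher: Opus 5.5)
Your proof is correct and is essentially the paper's argument. Both reduce the lemma to Property~\ref{prop:sec} through the minimality fact that $S_{out}(t)$ lies in no open semicircle. Your classification of the arcs between consecutive pivots does the real work: robot-free gap arcs are bounded by that fact, and arcs confined to one side of $\mathcal L$ are trivially shorter than a semicircle. This supplies the justification that the paper's Step~4 only asserts, namely that $\mathcal P'(t)$ lying in a semicircle forces all boundary robots into it.

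There are two harmless slips to repair. First, since $p_0$ (resp.\ $a$) is occupied, your ``in particular'' clause does not literally give that both open halves contain robots; however, your gap-arc bound already excludes $q_1=p_0$. Second, in the quasi-regular case the canonical line need not be a symmetry axis, so drop ``by symmetry'' and place $a_i,b_i$ on a common side using your fact with $a,b$ unoccupied.
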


\begin{proof}
By construction, \(\mathcal P'(t)\) contains either (i) two antipodal boundary positions of \(S(t)\), or (ii) at least three occupied boundary positions that are not contained in any open semicircle of \(S(t)\). The property \ref{prop:sec} of the smallest enclosing circle implies that
either condition uniquely determines \(S(t)\). Therefore, $\operatorname{SEC}(\mathcal P'(t)) =
S(t).$ Let $S(t)$ be the smallest enclosing circle ($\operatorname{SEC}$) of the configuration $\mathcal R(t)$ with center $\mathcal O(t)$, and let $\mathcal P'(t) \subseteq S_{out}(t)$ be the set of pivotal robot positions. We prove that $\mathcal P'(t)$ uniquely determines $S(t)$ using Property~\ref{prop:sec}.

\medskip

\noindent
\textbf{Step 1: Characterization of $\operatorname{SEC}$.}
It is well known that the $\operatorname{SEC}$ of a point set is uniquely determined by either:
\begin{itemize}
    \item[(i)] two diametrically opposite boundary points, or
    \item[(ii)] at least three boundary points not contained in any open semicircle.
\end{itemize}

\medskip

\noindent
\textbf{Step 2: Diametrically opposite case.}
If $\mathcal P'(t)$ contains two diametrically opposite points, then these two points uniquely define $S(t)$. Note that this may occur even when $|\mathcal P'(t)| \geq 3$.

\medskip

\noindent
\textbf{Step 3: Non-diametrical case.}
Assume that no two points in $\mathcal P'(t)$ are diametrically opposite. Then $|\mathcal P'(t)| \geq 3$.

Order the points of $\mathcal P'(t)$ along $S(t)$ in clockwise order: $r_{i_1}(t), r_{i_2}(t), \dots, r_{i_k}(t), \ k = |\mathcal P'(t)| \ and \ k=3, 4.$ These points partition the circumference into $k$ arcs. Let $\alpha_j$ denote the arc between $r_{i_j}(t)$ and $r_{i_{j+1}}(t)$ (indices modulo $k$).

\medskip

\noindent
\textbf{Step 4: Arc bound.}
We claim that $\alpha_j \leq \pi R$ for all $j$, i.e., no arc exceeds a semicircle.
Suppose, for contradiction, that there exists an arc $\alpha_m > \pi R$. Then all points of $\mathcal P'(t)$ lie within the complementary semicircle, say $S_c(t)$.
We first show that if all points of $\mathcal P'(t)$ lie within the semicircle,  $S_c(t)$, then all points of $\mathcal R(t)$ must also lie within the semicircle,  $S_c(t)$. Now, by construction, the points in $\mathcal P'(t)$ are obtained using a well-defined line $\mathcal L(t)$ passing through $\mathcal O(t)$. The line $\mathcal L(t)$ divides the circle $S(t)$ into two semicircles, say, $S_1(t)$ and  $S_2(t)$. If $|\mathcal P'(t)|=3$, there is a unique robot position, say $r_i(t)$ on the boundary of $S(t)$ that lies on $\mathcal L(t)$ and the other two robot positions in $\mathcal P'(t)$, say $r_j(t)$ and $r_k(t)$, are farthest robot positions from $r_i(t)$ lying on $S(t)$. When  $|\mathcal P'(t)|=4$, then there are two pairs of points in $\mathcal P'(t)$ such that one pair lies on $S_1$ and the other pair lies on $S_2$. Now consider any one of these pairs, say the pair lying on $S_1(t)$. Then points in this pair lie on the same side of $\mathcal L(t)$, and they are the farthest robot positions on $S_1(t)$. Thus $\mathcal P'(t)$ captures extremal or symmetry-critical positions of $\mathcal R(t)$. Hence, if $\mathcal P'(t)$ lies within a semicircle, then all points of $\mathcal R(t)$ must also lie within that semicircle.
This contradicts a fundamental property of the $\operatorname{SEC}$: no open semicircle of $S(t)$ can contain all points of $\mathcal R(t)$, otherwise a smaller enclosing circle would exist.

\medskip

\noindent
\textbf{Step 5: Conclusion.}
Thus, no arc between consecutive points of $\mathcal P'(t)$ exceeds a semicircle. Therefore, $\mathcal P'(t)$ is not contained in any open semicircle.

By Property~\ref{prop:sec}, $\mathcal P'(t)$ uniquely determines $S(t)$.
\end{proof}

\begin{algorithm}[!htbp]
\caption{: \textsc{MoveToDestination}$(r_i,\tau,d)$}
\label{alg:move-to-destination}
\begin{algorithmic}[1]
\Require Robot $r_i$, movement type
$\tau\in\{\textsc{StepIn},\textsc{StepAside},\textsc{StepOut}\}$,
and reference point $d$ whenever required

\If{$\tau=\textsc{StepIn}$}
    \State \Call{StepIn}{$r_i,d$}
\ElsIf{$\tau=\textsc{StepAside}$}
    \State \Call{StepAside}{$r_i,d$}
\ElsIf{$\tau=\textsc{StepOut}$}
    \State \Call{StepOut}{$r_i$}
\Else
    \State \textbf{remain stationary}
\EndIf
\end{algorithmic}
\end{algorithm}

 %\vspace*{-0.5cm}
\subsection{Routine \textsc{MoveToDestination()}}
%%\vspace*{-0.1cm}

The movements of robots are guided by the routine \textsc{MoveToDestination()}. This routine provides collision-free movements for the robots during the {\it multiplicity creation phase}. This is required to create a unique stable multiplicity point. As discussed above, robots have different types of movements depending on the different phases of the robots. For a robot $r_i\in\mathcal R$, let $H_i(t)$ denotes the set of lines joining two robot positions in $\mathcal R(t)\backslash \{r_i(t)\}$. Algorithm~\ref{alg:move-to-destination} summarizes the routine \textsc{MoveToDestination}(), which invokes the appropriate movement
procedure according to the movement type assigned to robot $r_i$. We describe each of these movements in detail as follows:

 % %\vspace*{-0.5cm}
\textbf{(A)} {\bf Step-in movement:} This movement places a robot, lying on $S(t)$, inside the circle. Suppose robot  $r_i$ wants a {\it step-in movement} w.r.t. the point $p$. Note that $p= \mathcal O(t)$ if $\mathcal R(t)$ is not quasi-regular, otherwise  $p=c_q$. Let us first consider the case when the line segment $\overline{r_i(t)p}$ intersects at least one line from $H_i(t)$ (we exclude lines which coincide with $\overline{r_i(t)p}$ from $H_i(t)$ while computing the intersection points). Let $x_i$ be the nearest of such intersection points to $r_i(t)$ (see Figure~\ref{S-in}(a)). The destination point $d_i$ of $r_i$ is the middle point of the segment $\overline{r_i(t)x_i}$. If none of the lines in $H_i(t)$ intersects $\overline{r_i(t)p}$, then the $d_i$ is the middle point $\overline{r_i(t)p}$ (in this case also, we exclude lines which coincide with $\overline{r_i(t)p}$ from $H_i(t)$ while computing the intersection points)(see Figure~\ref{S-in}(b)). Robot $r_i$ moves towards $d_i$ along the line segment $\overline{r_i(t)d_i}$. Algorithm~\ref{alg:step-in} summarizes the \textsc{StepIn} procedure.

\begin{figure}[htbp]
    \centering

    \begin{subfigure}[t]{0.40\textwidth}
        \centering
        \includegraphics[width=\textwidth]{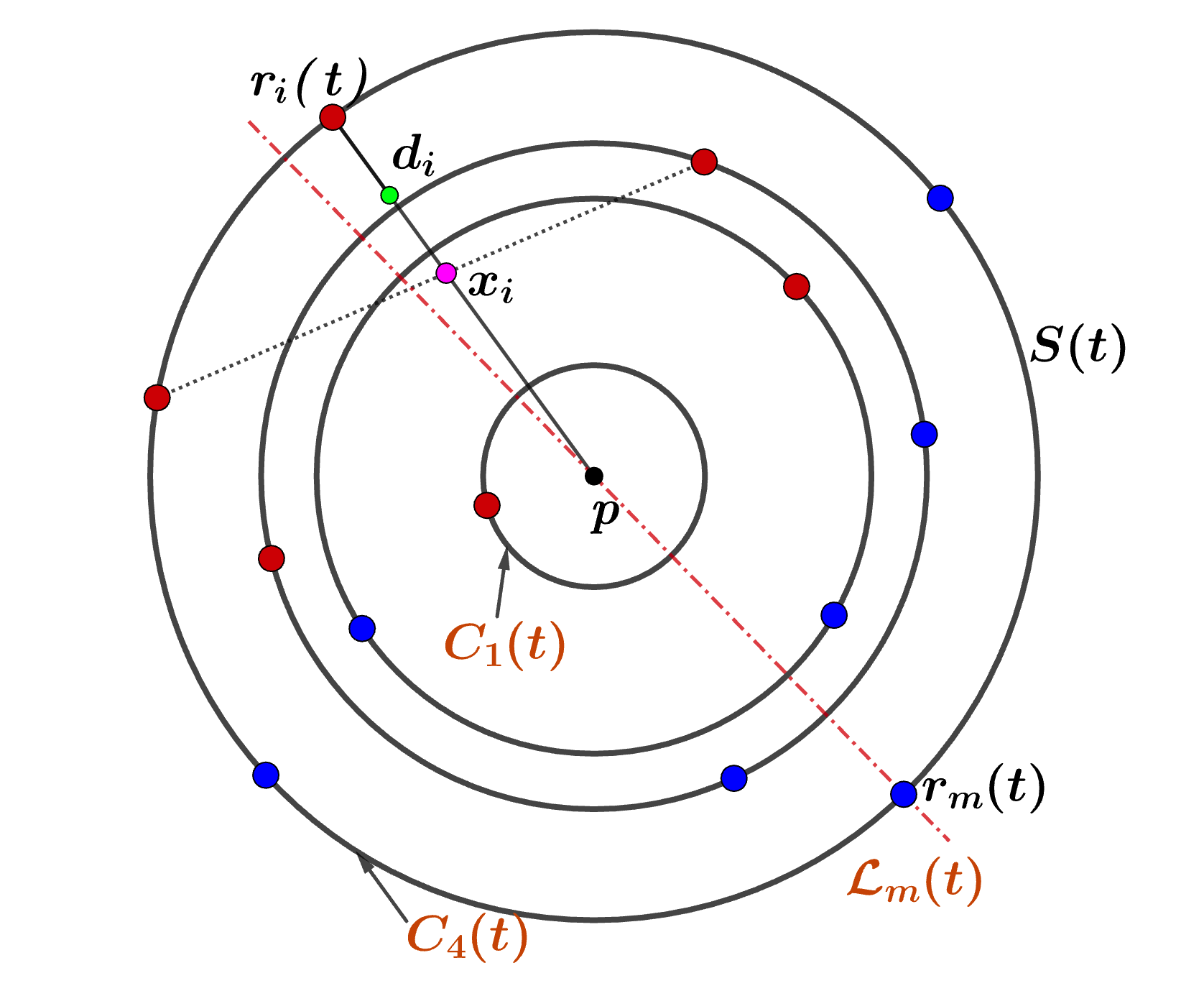}
        \caption{}
        
    \end{subfigure}
   \hspace{0.04\textwidth}%
    \begin{subfigure}[t]{0.40\textwidth}
        \centering
        \includegraphics[width=1.1\textwidth]{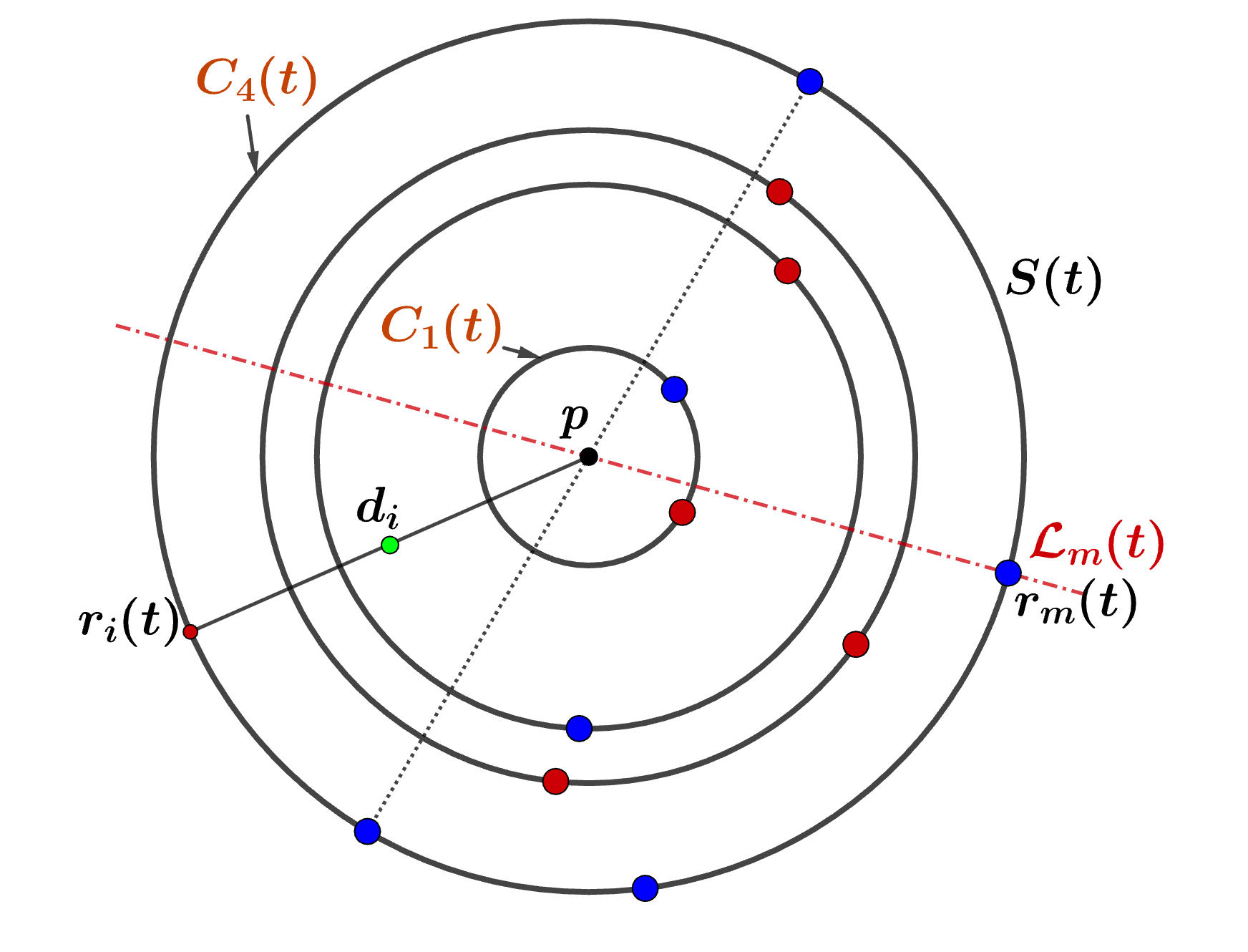}
        \caption{}
        
    \end{subfigure}

\caption{Step-in movement for computing $d_i$: (a) The line segment $\overline{r_i(t)p}$ intersects the dotted line in $H_i(t)$, (b) None of the line in $H_i(t)$ intersect $\overline{r_i(t)p}$.}
\label{S-in}
\end{figure}

\begin{algorithm}[t]
\caption{: \textsc{StepIn}$(r_i,p)$}
\label{alg:step-in}
\begin{algorithmic}[1]
\Require Robot $r_i$ and reference point
$p\in\{\mathcal O(t),c_q\}$

\State Compute $H_i(t)$, the set of lines determined by pairs of
positions in $\mathcal R(t)\setminus\{r_i(t)\}$

\State Remove from $H_i(t)$ every line coincident with
$\overline{r_i(t)p}$

\State $X_i\gets
\left\{
x\in\overline{r_i(t)p}:
x\in h
\text{ for some }h\in H_i(t)
\right\}$

\If{$X_i\neq\emptyset$}
    \State Let $x_i\in X_i$ be nearest to $r_i(t)$
    \State $d_i\gets$ midpoint of $\overline{r_i(t)x_i}$
\Else
    \State $d_i\gets$ midpoint of $\overline{r_i(t)p}$
\EndIf

\State \textbf{move toward} $d_i$ along
$\overline{r_i(t)d_i}$
\end{algorithmic}
\end{algorithm}

\textbf{(B)} {\bf Step-aside movement:} We describe the {\it step-aside movement} of a robot $r_i$ w.r.t. a point $d$. Note that robot $r_i$ have a {\it step-aside movement} in the following two cases: (i) the open line segment $(r_i(t), d_i)$ contains at least one robot position, where $d_i$ is the destination point of $r_i$ and (ii) when $\mathcal R(t)$ is not free-path quasi-regular and $c_q = \mathcal O(t)$. Thus, we have $d\in\{p_m,\mathcal O(t)\}$. Since robots in $\mathcal R$ can not detect the overlapping of {\it the gathering phase} and {\it the formation phase}, we have to be careful about designing {\it step-aside movements} of the robots. Until the {\it formation phase} starts, if two robots from two groups have {\it step-aside movements} in the same round, then these movements are w.r.t. to the same point (we assure this by properly designing the movements of the robots during the {\it multiplicity creation phase} and {\it formation phase}). However, when there is an overlap of {\it the gathering phase} and {\it the formation phase}, then two robots from two groups have two types of movements:  {\it step-aside movements} and {\it step-out movements} w.r.t. different points. Since robots can not distinguish this overlap, we carefully design these movements.

\begin{figure}[htbp]
    \centering

    \begin{subfigure}[t]{0.40\textwidth}
        \centering
        \includegraphics[width=\textwidth]{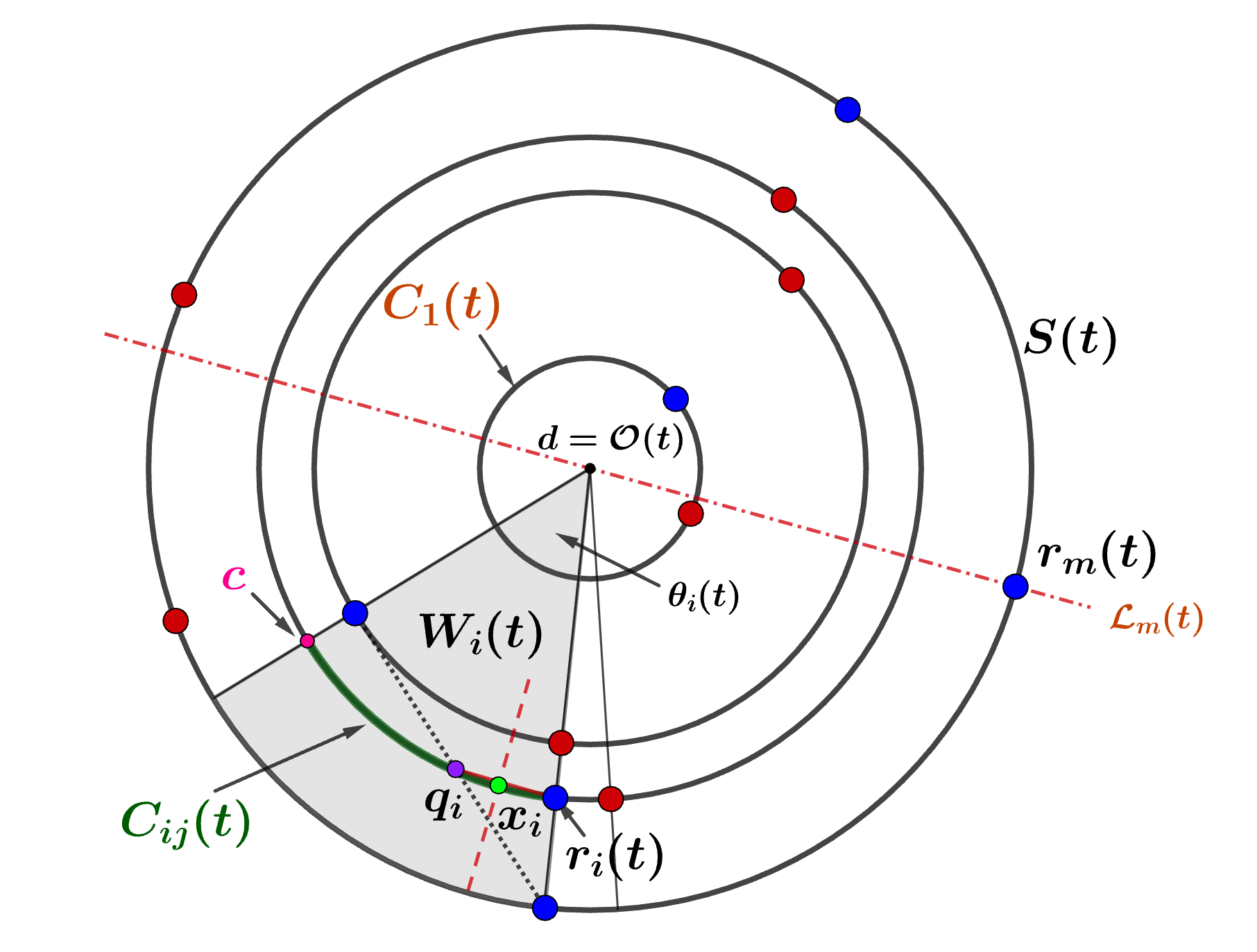}
        \caption{}
        
    \end{subfigure}
   \hspace{0.04\textwidth}%
    \begin{subfigure}[t]{0.40\textwidth}
        \centering
        \includegraphics[width=\textwidth]{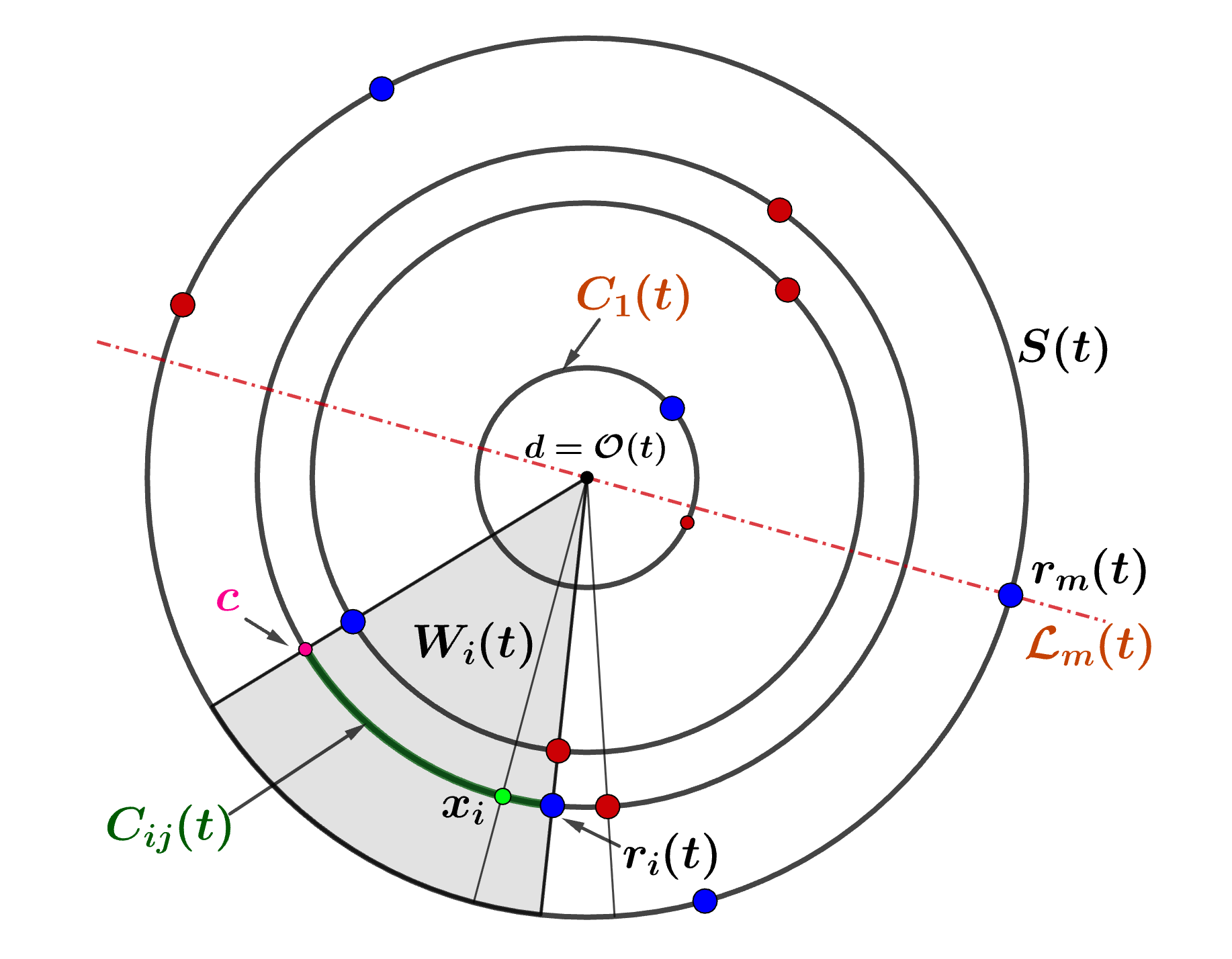}
        \caption{}
        
    \end{subfigure}

   \caption{Step-aside movement for computing $x_i$ when $p_m=\mathcal{O}(t)$: (a) Dotted lines in $H_i(t)$ intersect the arc $C_{ij}(t)$ of circle $C_{l}(t)$ (with $l=3$), (b) None of the lines in $H_i(t)$ intersect $C_{ij}(t)$.}
   
\label{S-aside-1}
\end{figure}

 First, consider the case when $p_m=\mathcal O(t)$. This implies that $d=\mathcal O(t)$ for all robots in $\mathcal R$. The robots in $\mathcal R_g$ need to check $p_m=\mathcal O(t)$. According to our algorithm, if a robot in $\mathcal R_f$ has a {\it step-aside movement}, then its movement is w.r.t. $\mathcal O(t)$. Thus, the robots in $\mathcal R_f$ have {\it step-aside movements} as described in this case (the robots in $\mathcal R_g$ have different movement strategies when $p_m\neq\mathcal O(t)$ to guarantee finite time reachability of the robots in $\mathcal R_g$ to $p_m$). Let robot $r_i$ lie on the circle $C_l(t)$. The destination point $x_i$ of $r_i$ lies on $C_{l}$ and it is computed as follows (see Figure~\ref{S-aside-1}): let $B_i(t)$ be the set of robot positions in $\mathcal R(t)$ which do not lie on $\overline{r_i(t)\mathcal O(t) }$. Let $\overline{r_j(t)\mathcal O(t)}$ and $\overline{r_k(t)\mathcal O(t)}$ be the clockwise and counterclockwise neighbors of $\overline{r_i(t)\mathcal O(t)}$. Consider the angle $\theta_i(t)= max \{\angle{r_i(t)dr_j(t)},\angle{r_i(t)dr_k(t)}\}$ (tie, if any, broken arbitrarily). Without loss of generality suppose, $\theta_i(t)=\angle{r_i(t)dr_j(t)}$.  Let $c$ be  the intersection point between $C_l(t)$ and $rad_j(t)$. Let $W_i(t)$ be the wedge defined by the angle $\theta_i(t)$ and $C_{ij}(t)$ be the arc of $C_l(t)$ which lies in the wedge $W_i(t)$. If at least one line in $H_i(t)$ intersects $C_l(t)$, then let $q_i$ be the nearest to $r_i(t)$ among all such intersection points. In this case,  $x_i$ is the intersection point between $C_{ij}(t)$ and the bisector of the segment $\overline{r_i(t)q_i}$ (see Figure~\ref{S-aside-1}(a)). If none of the lines in $H_i(t)$ intersects $C_{ij}(t)$ (see Figure~\ref{S-aside-1}(b)), then we define $x_i$ to be a point lying on $C_{ij}(t)$ such that $\angle{r_i(t)dx_i}=\frac{1}{3n_i}\theta_i(t)$, where $n_i$ is the number of distinct robot positions on the line segment $\overline{r_i(t)d}$.

\begin{figure}[htbp]
    \centering

    \begin{subfigure}[t]{0.40\textwidth}
        \centering
        \includegraphics[width=\textwidth]{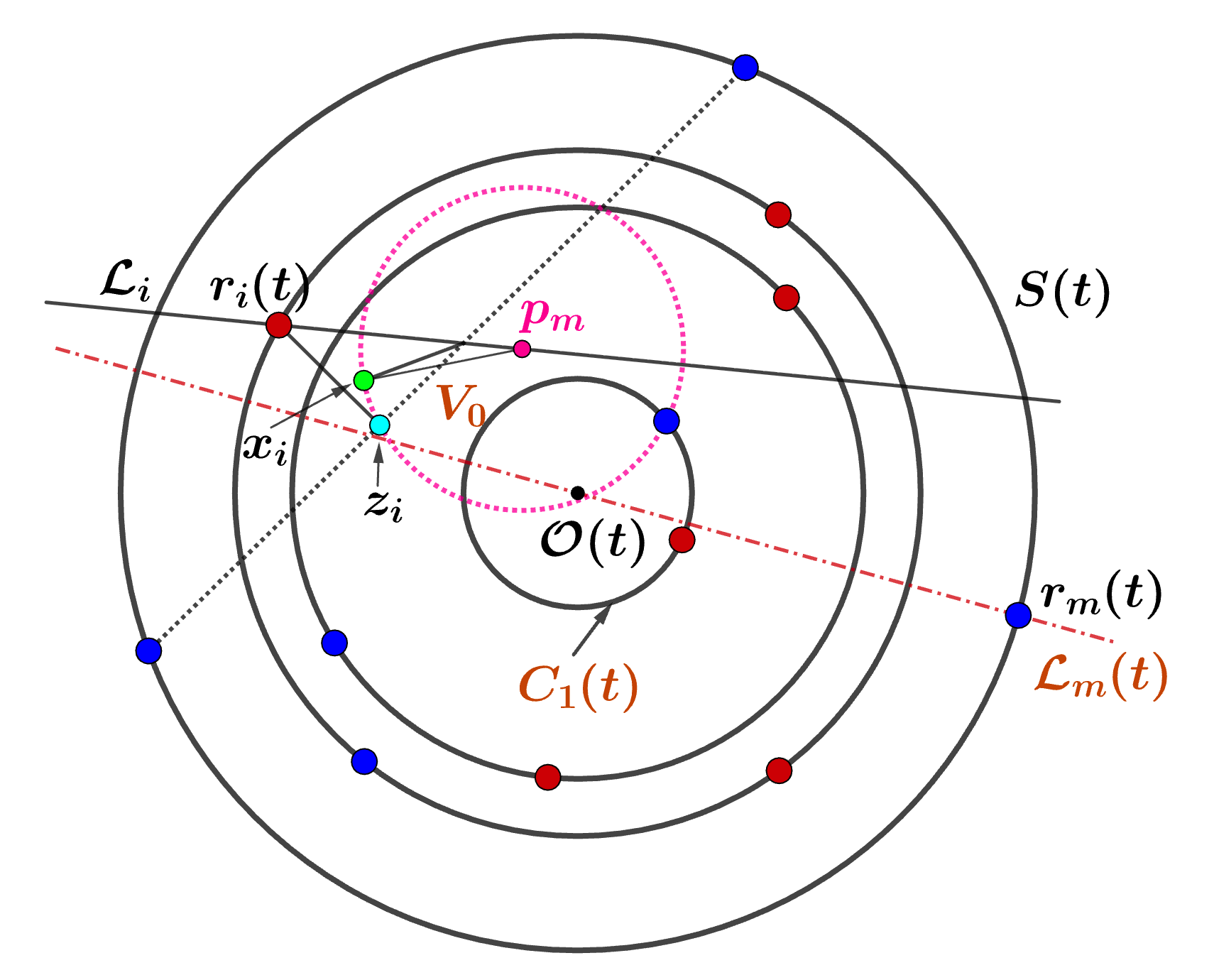}
        \caption{}
        
    \end{subfigure}
   \hspace{0.04\textwidth}%
    \begin{subfigure}[t]{0.40\textwidth}
        \centering
        \includegraphics[width=1.05\textwidth]{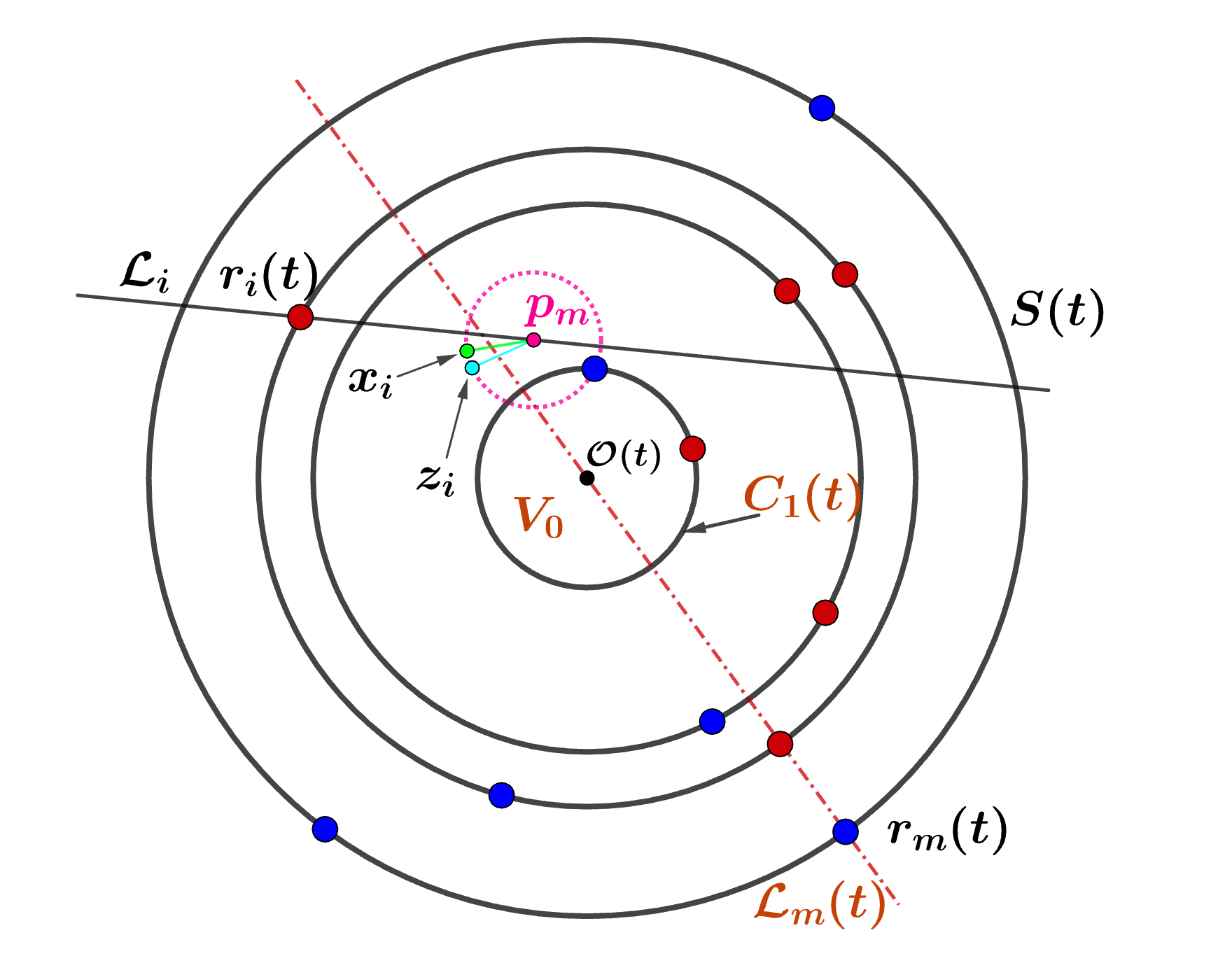}
        \caption{}
        
    \end{subfigure}

   \caption{Step-aside movement for computing $x_i$ when $p_m\neq\mathcal{O}(t)$: (a) Dotted lines in $H_i(t)$ intersect the circle $C^{*}_{k}$, (b) None of the lines in $H_i(t)$ intersect $C^{*}_{k}$.}

\label{S-aside-2}
\end{figure}

% %\vspace*{-0.5cm}

 Note that since we are computing the {\it step-aside movements} w.r.t. $\mathcal O(t)$, we have $\theta_i(t)>0$. Now consider the case when $p_m\neq\mathcal O(t)$ (see Figure~\ref{S-aside-2}). According to the strategy described here, only the robots in $\mathcal R_g$ have {\it step-aside movements}. Let $\mathcal L_i$ be the line passing through $r_i(t)$ and $p_m$. Line $\mathcal L_i$ divides the whole plane into two open halves. We define a half plane $V_o$ delimited by $\mathcal L_i$ as follows: $V_o$ be the half plane delimited by $\mathcal L_i$ which contains $\mathcal O(t)$ if $\mathcal L_i$ does not pass through $\mathcal O(t)$, otherwise $V_o$ is chosen as any one of the two half planes delimited by $\mathcal L_i$.
 Let $D^*(t)=
\left\{
\|p-p_m\| :
p\in\mathcal R(t),\;
p\neq p_m
\right\},$ and let $0<\rho^*_1(t)<\rho^*_2(t)<\cdots<\rho^*_q(t)$ be the distinct values in $D^*(t)$. For each
$k\in\{1,\ldots,q\}$, let $C_k^*(t)$ denote the circle centered at
$p_m$ with radius $\rho_k^*(t)$.
The destination point $x_i$ of $r_i$ lies on $C^*_{k}$ and it is computed as follows: we define a point $z_i$ on $C^*_{k}$ as follows: if at least one line in $H_i(t)$ (excluding the lines coincide with $\mathcal L_i$) intersects $C^*_{k}$. This intersection point lies in $V_o$, then $z_i$ is the nearest of such intersection points to $r_i(t)$ (see Figure~\ref{S-aside-2}(a)). Otherwise $z_i$ is a point on $C^*_{k}\cap V_o$ such that $\angle{z_ip_mr_i(t)}=15^\circ$ (other suitable point may also work)(see Figure~\ref{S-aside-2}(b)). Then $x_i$ is the intersection point between the $C^*_k$ and the bisector of the angle $\angle{r_i(t)p_mz_i}$. Algorithm~\ref{alg:step-aside} summarizes the \textsc{StepAside} procedure.

\begin{figure}[htbp]
    \centering

    \begin{subfigure}[b]{0.30\textwidth}
        \centering
        \makebox[\linewidth][c]{%
            \includegraphics[width=1.2\textwidth]{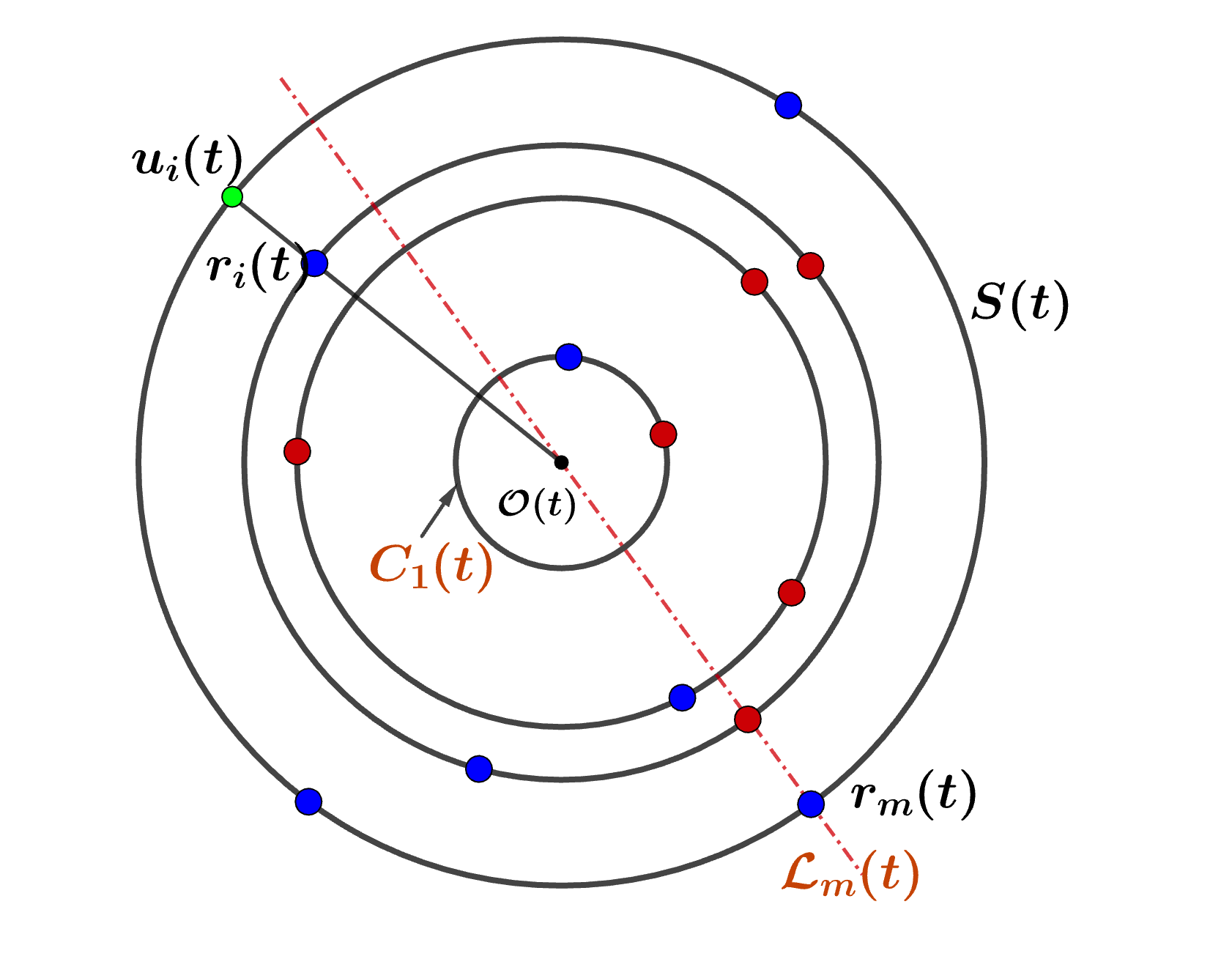}
        }
        \caption{}
    \end{subfigure}
    \hfill
    \begin{subfigure}[b]{0.30\textwidth}
        \centering
        \makebox[\linewidth][c]{%
            \includegraphics[width=1.1\textwidth]{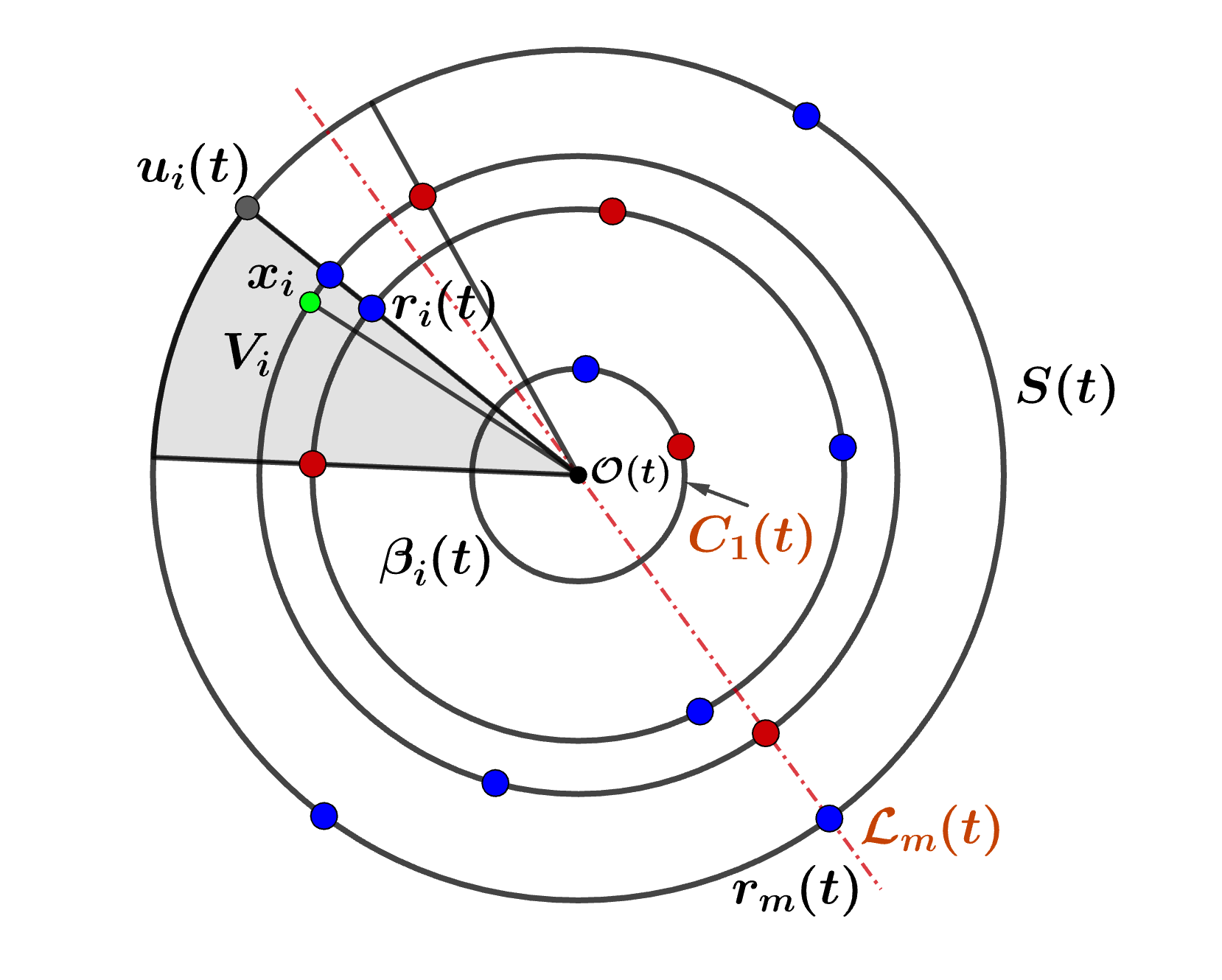}
        }
        \caption{}
    \end{subfigure}
    \hfill
    \begin{subfigure}[b]{0.30\textwidth}
        \centering
        \makebox[\linewidth][c]{%
            \includegraphics[width=\textwidth]{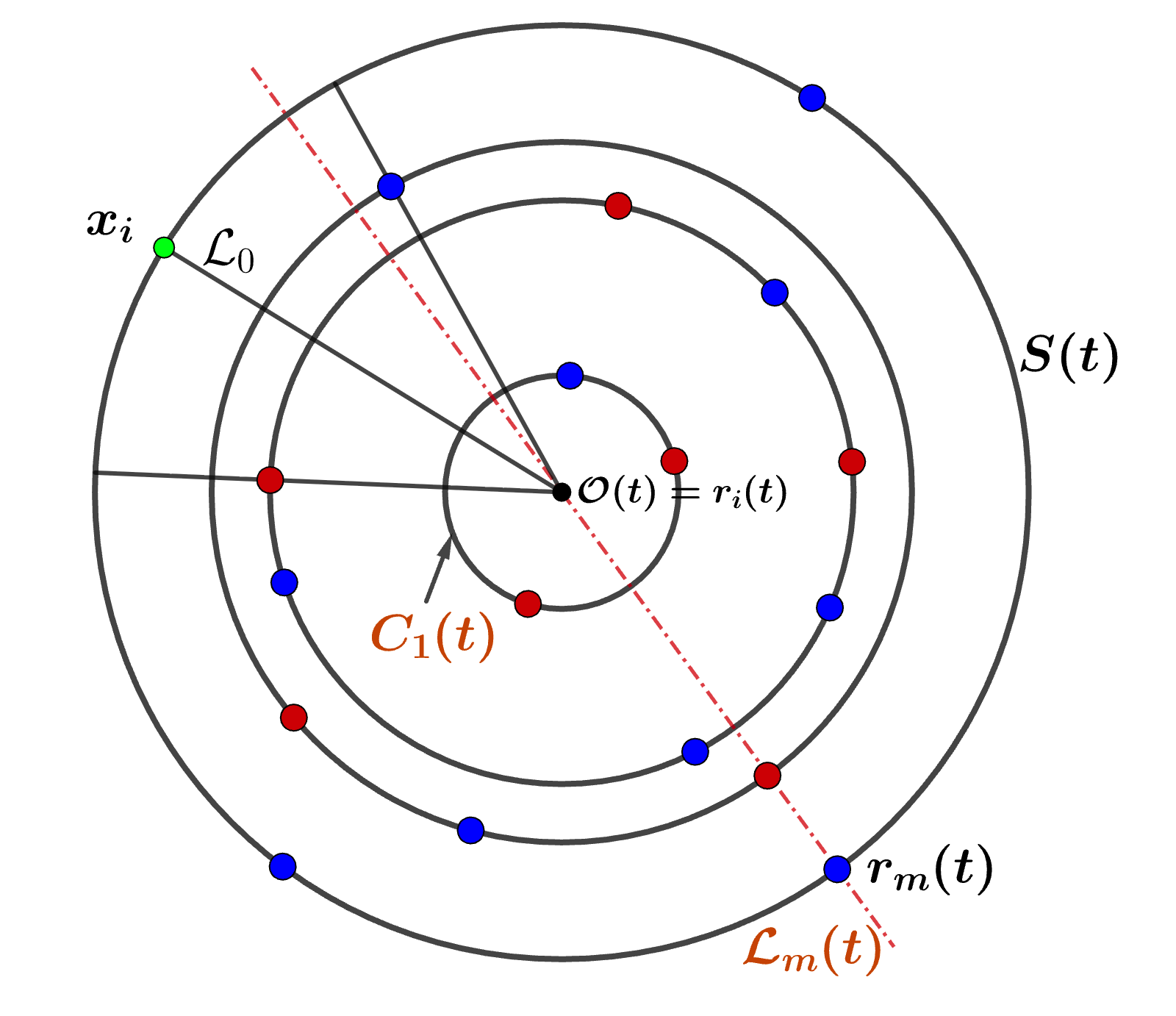}
        }
        \caption{}
    \end{subfigure}

    \caption{Step-out movement for computing $u_i(t)$ and $x_i$:
    (a) The line segment $\overline{r_i(t)u_i(t)}$ contains no other robot
    when $r_i(t)\neq\mathcal O(t)$,
    (b) the segment $\overline{r_i(t)u_i(t)}$ contains multiple robots
    with $r_i(t)\neq\mathcal O(t)$, and
    (c) the case when $r_i(t)=\mathcal O(t)$.}
    \label{S-out}
\end{figure}

\begin{algorithm}[!htbp]
\caption{: \textsc{StepAside}$(r_i,d)$}
\label{alg:step-aside}
\begin{algorithmic}[1]
\Require Robot $r_i$ and reference point
$d\in\{\mathcal O(t),p_m\}$

\If{$p_m=\mathcal O(t)$}
    \State $d\gets\mathcal O(t)$
    \State Let $C_l(t)$ be the radial circle containing $r_i(t)$

    \State Determine the clockwise and counterclockwise neighboring
    radial lines of $rad_i(t)$, containing positions
    $r_j(t)$ and $r_k(t)$, respectively

    \State $    \theta_i(t)\gets
    \max\{
    \angle r_i(t)d r_j(t),
    \angle r_i(t)d r_k(t)
    \}$

    \State Let $r_j(t)$ denote a neighbor attaining the maximum
    \State Let $W_i(t)$ be the corresponding wedge
    \State Let $C_{ij}(t)$ be the arc of $C_l(t)$ contained in $W_i(t)$

    \State Compute all intersections of the lines in $H_i(t)$
    with $C_{ij}(t)$

    \If{at least one such intersection exists}
        \State Let $q_i$ be the intersection nearest to $r_i(t)$
        \State Let $x_i$ be the intersection of $C_{ij}(t)$
        with the perpendicular bisector of
        $\overline{r_i(t)q_i}$
    \Else
        \State Let $n_i$ be the number of distinct occupied positions
        on $\overline{r_i(t)d}$
        \State Choose $x_i\in C_{ij}(t)$ such that $\angle r_i(t)d x_i
        =
        \frac{\theta_i(t)}{3n_i}$
    \EndIf

\Else
    \Comment{$p_m\neq\mathcal O(t)$}

    \State $d\gets p_m$
    \State Let $\mathcal L_i$ be the line through $r_i(t)$ and $p_m$

    \If{$\mathcal O(t)\notin\mathcal L_i$}
        \State Let $V_o$ be the open half-plane bounded by
        $\mathcal L_i$ that contains $\mathcal O(t)$
    \Else
        \State Choose one admissible open half-plane $V_o$
        bounded by $\mathcal L_i$
    \EndIf

    \State Let $C_k^*(t)$ be the radial circle centered at $p_m$
    containing $r_i(t)$

    \State $  Z_i\gets
    \left\{
    z\in C_k^*(t)\cap V_o :
    z\in h,\;
    h\in H_i(t),\;
    h\neq\mathcal L_i
    \right\}$

    \If{$Z_i\neq\emptyset$}
        \State Let $z_i\in Z_i$ be nearest to $r_i(t)$
    \Else
        \State Choose $z_i\in C_k^*(t)\cap V_o$ such that $\angle z_i p_m r_i(t)=15^\circ$
    \EndIf

    \State Let $x_i$ be the intersection of $C_k^*(t)$ with
    the internal angle bisector of
    $\angle r_i(t)p_mz_i$
\EndIf

\State \textbf{move toward} $x_i$
\end{algorithmic}
\end{algorithm}

\textbf{(C)} {\bf Step-out movement:} This movement is taken by robots in $\mathcal R_f$ during the {\it formation phase}. Consider a robot $r_i\in\mathcal R_f$ having position in $S_{in}(t)$. {\it Step-out movements} place robot $r_i$ at a point on the circle $\mathcal S(t)$. First consider the case when $r_i(t)\neq\mathcal O(t)$ (see Figure~\ref{S-out}(A-B)). Let $rad_i(t)$ intersect $S(t)$ at $u_i(t)$ and $r_i(t)$ lies on $C_k(t)$ for some $k$. If the line segment $\overline{r_i(t)u_i(t)}$ does not contain any other robot position, then robot $r_i$ moves towards $u_i(t)$ along the line segment $\overline{r_i(t)u_i(t)}$ (see Figure~\ref{S-out}(a)). Otherwise, robot $r_i$ computes its destination point $x_i(t)$ on $C_{k+1}(t)$ as follows (see Figure~\ref{S-out}(b)). Let $A_i(t)$ be the set of robot positions not lying on $rad_i(t)$. Let $r_j(t), r_k(t)\in A_i(t)$ be such that $rad_j(t)$ and
$rad_k(t)$ are the two clockwise and counterclockwise neighbors of $rad_i(t)$. Let $\beta_i(t) = max\{\angle{r_i(t)\mathcal O(t)r_j(t)},\angle{r_i(t)\mathcal O(t)r_k(t)}\}$ (tie, if any, broken arbitrarily). Without loss of generality, suppose $\beta_i(t)=\angle{r_i(t)\mathcal O(t)r_j(t)}$. Since $S(t)$ contains at least two robot positions on its boundary, $\beta_i(t)\neq 0$. Let $V_i$ be the wedge defined by the angle $\beta_i(t)$. We define $x_i$ to be a point lying on $V_i(t)\cap C_{k+1}(t)$ such that $\angle{r_i(t)\mathcal O(t)x_i(t)}= \frac{1}{3m_i}\beta_i(t)$, $m_i$ is the number of distinct robot positions on the line segment $r_i(t)u_i(t)$.

 Now, consider the case when $r_i(t) = \mathcal O(t)$ (see Figure~\ref{S-out}(c)). Let $r_j(t)$ and $r_l(t)$ be two robot positions not lying at $\mathcal O(t)$ such that $rad_j(t)$ and $rad_l(t)$ are neighbors and  $\angle{r_j(t)\mathcal O(t)r_l(t)}$ is maximum for all such consecutive radial lines (tie, if any, broken arbitrarily). Let $\mathcal L_0$ be the bisector of $\angle{r_j(t)\mathcal O(t)r_l(t)}$. Then $x_i(t)$ is the intersection point between $S(t)$ at $\mathcal L_0$. Algorithm~\ref{alg:step-out} summarizes the \textsc{StepOut} procedure, which determines an outward destination for a robot in
$\mathcal R_f$ during the formation phase. Table~\ref{tab:movement-summary} summarizes the movement rules followed by the robots in $\mathcal R_g$ and $\mathcal R_f$ for the different configuration classes during the multiplicity-creation phase.

 \begin{algorithm}[!htbp]
\caption{: \textsc{StepOut}$(r_i)$}
\label{alg:step-out}
\begin{algorithmic}[1]
\Require Robot $r_i\in\mathcal R_f$

\If{$r_i(t)\neq\mathcal O(t)$}

    \State Let $u_i(t)=rad_i(t)\cap S(t)$

    \If{$(r_i(t),u_i(t))$ contains no occupied position}
        \State \textbf{move toward} $u_i(t)$ along
        $\overline{r_i(t)u_i(t)}$
        \State \Return
    \EndIf

    \State Let $C_k(t)$ be the radial circle containing $r_i(t)$

    \State Determine the clockwise and counterclockwise neighboring
    radial lines of $rad_i(t)$, containing positions
    $r_j(t)$ and $r_\ell(t)$, respectively

    \State $\beta_i(t)\gets
    \max\{
    \angle r_i(t)\mathcal O(t)r_j(t),
    \angle r_i(t)\mathcal O(t)r_\ell(t)
    \}$
    \State Let $V_i(t)$ be the wedge corresponding to a neighbor
    attaining $\beta_i(t)$

    \State Let $m_i$ be the number of distinct occupied positions on
    $\overline{r_i(t)u_i(t)}$

    \State Choose $x_i(t)\in V_i(t)\cap C_{k+1}(t)$ such that $\angle r_i(t)\mathcal O(t)x_i(t)
    =
    \frac{\beta_i(t)}{3m_i}$
    \State \textbf{move toward} $x_i(t)$

\Else
    \Comment{$r_i(t)=\mathcal O(t)$}

    \State Find two consecutive radial lines
    $rad_j(t)$ and $rad_\ell(t)$ whose enclosed angle is maximum

    \State Let $\mathcal L_0^{+}$ be the ray from $\mathcal O(t)$
    bisecting this maximum angular sector

    \State $x_i(t)\gets
    \mathcal L_0^{+}\cap S(t)$

    \State \textbf{move toward} $x_i(t)$
\EndIf
\end{algorithmic}
\end{algorithm}

\begin{table}[htbp]
\centering
\small
\begin{threeparttable}

\caption{Summary of movement rules during the multiplicity-creation phase.}
\label{tab:movement-summary}

\renewcommand{\arraystretch}{1.15}

\begin{tabularx}{\textwidth}{
@{}
>{\raggedright\arraybackslash}p{2.9cm}
>{\raggedright\arraybackslash}p{2.9cm}
>{\raggedright\arraybackslash}X
>{\raggedright\arraybackslash}X
@{}
}
\toprule
Configuration class
& Sub-case
& $\mathcal R_g$ robots
& $\mathcal R_f$ robots \\
\midrule

\2
& free-path Q-regular
& direct $\to c_q$
& stationary \\

\2
& Q-regular, not free-path
& direct/step-aside $\to \mathcal O(t)$
& stationary \\

\2
& not quasi-regular\tnote{a}
& direct/step-aside $\to \mathcal O(t)$
& stationary \\

\3
& free-path Q-regular
& direct $\to c_q$
& stationary \\

\3
& Q-regular, $c_q\neq\mathcal O(t)$
& stationary (non-pivotal step-in)
& stationary (non-pivotal step-in) \\

\3
& Q-regular, $c_q=\mathcal O(t)$
& step-aside w.r.t.\ $\mathcal O(t)$ if blocked
& step-aside w.r.t.\ $\mathcal O(t)$ if blocked \\

\3
& not quasi-regular
& step-in w.r.t.\ $\mathcal O(t)$ (non-pivotal)
& step-in w.r.t.\ $\mathcal O(t)$ (non-pivotal) \\

\bottomrule
\end{tabularx}

\begin{tablenotes}
\footnotesize
\item[a]
Does not occur for linear configurations when $n\geq 8$;
see Observation~\ref{obs:linear}.
\end{tablenotes}

\end{threeparttable}
\end{table}

\subsection {Description of Algorithm \textsc{PatternFormation()}}
 In this section, we describe our proposed algorithm. The execution steps for the robots are described separately for each of the three phases. Algorithm~\ref{alg:pattern-formation} summarizes the overall control flow of \textsc{PatternFormation}(), directing each robot to the appropriate phase according to the current configuration.
 \begin{enumerate}[(A)]
 
 \item {\bf The multiplicity creation phase:} During this phase, our algorithm exploits the number of robot positions in $S_{in}(t)$ and the symmetry of $\mathcal R(t)$. Algorithm~\ref{alg:multiplicity-creation} summarizes the \textsc{MultiplicityCreation} procedure, which determines the movement of a robot according to the configuration class and its quasi-regularity properties. We have the following observation.

\begin{observation}[Linear configurations are quasi-regular but not free-path quasi-regular]
\label{obs:linear}
Let $\mathcal R(t)\in\mathsf{Dense}$ be a linear configuration, i.e., all robot
positions lie on a common line $\mathcal L(t)$, and let
$n=|\mathcal R(t)|\ge 8$. Then $\mathcal R(t)$ is quasi-regular with a unique
centre of quasi-regularity $c_q$, and $\mathcal R(t)$ is \emph{not}
free-path quasi-regular.
\end{observation}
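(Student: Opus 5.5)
The plan is to characterise every point $c$ that can serve as a centre of quasi-regularity for a linear configuration, using only the shape of the string of angles. Write $\mathcal R(t)=\{p_1,\dots,p_n\}$, ordered along $\mathcal L(t)$. Since the configuration is \2 with no multiplicity points, all positions are distinct, and $S_{out}(t)=\{p_1,p_n\}$ is a diameter of $S(t)$.

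\textbf{Step 1: $c$ must lie on $\mathcal L(t)$.} Suppose $c\notin\mathcal L(t)$. Then every robot lies in the open half-plane bounded by the line through $c$ parallel to $\mathcal L(t)$. Hence all rays from $c$ to robot positions lie in an angular sector of width $<\pi$. The string $SA(\mathcal B,c)$, for any admissible $\mathcal B$, therefore contains exactly one entry $>\pi$. Here $\mathcal B=\mathcal R(t)$, since no robot sits at $c$. A string $X^k$ with $k>1$ repeats every entry at least $k$ times, so no such $c$ exists.

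\textbf{Step 2: structure of the string for $c\in\mathcal L(t)$.} Let $a$ and $b$ be the numbers of robot positions strictly on the two opposite rays of $\mathcal L(t)$ from $c$. Then $\mathcal B=\mathcal R(t)\setminus\{c\}$, and up to cyclic rotation
\[
SA(\mathcal B,c)=(0^{a-1},\pi,0^{b-1},\pi).
\]
This string has exactly two nonzero entries, both equal to $\pi$. It is periodic with period $>1$ iff the two blocks coincide, i.e.\ $a=b$, and then $reg(\mathcal B)=2$. If $c$ is outside the segment $\overline{p_1p_n}$, then one ray is empty and the string consists of a single $2\pi$ gap, which is not periodic. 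So the valid centres are exactly the points of $\mathcal L(t)$ that split the positions evenly.

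If $n$ is odd, the unique such point is the median position $p_{(n+1)/2}$. The robot there forms $\mathcal R(t)\setminus\mathcal B(t)$, and $a=b=(n-1)/2$.

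If $n$ is even, every point of the open segment $(p_{n/2},p_{n/2+1})$ gives the same string with $a=b=n/2$. This is the step I expect to be the main obstacle for the word ``unique''. My plan is to note that all these candidates induce literally the same string and the same rays $qrad_i(t)$, so the quasi-regular structure is unique. I would then fix $c_q$ canonically as the midpoint of $\overline{p_{n/2}p_{n/2+1}}$, which is $\mathcal O(t)$ by the diameter property. If the intended reading demands uniqueness of the point itself, this convention is the needed addition.

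\textbf{Step 3: not free-path.} Since $n\ge 8$, we have $a=b\ge 3$. Take the outermost robot $p_n$ on one side. The segment from $c_q$ to $p_n$ contains at least $a-1\ge 2$ other robot positions. This violates the free-path condition, so $\mathcal R(t)$ is quasi-regular but not free-path quasi-regular. The bound $n\ge 8$ is more than needed here, since $a\ge 2$ suffices, but it is consistent with the footnote of Table~\ref{tab:movement-summary}.
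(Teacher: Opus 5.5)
Your argument is correct in its main line and uses the same construction as the paper: the balanced split point along $\mathcal L(t)$ gives $SA=X^2$ with $X=(0,\dots,0,\pi)$, and the farthest robot on either ray from $c_q$ is blocked by the others on that ray. Where you differ is the uniqueness part, and there you are more careful than the paper. The paper only compares $c_q$ with other points $c'\in\mathcal L(t)$, asserting that any $c'\neq c_q$ changes the count on one side. It never rules out centres off the line; your Step 1 does this, using the single angular gap larger than $\pi$. For even $n$ the paper's assertion is also false: every point of the open segment $(p_{n/2},p_{n/2+1})$ leaves $n/2$ positions on each side and yields the same string.

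Your Step 2 characterises the admissible centres exactly and shows the period is exactly $2$. The canonical midpoint you then fix is precisely the paper's choice of $c_q$. So the observation holds with that convention, or if ``unique'' is read as uniqueness of the induced split of the robots into two rays. The free-path failure does not depend on which admissible centre is chosen. Strictly speaking, different centres give different half-lines, not ``literally the same rays'', but they contain the same robots, which is all you need.

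One claim in your proposal is wrong and should be deleted: the midpoint of $\overline{p_{n/2}p_{n/2+1}}$ is not $\mathcal O(t)$ in general. The diameter property only tells you that $\mathcal O(t)$ is the midpoint of $\overline{p_1p_n}$. For positions $0,1,\dots,6,100$ on a line ($n=8$), $\mathcal O(t)=50$ while $c_q=3.5$. In fact $\mathcal O(t)$ then lies between $p_7$ and $p_8$, where the split is $7$ versus $1$, so it is not even an admissible centre. Nothing in your proof uses this identification, so removing it costs nothing. Keeping it, however, would wrongly suggest that linear configurations are totally symmetric.

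A minor point: for $n\ge 8$ you actually have $a=b\ge 4$, not just $a=b\ge 3$, which matches the paper's count; either bound suffices.
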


\begin{algorithm}[!htbp]
\caption{: \textsc{PatternFormation}$(r_i)$}
\label{alg:pattern-formation}
\begin{algorithmic}[1]

\If{$r_i\in\mathcal R_g$ and a multiplicity point $p_m$ exists}
    \State \Call{Gathering}{$r_i,p_m$}
    \State \Return
\EndIf

\If{$r_i\in\mathcal R_f$ and
$|\mathcal R(t)|\leq N_f+1$}
    \State \Call{Formation}{$r_i$}
    \State \Return
\EndIf

\If{$r_i\in\mathcal R_f$ and $r_i(t)$ is a multiplicity point}
    \State \textbf{remain stationary}
    \State \Return
\EndIf

\State \Call{MultiplicityCreation}{$r_i$}

\end{algorithmic}
\end{algorithm}

\begin{algorithm}[!htbp]
\caption{: \textsc{MultiplicityCreation}$(r_i)$}
\label{alg:multiplicity-creation}
\begin{algorithmic}[1]

\State By default, $r_i$ remains stationary.

\If{$\mathcal R(t)\in$\2}

    \If{$\mathcal R(t)$ is free-path quasi-regular}
        \If{$r_i\in\mathcal R_g$ and $r_i(t)\in S_{in}(t)$}
            \State \textbf{move directly toward} $c_q$
        \EndIf

    \ElsIf{$r_i\in\mathcal R_g$ and $r_i(t)\in S_{in}(t)$}
        \If{$(r_i(t),\mathcal O(t))$ contains no robot position}
            \State \textbf{move directly toward} $\mathcal O(t)$
        \Else
            \State \Call{StepAside}{$r_i,\mathcal O(t)$}
        \EndIf
    \EndIf

\ElsIf{$\mathcal R(t)\in$\3}

    \If{$\mathcal R(t)$ is free-path quasi-regular}
        \If{$r_i\in\mathcal R_g$}
            \State \textbf{move directly toward} $c_q$
        \EndIf

    \ElsIf{$\mathcal R(t)$ is quasi-regular}

        \If{$c_q\neq\mathcal O(t)$}
            \State Compute the pivotal set $\mathcal P'(t)$
            \If{$r_i(t)\in S_{out}(t)$ and
            $r_i(t)\notin\mathcal P'(t)$}
                \State \Call{StepIn}{$r_i,c_q$}
            \EndIf

        \Else
            \State Let $x_i=rad_i(t)\cap S(t)$
            \If{$r_i(t)\in S_{in}(t)$ and
            ($x_i$ is occupied \textbf{or}
            $(r_i(t),\mathcal O(t))$ contains a robot position)}
                \State \Call{StepAside}{$r_i,\mathcal O(t)$}
            \EndIf
        \EndIf

    \Else
        \State Compute the pivotal set $\mathcal P'(t)$
        \If{$r_i(t)\in S_{out}(t)$ and
        $r_i(t)\notin\mathcal P'(t)$}
            \State \Call{StepIn}{$r_i,\mathcal O(t)$}
        \EndIf
    \EndIf

\EndIf

\end{algorithmic}
\end{algorithm}
 
\begin{proof}
Since the robot positions in $\mathcal R(t)$ are collinear, they admit a
total order along $\mathcal L(t)$, computable by every robot without any
axis agreement (it only requires comparing distances along the single
line on which all positions lie).
 
\smallskip
\noindent\textbf{Existence.}
Let $c\in\mathcal L(t)$ be the unique balanced split point of
$\mathcal R(t)$ along $\mathcal L(t)$: if $n$ is even, $c$ is the midpoint
of the two median consecutive positions $r_{(n/2)}, r_{(n/2+1)}$; if $n$
is odd, $c$ is the median robot position $r_{(\lceil n/2\rceil)}$ itself.
In either case, taking $\mathcal B(t)=\mathcal R(t)$ (even case) or
$\mathcal B(t)=\mathcal R(t)\setminus\{c\}$ (odd case), exactly
$\lfloor n/2\rfloor$ positions of $\mathcal B(t)$ lie on each side of $c$
along $\mathcal L(t)$, all robots on a given side sharing one ray from
$c$. Hence
\[
SA(\mathcal B(t),c)=X^2,\qquad X=\big(\underbrace{0,\dots,0}_{\lfloor n/2\rfloor-1},\pi\big),
\]
so $\mathcal R(t)$ is quasi-regular with centre $c_q=c$.
 
\smallskip
\noindent\textbf{Uniqueness of $c_q$.}
Any other point $c'\in\mathcal L(t)$, $c'\ne c_q$, changes the count of
robot positions on at least one side of the split (since all positions
are distinct and $c_q$ is the unique balanced split point), so $c'$
cannot yield $SA(\cdot,c')=X^2$. Hence $c_q$ is unique, and every robot
can identify it consistently despite full disorientation.
 
\smallskip
\noindent\textbf{Failure of the free-path property.}
Since $c_q\in\mathcal L(t)$, every $qrad_i(t)$ is a sub-ray of
$\mathcal L(t)$ itself, so all positions in $\mathcal R(t)\setminus\{c_q\}$
lie on exactly two rays from $c_q$, each containing
$\lceil (n-1)/2\rceil \ge \lceil 7/2\rceil = 4$ robot positions (using
$n\ge 8$). Hence for the farthest position $r_i(t)$ on a ray, every other
position on that ray lies on $qrad_i(t)$ strictly between $c_q$ and
$r_i(t)$. By Definition~4, $\mathcal R(t)$ is not free-path quasi-regular.
\end{proof}
 
\begin{remark}
Observation~\ref{obs:linear} shows that for \2
configurations with $n\ge 8$, a linear configuration always falls under
sub-case~(1.b) (quasi-regular but not free-path quasi-regular), whose
movement rule is defined with respect to $\mathcal O(t)$ rather than
$c_q$, and hence applies verbatim without modification. Consequently the
qualifier ``non-linear'' in the statements of sub-cases~(1.a) and~(1.b)
in Section~4.5 is unnecessary and is removed (see Part~B, item~B.1).
\end{remark}

{\bf Case 1} $\boldsymbol{\mathcal R(t) \in}$  \2: Since $|\mathcal R_g|\ge6$ and $|S_{out}(t)|\leq 4$, there are at least two robot positions from $\mathcal R_g(t)$ which lies in $S_{in}(t)$. Two robots are sufficient to create a multiplicity point and thus, the robots having positions in $S_{in}(t)$ can create a multiplicity point within a finite time. We need to coordinate their movement so that no more than one multiplicity point is created during this process, and the process creates the multiplicity point in finite time. In order to achieve so, we consider two cases separately as follows:

\textbf{(1.a) $\mathcal R(t)$ is free-path quasi-regular:} Since robots in $\mathcal R_f$ do not have global multiplicity detection capability, they can recognize $\mathcal R(t)$ quasi-regular configurations until at least two robots reach $c_q$, the centre of the quasi-regularity. Once two robots reach $c_q$, a multiplicity point is created at $c_q$. Since robots have free paths towards $c_q$, they reach this point by direct movements, and no other multiplicity point is created during these movements. The robots act in this case as follows: if $r_i(t)\in S_{in}(t)\cap \mathcal R_g(t)$, the centre $c_q$ is the destination point for $r_i$ and robot $r_i$ has a direct movement towards $c_q$. In the rest of the cases, robots do not move. Since robots in $\mathcal R_f$ do not move, if the centre $c_q$ contains a robot from $\mathcal R_f$, then the multiplicity point contains a robot from $\mathcal R_f$. To make this multiplicity point stable, our approach does not move this robot from $\mathcal R_f$ until at least two robots from $\mathcal R_g$ reach $c_q$ (this case is handled during the {\it formation phase} using the local weak multiplicity detection capability of robots in $\mathcal R_f$).

\textbf{(1.b) $\mathcal R(t)$ is quasi-regular but not free-path quasi-regular or $\mathcal R(t)$ is not quasi-regular:} We maintain $S(t)$, until a multiplicity point is created. Our approach does not move the robots lying on $S(t)$. Otherwise, if $r_i(t)\in S_{in}(t)\cap \mathcal R_g(t)$, then $\mathcal O(t)$ is the destination point of $r_i$. In the rest of the cases, robots do not move. A robot has either a {\it direct movement} or {\it step-aside movement} w.r.t. $\mathcal O(t)$ depending on its position.

% %\vspace*{-0.2cm}
{\bf Case 2} $\boldsymbol{\mathcal R(t)\in}$ \3: In this case, $\mathcal R(t)$ can not be linear. The movement of a robot $r_i$ depends on the symmetry of $\mathcal R(t)$ as follows:

\textbf{(2.a) $\boldsymbol{\mathcal R(t)}$ is free-path quasi-regular:} Our approach maintains the quasi-regularity until at least two robots reach $c_q$. A robot $r_i\in \mathcal R_g$ moves towards $c_q$ following a direct movement. If a robot belongs to $\mathcal R_f$, it does not move. In this case, robots have only direct movements with respect to $c_q$.

\textbf{(2.b) $\boldsymbol{\mathcal R(t)}$  is quasi-regular but not free-path quasi-regular:} In this case, robots move either to convert the current configuration to a configuration in \2 or to create free paths to $\mathcal O(t)$ for robots. The approach depends on the positions of $\mathcal O(t)$ and $c_q$. First, consider the case when $c_q\neq \mathcal O(t)$. Robot $r_i$ computes $\mathcal P'(t)$, that is, the set of pivotal robots. If $r_i(t)\in S_{out}(t)$ and $r_i(t)\notin \mathcal P'(t)$, robot $r_i$ computes a  destination point on $\overline{r_i(t)c_q}$ according to the {\it step-in movement} and moves towards this point. Otherwise, it does not move. Now suppose $c_q = \mathcal O(t)$. Consider a robot $r_i$ such that $r_i(t)\in S_{in}(t)$. Let $rad_i(t)$ intersect $S(t)$ at the point $x_i$. If $x_i$ contains a robot position or the open line segment $(r_i(t), \mathcal O(t))$ contains at least one robot position, then robot $r_i$ takes a {\it step-aside movement} w.r.t. $\mathcal O(t)$ (robot $r_i$ moves out of the line segment $\overline{r_i(t)\mathcal O(t)}$).  In the rest of the cases, robot $r_i$ does not move.

\textbf{(2.c) $\boldsymbol{\mathcal R(t)}$  is not quasi-regular:} In this case, we try to convert the current configuration into a configuration in \2 and in order to do so, we maintain $S(t)$. If $r_i(t)\in S_{out}(t)$ and $r_i(t)\notin \mathcal P'(t)$, then robot $r_i$ has a {\it step-in movement} w.r.t. $\mathcal O(t)$ (robot $r_i$ moves to a point on $\overline{r_i(t)\mathcal O(t)}$). Otherwise, it does not move.

\begin{algorithm}[ht]
\caption{: \textsc{Gathering}$(r_i,p_m)$}
\label{alg:gathering}
\begin{algorithmic}[1]
\Require Robot $r_i\in\mathcal R_g$ and stable multiplicity point $p_m$

\If{$r_i(t)=p_m$}
    \State \textbf{remain stationary}
    \State \Return
\EndIf

\If{$(r_i(t),p_m)$ contains no robot position}
    \State \textbf{move directly toward} $p_m$
    \State \Return
\EndIf

\If{$\mathcal R(t)\in$\3 and
$\mathcal R(t)$ is quasi-regular but not free-path quasi-regular and $c_q=\mathcal O(t)$}
    \State \textbf{remain stationary}
\Else
    \State \Call{StepAside}{$r_i,p_m$}
\EndIf

\end{algorithmic}
\end{algorithm}

\item {\bf The gathering phase:} Robots in $\mathcal R_g$ executes this phase when a multiplicity point $p_m$ is created. However, not all robots in $\mathcal R_f$ can detect the execution of this phase. If a robot in $\mathcal R_f$ lies at the multiplicity point $p_m$, it does not move (robots in $\mathcal R_f$ have local weak multiplicity capability). Otherwise, the robots in $\mathcal R_f$ continue acting in the same way as they do during the {\it multiplicity creation phase}. Now, consider a robot $r_i\in\mathcal R_g$. If robot $r_i(t)= p_m$, then robot $r_i$ does not move. Otherwise,  robot $r_i$ moves towards the multiplicity point $p_m$ in the following way:

\textbf{(B.1)} If the line segment $(r_i(t),p_m)$ does not contain any other robot position, then robot $r_i$ has a direct movement towards $p_m$.

\textbf{(B.2)} Otherwise, the line segment $(r_i(t),p_m)$  contains at least one robot position.  If $\mathcal R(t)\in$\3 and  $\mathcal R(t)$  is quasi regular but not free-path quasi regular with $c_q=\mathcal O(t)$, then $r_i$ does not move. Otherwise,  robot $r_i$ has a {\it step-aside movement} w.r.t. $p_m$.

Algorithm~\ref{alg:gathering} summarizes the \textsc{Gathering} procedure, which directs a robot in $\mathcal R_g$ toward the stable multiplicity point $p_m$ using either
a direct or a step-aside movement, depending on the current configuration.

\item {\bf The formation phase:} This phase is executed only by the robots in $\mathcal R_f$ when the total number of distinct robot positions in the system is at most $N_f+1$, i.e., $|\mathcal R(t)|\le N_f+1$. Since the multiplicity point $p_m$ may contain a robot from $\mathcal R_f$, a robot from $\mathcal R_g$ may not lie at $p_m$. Since the robots in $\mathcal R_g$ do not know the size of $\mathcal R_g$, they can not determine the start of this phase. Thus, exactly one robot in $\mathcal R_g$ may execute the {\it gathering phase} while the robots in $\mathcal R_f$ execute the {\it formation phase} simultaneously. However, the robots are unaware of this overlap. Thus, the main challenge here is to avoid collisions between the robots in $\mathcal R_f$. Note that there is a special case in which we can not avoid a collision between a robot from $\mathcal R_g$ and a robot from $\mathcal R_f$. However, this multiplicity point is unstable (once the robot from $\mathcal R_f$ moves, this multiplicity is broken). Robots in $\mathcal R_f$ move in some order. 
\begin{algorithm}[ht]
\caption{: \textsc{Formation}$(r_i)$}
\label{alg:formation}
\begin{algorithmic}[1]
\Require Robot $r_i\in\mathcal R_f$ and
$|\mathcal R(t)|\leq N_f+1$

\If{$r_i(t)$ is a multiplicity point}

    \If{$r_i(t)\notin\partial S(t)$}
        \State \Call{StepOut}{$r_i,S(t)$}
    \Else
        \State Let $y_i$ be the midpoint of
        $\overline{r_i(t)\mathcal O(t)}$
        \State \textbf{move toward} $y_i$
    \EndIf

    \State \Return
\EndIf

\State Let $C_k(t)=S(t)$

\If{$r_i(t)\in S(t)$}
    \State \textbf{remain stationary}

\ElsIf{$r_i(t)\in C_{k-1}(t)$}

    \State Let
    $x_i=rad_i(t)\cap S(t)$

    \If{$x_i$ is not occupied}
        \State \textbf{move directly toward} $x_i$
    \Else
        \State \Call{StepOut}{$r_i,S(t)$}
    \EndIf

\ElsIf{$r_i(t)\in C_{k-2}(t)$ and
$C_{k-1}(t)$ contains exactly one occupied position}

    \State \Call{StepOut}{$r_i,C_{k-1}(t)$}

\Else
    \State \textbf{remain stationary}
\EndIf

\end{algorithmic}
\end{algorithm}

First, consider the case when a robot $r_i\in \mathcal R_f$ does not lie at a multiplicity point. Let $C_k=S(t)$ for some $k$. Robot $r_i$  moves according to one of the following ways:

\textbf{(C.1)} If robot $r_i$ lies on $S(t)$, it does not move. 

\textbf{(C.2)} Suppose $r_i$ lies on $C_{k-1}(t)$. If $x_i$, the intersection point between $S(t)$ and $rad_i(t)$, does not contain any robot position, then $r_i$ has a direct movement towards $x_i$. Otherwise, robot $r_i$ moves according to the  {\it step-out movement} w.r.t. $S(t)$.

\textbf{(C.3)} Suppose robot $r_i$ lies on $C_{k-2}(t)$ and $C_{k-1}(t)$ contains exactly one robot position. In this case, robot $r_i$ moves towards the circle $C_{k-1}(t)$  by a {\it step-out movement} w.r.t. the circle $C_{k-1}(t)$.

\textbf{(C.4)} In the rest of the cases, robot $r_i$ does not move.

Next, consider the case when robot $r_i$ lies at a multiplicity point. In this case, robot $r_i$ moves even if it lies on $S(t)$ (the multiplicity point may lie on $S(t)$). Robot $r_i$ moves according to the {\it step-out movement} w.r.t. $S(t)$, if it does not lie on the boundary of $S(t)$. Otherwise, robot $r_i$ first moves towards the middle point of the line segment, $\overline{r_i(t)\mathcal O_t}$ and then it moves according to the {\it step-out movement} w.r.t. $S(t)$. The formation rules for robots in $\mathcal R_f$ are summarized in Algorithm~\ref{alg:formation}.
%\vspace*{-0.5cm}
 \end{enumerate}

\section{Correctness of Algorithm \textsc{PatternFormation()}}

In this section, we prove the correctness of Algorithm~\textsc{PatternFormation()} by establishing the required properties through a sequence of lemmas and concluding with the main theorem.
\begin{lemma}
\label{lemma-mcp-1}
Routine \textsc{MoveToDestination()} provides collision-free movements for the robots during the multiplicity creation phase.
\end{lemma}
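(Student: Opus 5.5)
We interpret ``collision-free'' as the standard requirement: during the multiplicity creation phase, no two robots that were at distinct positions at the start of a round end that round at the same point, except at the intended target. The intended target is $c_q$ or $\mathcal O(t)$, where a multiplicity is purposely created by robots of $\mathcal R_g$. The plan is a round-by-round argument in the $\mathrm{SSYNC}$ model. In a round starting at time $t$, all activated robots compute from the same snapshot $\mathcal R(t)$. So it suffices to show two things: the trajectories $\overline{r_i(t)d_i}$ of any two activated robots are disjoint except possibly at the intended target, and no trajectory passes through the position of a stationary robot. Non-rigidity only truncates a robot to some prefix of its segment or arc. Hence disjointness of the full trajectories already covers every possible stopping point. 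We split according to the movement type in Algorithm~\ref{alg:move-to-destination}, plus direct moves, and then handle mixed pairs.

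\textbf{Step-in.} In $\mathcal R(t)\in$\3, only non-pivotal robots of $S_{out}(t)$ move, each along $\overline{r_i(t)p}$ with $p\in\{\mathcal O(t),c_q\}$. By construction $d_i$ lies strictly before the first intersection $x_i$ of $\overline{r_i(t)p}$ with a line of $H_i(t)$. Now suppose another robot $r_j$ lies on the path of $r_i$, or the paths of $r_i$ and $r_k$ cross at a point $y$. Then $r_i$'s segment meets the line $\overline{r_jr_k}$, or the line through $r_k(t)$ and $p$, which belongs to $H_i(t)$ (when $p$ is occupied) or is handled by the midpoint rule, before $d_i$. This contradicts the choice of $x_i$. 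The remaining case is two step-in robots on a common ray towards $p$. This is impossible, because both lie on $\partial S(t)$ and a ray from an interior point meets the circle once. Moving robots stay strictly inside $S(t)$ and never reach $p$, so no multiplicity is created by step-in.

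\textbf{Direct and step-aside moves towards $\mathcal O(t)$ or $c_q$.} A direct move is taken only when the open segment to the target is empty. In the free-path quasi-regular case this holds for all movers simultaneously. Two such segments meet only at the common endpoint, since distinct directions from a common point give rays that meet only there, and robots on the same ray are excluded by the free-path or emptiness condition. So the only meeting point is the intended multiplicity. Step-aside moves stay on the robot's own circle $C_l(t)$, and the angular amount moved is at most $\theta_i/3$ (or up to a bisector of a nearer $H_i$-intersection). Consequently a mover never reaches the neighbouring radial lines $rad_j,rad_k$. We will show that two step-aside robots on the same circle cannot meet. If they lie in the same wedge, they rotate by at most one third of it. If they lie on the same radial line, they are on different circles, by the $1/(3n_i)$ factor and $n_i$ distinctness.

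\textbf{Mixed pairs.} We must also rule out a step-aside arc crossing a direct segment towards $\mathcal O(t)$ in the same round. An arc inside the wedge between consecutive radial lines cannot meet a radial segment lying on another radial line. Any crossing point with a line of $H_i(t)$, in particular a line through two other robots, is avoided by the bisector rule. Step-in robots lie on $S_{out}$, while step-aside and direct movers lie in $S_{in}$. Moreover, by Table~\ref{tab:movement-summary} the two kinds never occur in the same case of Algorithm~\ref{alg:multiplicity-creation}. So step-in and interior moves never coexist.

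We expect the main obstacle to be the step-aside analysis. There we must show that the nearest $H_i(t)$-intersection, or the $\theta_i/(3n_i)$ rule, keeps the arcs of several robots on a common radial line pairwise disjoint and off other robots' future direct paths. The approach we will try first is to reason about the angular sector each robot can sweep, as a function of its index along the ray. This makes the swept regions nested angular intervals on distinct circles, which are therefore disjoint.
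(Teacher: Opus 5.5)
\textbf{There is a genuine gap, in the step-aside case.}

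Your architecture matches the paper's. The paper also argues round by round: it splits by configuration class and separates paths either by the circle $C_l(t)$ (robots on a common radial line) or by the bisector $\mathcal L^*$ of $\angle r_i(t)\mathcal O(t)r_j(t)$ (distinct radial lines). Your wedge confinement and ``distinct rays into a common target meet only there'' play the same roles. Your explicit handling of stationary robots and of the $H_i(t)$ rule in step-in is, if anything, more careful.

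The problem is your claim that two robots in the same wedge ``rotate by at most one third of it''. That holds only in the default branch, where the rotation is $\theta_i(t)/(3n_i)$. In the other branch, $x_i$ is the midpoint of the arc from $r_i(t)$ to $q_i$. The rotation is then half of $\angle r_i(t)\mathcal O(t)q_i$, which can reach $\theta_i(t)/2$.

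This is exactly the delicate configuration:
\begin{enumerate}
\item Let $r_i$ and $r_a$ lie on the same circle $C_l(t)$, on the two radial lines bounding a common wedge, and let both step into that wedge.
\item Every line of $H_i(t)$ through $r_a$ meets the closed arc $C_{ij}(t)$ at its far endpoint $r_a$. So, as the rule is stated, the bisector branch is the one used.
\item Suppose no line of $H_i(t)\cup H_a(t)$ crosses the open arc between the two robots. Then, discarding each robot's own position, $q_i=r_a$ and $q_a=r_i$.
\item Both robots therefore head to the midpoint of that arc, i.e., to the same point.
\end{enumerate}
Your bound thus fails in precisely the one case where you use it.

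\textbf{What the proof actually needs.} You need the strict statement that every step-aside robot stays strictly on its own side of the bisector of the wedge it enters, i.e., rotation $<\theta_i(t)/2$. Under the arc motion you assume, this settles all step-aside pairs. Robots on different circles (in particular, on a common ray) never meet. Robots on a common circle that enter different wedges stay in disjoint open sectors.

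This is the paper's $\mathcal L^*$ separation, but the paper only asserts it. If the far endpoint of $C_{ij}(t)$ is admissible as $q_i$, both destinations lie \emph{on} $\mathcal L^*$, so you cannot close the step by citing the paper. You must fix the rule and then prove the strict bound. Either take $q_i$ in the relative interior of $C_{ij}(t)$, or move to a point at a fraction strictly below $1/2$ of the arc.

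Your deferred plan (``nested angular intervals along a ray'') targets the wrong case. Robots on a common ray lie on distinct circles, so under arc motion their paths are disjoint regardless of the $1/(3n_i)$ factor. The pair that actually needs work is the same-circle, adjacent-ray pair above.

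Two smaller points:
\begin{itemize}
\item In the Non-Dense free-path quasi-regular case, the direct movers include robots of $S_{out}(t)$, so ``direct movers lie in $S_{in}$'' is false. This is harmless, since no step-in occurs in that case.
\item Your mixed-pair case should also cover a step-aside robot against the direct mover on its own ray. This is easy: the arc leaves the ray at a larger radius than the direct mover's path.
\end{itemize}
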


\begin{proof} 
%\vspace*{-0.3cm}
Consider two robots $r_i, r_j\in R$ which are to move in round t. Two robots do not collide if their movement paths do not intersect each other except at the desired destination point. We consider each of the phases separately :

\textbf{{\bf Case 1 $\boldsymbol{\mathcal R(t)\in}$ \2:}} In this case, $\mathcal O(t)$ is the destination point for all the robots in $\mathcal R_g$ and the robots in $\mathcal R_f$ do not move. Robots have two types of movements: the {\it direct movement} and these movements are w.r.t. $\mathcal O(t)$ only. Consider $rad_i(t)$ and $rad_j(t)$. First suppose that $rad_i(t)$ and $rad_j(t)$ coincide. Let $r_i$ and $r_j$ lie on the circles $C_k(t)$ and $C_l(t)$ respectively. Without loss of generality, suppose $l<k$. In this case, at most one of $r_i$ and $r_j$ can have direct movement. Now, in round $t$, consider all possible combinations of movements of robots $r_i$ and $r_j$. In all of them, the paths of movement of these two robots are completely separated by the circle $C_l(t)$. Thus,  robots $r_i$ and $r_j$ do not collide.

 Next suppose that $rad_i(t)$ and $rad_j(t)$ are distinct. Let $\mathcal L^*$ be the bisector of the angle $\angle{r_i(t)\mathcal O(t)r_j(t)}$. In this case, all possible paths of the two robots are separated by the bisector $\mathcal L^*$. 

 \textbf{{\bf Case 2 $\boldsymbol{\mathcal R(t)\in}$ \3:}} The movements of robots $r_i$ and $r_j$ depend on the symmetry of $\mathcal R(t)$ as follows:

 \textbf{{\bf Case 2.1 $\mathcal R(t)$ is free-path quasi regular:}} In this case, robots in $\mathcal R_g$ have direct movements and the robots in $\mathcal R_f$ do not move. If robots $r_i$ and $r_j$ move, they  move along $rad_i(t)$ and $rad_j(t)$ respectively. These two paths meet at $c_q$, the robot's destination point. Thus, robots do not collide during movements.

 \textbf{{\bf Case 2.2 $\mathcal R(t)$ is quasi regular but not free-path quasi regular:}} Robots in $\mathcal R$ have two types of movements: step-in and step-aside. The arguments are the same as in case 1. If $rad_i(t)$ and $rad_j(t)$ are coincident, then the paths of $r_i$ and $r_j$ are separated by $C_l(t)$. Otherwise, the paths are separated by $\mathcal L^*$.

 \textbf{{\bf Case 2.3 $\mathcal R(t)$ is not quasi regular:}} Both types of robots in $\mathcal R$ move in this case and have only step-in movements. If $rad_i(t)$ and $rad_j(t)$ coincide, then at most one of them moves. Without loss of generality, suppose $r_i$ moves. Robot $r_i$ moves to a point on $rad_i(t)$, and this point is at most as far away as the middle point of $\overline{r_i(t)r_j(t)}$ from $r_j(t)$. Thus, these two robots do not collide in this case. Next, consider the case when both robots move. In this case, $rad_i(t)$ and $rad_j(t)$ are different, and the paths of these two robots are separated by the bisector $\mathcal L^*$ of the angle $\angle{r_i(t)\mathcal O(t)r_j(t)}$.
\end{proof}

\begin{lemma}
\label{unqiue_SMP}
The multiplicity-creation phase creates a unique stable multiplicity point in finite time.
\end{lemma}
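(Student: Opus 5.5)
We argue by following the case structure of \textsc{MultiplicityCreation} and proving three properties in turn: progress, so that some multiplicity point appears in finitely many rounds; uniqueness, so that at most one multiplicity point is ever formed; and stability, so that the point eventually holds at least two robots of $\mathcal R_g$.

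\textbf{Progress.} For a \3 configuration we first show that it is turned into a \2 configuration, unless a multiplicity point appears earlier.
\begin{itemize}
\item By Lemma~\ref{lemma-sec}, the pivotal set $\mathcal P'(t)$ determines $S(t)$. The pivotal robots never move, and step-in destinations lie strictly inside $S(t)$. Hence $S(t)$ stays invariant while non-pivotal boundary robots leave $\partial S(t)$.
\item Under non-rigid motion each activated robot advances at least $\delta$ or reaches its destination. Since $d_i$ is strictly interior, one activation suffices to remove a robot from $\partial S(t)$.
\item A step-in never adds a new boundary robot. By fairness, $|S_{out}(t)|$ therefore drops to $|\mathcal P'(t)|\le 4$ in finitely many rounds.
\item The quasi-regular sub-case with $c_q=\mathcal O(t)$ only performs step-aside moves on circles $C_l(t)$. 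These moves clear blocked radial paths and break the quasi-regularity or the blocking, after which case (2.c) or case (2.a) applies.
\end{itemize}
In a \2 configuration we use $|\mathcal R_g|\ge 6$ and $|S_{out}(t)|\le 4$: at least two gathering robots lie in $S_{in}(t)$. All of them target one common point, $c_q$ in the free-path case and $\mathcal O(t)$ otherwise. This point is invariant, because boundary robots are frozen and $c_q$ is preserved by synchronous radial moves. A potential-function argument then finishes this part. The sum of distances of inner $\mathcal R_g$ robots to the target decreases by at least $\delta$ per activation, or a robot arrives. A blocked robot needs only finitely many step-asides: each step-aside moves it off lines of $H_i(t)$ and onto a free ray, and the $1/(3n_i)$ angular rule keeps neighbouring radial rays distinct. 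So within finite time two $\mathcal R_g$ robots reach the target.

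\textbf{Uniqueness.} Lemma~\ref{lemma-mcp-1} gives collision-free paths, except at the common destination. So the only point where robots can meet is the single target $c_q$ or $\mathcal O(t)$. We must also check the following.
\begin{itemize}
\item Step-in destinations are midpoints before the first crossing with any line of $H_i(t)$. So a step-in never lands on another robot or on a collinearity that would later merge two robots.
\item Step-aside destinations on $C_l(t)$ avoid every intersection of $H_i(t)$ with that circle.
\end{itemize}
Hence no multiplicity point other than the target is created during this phase.

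\textbf{Stability.} We must rule out a multiplicity point made only of $\mathcal R_f$ robots, and show that a mixed point survives until a second $\mathcal R_g$ robot arrives.
\begin{itemize}
\item $\mathcal R_f$ robots move only in \3 step-in and step-aside moves. These never target $c_q$ or $\mathcal O(t)$ directly, so two $\mathcal R_f$ robots cannot meet there.
\item An $\mathcal R_f$ robot already at the target, or reached by an $\mathcal R_g$ robot, detects the multiplicity locally and stays stationary by \textsc{PatternFormation}.
\item Every $\mathcal R_g$ robot sees the multiplicity globally and switches to \textsc{Gathering} toward it. So the second $\mathcal R_g$ robot arrives in finite time by the same potential argument as in the progress part.
\end{itemize}

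\textbf{Main obstacle.} The hardest step will be the invariance of the target under SSYNC non-rigid moves in the quasi-regular case. A partial activation may destroy quasi-regularity, or shift $c_q$, before two robots arrive, and the remaining robots would then aim at a different point. We would first prove that direct radial moves toward $c_q$ preserve the string of angles about $c_q$, so $c_q$ remains a centre of quasi-regularity. Next we would show that if uniqueness of $c_q$ fails, the configuration is not quasi-regular in the free-path sense, and the robots fall back to $\mathcal O(t)$, which the frozen pivots keep fixed. Finally we would argue that robots already at $c_q$ then lie on a common target, because $c_q=\mathcal O(t)$ in the totally symmetric case. If this argument breaks down, we would restrict the claim to showing that any switch of target happens only finitely often, each switch strictly reducing the number of boundary or blocked robots.
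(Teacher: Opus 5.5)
You use the same ingredients as the paper: frozen pivots with Lemma~\ref{lemma-sec}, $|\mathcal R_g|\ge 6$ against $|S_{out}(t)|\le 4$, Lemma~\ref{lemma-mcp-1} for uniqueness, and local detection for an $\mathcal R_f$ robot at the point. But the progress part has a genuine gap at sub-case (2.a). In a free-path quasi-regular Non-Dense configuration every robot of $\mathcal R_g$, boundary ones included, moves straight to $c_q$, and no pivotal set is computed. So your invariance of $S(t)$ and your monotone count of boundary robots do not apply there, and your proposal never says how (2.a) ends. The paper treats (2.a) separately, via invariance of $c_q$ and decreasing distances to $c_q$.

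That invariance is exactly your ``main obstacle'', and you leave it open. Step (1) of your plan is correct only up to the first arrival. A move along $qrad_i(t)$, stopped anywhere, keeps the robot on its own ray from $c_q$, so $SA(\cdot,c_q)$ and the free-path property survive any SSYNC, non-rigid pattern of such moves. Uniqueness of the centre, your worry in step (2), is not the problem: the unit vectors from $c_q$ to the points of $\mathcal B(t)$ sum to zero, so $c_q$ is the Weber point, which is unique for non-collinear configurations. The real problem comes if $c_q$ was unoccupied. Once one robot reaches it, Definition~\ref{def-Q} forces $\mathcal B(t)=\mathcal R(t)\setminus\{c_q\}$, which has lost a ray and is in general no longer regular. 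In a Dense configuration the rules then send every interior gathering robot, including the one at $c_q$, toward $\mathcal O(t)$. Your step (3) covers only $c_q=\mathcal O(t)$, which free-path quasi-regularity does not imply. Your fallback of ``finitely many switches'' comes with no decreasing measure. The paper does not resolve this either: it simply asserts that the configuration remains free-path quasi-regular until two robots reach $c_q$. A complete argument therefore needs something neither text provides, such as a notion of quasi-regularity that a single robot at the centre does not destroy, or a measure that strictly decreases at every change of target.

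The stability step also needs the paper's counting argument. In Algorithm~\ref{alg:pattern-formation}, the test $|\mathcal R(t)|\le N_f+1$, which sends an $\mathcal R_f$ robot to \textsc{Formation} (where a robot at a multiplicity point moves), is checked before the rule that keeps it stationary. So your claim that it ``stays stationary by \textsc{PatternFormation}'' holds only if that test cannot fire early. With collision-free motion, $|\mathcal R(t)|=N_f+g$, where $g$ is the number of gathering robots outside the point. Hence the $\mathcal R_f$ robot can leave only when $g\le 1$, i.e.\ when at least $N_g-1\ge 5$ gathering robots are already there; this is how the paper gets stability.

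Your alternative, that a second $\mathcal R_g$ robot arrives ``by the same potential argument'', does not follow from the creation-phase analysis. Once a multiplicity exists, all gathering robots, boundary ones included, run \textsc{Gathering}, so $S(t)$ is no longer frozen and step-asides are taken w.r.t.\ $p_m$. Meanwhile the $\mathcal R_f$ robots, unable to see $p_m$, keep making creation-phase moves. Arrival of further gathering robots is the business of Lemma~\ref{GPG}.

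Finally, your claim that every step-aside puts the robot on a free ray fails under non-rigid motion. Two blocked robots on one ray step into the same wedge toward different angles, and the adversary may stop one of them at the other's angle, recreating a blocked pair. What does hold, and what the paper's argument leans on, is that the robot closest to $\mathcal O(t)$ on a ray keeps a free segment to it.
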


\begin{proof}
 We prove this lemma by considering each case separately:
\begin{itemize}
    \item   $\boldsymbol{\mathcal R(t)}\in$ \2:  In this case, only the robots having position in $\mathcal R_g(t)\cap S_{in}(t)$ move. The robots lying on $S(t)$ remain stationary, and hence the smallest enclosing circle $S(t)$ and O(t) remain invariant. Since $|\mathcal R_g|\ge6$ and $\mathcal S_{out}(t)\le4$, at least two robots from $\mathcal R_g$ lie inside $S(t)$.
    \begin{itemize}
        \item {\bf $\mathcal R(t)$ is free path quasi-regular:} In this case robots move straight towards $c_q$. Since robots move in straight lines towards $c_q$, the configuration remains free path quasi-regular until at least two robots reach $c_q$. Thus, at least two robots lie at $c_q$ within a finite time. Thus, at least two robots from $\mathcal R_g$ reach $c_q$ within finite time and create a stable multiplicity point at $c_q$. Since the robots in $\mathcal R_g$ have global weak multiplicity detection capability, every robot in $\mathcal R_g$ can identify $c_q$ as the multiplicity point and subsequently execute the gathering phase. This implies the lemma in this case.
        
        \item {\bf $\mathcal R(t)$ is not free path quasi-regular:} Robots in $\mathcal R_g(t)\cap S_{in}(t)$ move towards $\mathcal O(t)$. Since $S(t)$ remains invariant under the movements of these robots, $\mathcal O(t)$ also remains fixed. Consider a robot $r_i\in\mathcal R_g$ lying inside $S(t)$. If $(r_i(t,\mathcal O(t))$ does not contain any robot position, then robot $r_i$ moves straight towards $\mathcal O(t)$. Now, let $rad_j(t)$ be one of the neighbors of $rad_i(t)$ such that $r_j\in\mathcal R_g$ and it has a {\it side-aside movement} w.r.t. $\mathcal O(t)$. Then the paths of movements of $r_i$ and $r_j$ are separated by the bisector of the angle $\angle{r_i(t)\mathcal O(t)r_j(t)}$. Thus, robot $r_i$ reaches $\mathcal O(t)$ in finite time by moving along $\overline{r_i(t)\mathcal O(t)}$. Now suppose $(r_i(t),\mathcal O(t))$ has at least one robot position. In this, case, robot $r_i$ has a {\it step-aside movement} w.r.t. $\mathcal O(t)$. We show that robot $r_i$ gets a free corridor towards $\mathcal O(t)$ within finite time. Suppose robot $r_i$ lies on $C_k(t)$. During the {\it step-aside movement}, robot $r_i$ moves towards a point on the circle $C_k(t)$. The robot on $(r_i(t),\mathcal O(t))$, which is closest to $\mathcal O(t)$, does not move out of the line $rad_i(t)$. This implies that for each {\it step-aside movement} of robot $r_i$, at least one robot is removed from its straight line path to $\mathcal O(t)$. Since there is a finite number of robots on $(r_i(t),\mathcal O(t))$ and by \textbf{Lemma} \ref{lemma-mcp-1}, robots do not collide during this phase, within finite time, robot $r_i$ gets a robot-free straight path to $\mathcal O(t)$. Thus, within a finite time, at least two robots reach $\mathcal O(t)$ and make it a multiplicity point.
    \end{itemize}
     \textbf{Lemma} \ref{lemma-mcp-1} guarantees the uniqueness of the multiplicity point.

\item $\boldsymbol{\mathcal R(t)}\in$\3: We show that either a multiplicity point is created or the configuration is converted into a one configuration belonging to \2.
\begin{itemize}
\item {\bf $\boldsymbol{\mathcal R(t)}$ is free-path quasi-regular:} In this case only the robots in $\mathcal R_g$ move and they move towards $c_q$ following direct movements. Due to these movements the current robot configuration $\mathcal R(t)$ reaches to a robot configuration $\mathcal R(t^*)$ having any one of the following properties: (i) at least two robots reach $c_q$ and thus $\mathcal R(t^*)$ contains a multiplicity point or (ii) $\mathcal R(t^*)$ is free-path quasi-regular with $\mathcal R(t^*)\in$\2 or (iii) $\mathcal R(t^*)$ is free-path quasi-regular with $\mathcal R(t^*)\in$ \3. In cases (i) and (ii), we are done. Now consider the case (iii). Since $c_q$ remains invariant under the direct movements of the robots, $\mathcal R(t^*)$ has $c_q$ as its centre of quasi-regularity. This
implies that there are some robots in $\mathcal R_g$ whose distance from $c_q$ is reduced. Thus, the robot configuration satisfies property (i) or (ii) within a finite time. This implies the lemma in this case.

\item {\bf $\boldsymbol{\mathcal R(t)}$ is quasi-regular but not free-path quasi-regular:}   First consider the case when $c_q\neq\mathcal O(t)$. In this case, robots compute $\mathcal P'(t)$. The robots lying on $S(t)$ and not having positions in $\mathcal P'(t)$ take step-in
 movements towards $c_q$. By \textbf{Lemma} \ref{lemma-sec}, the points in the set $\mathcal P'(t)$ are sufficient to define $S(t)$. Since robots move towards $c_q$ along straight lines, the configuration remains quasi-regular but not free-path quasi-regular. Thus, within a finite time, configuration $\mathcal R(t)$ is converted to a configuration $\mathcal R(t^*)$ in \2. This implies the lemma in this case.

 Now consider the case when $c_q = \mathcal O(t)$. In this case, the robots are lying
 on $S(t)$ do not move. Thus, $S(t)$ does not change due to the robots' movements. The robots, lying inside $S(t)$ and not having free corridors to $\mathcal O(t)$, perform {\it step-aside movements}. These movements convert $\mathcal R(t)$ to a configuration $\mathcal R(t^*)$ having one of the following properties: (i) $\mathcal R(t^*)$ is free-path quasi-regular or (ii) $\mathcal R(t^*)$ is not quasi-regular or (iii) $\mathcal R(t^*)$ is quasi-regular, but not free-path quasi-regular, and the number of robots having free-paths is increased. We have discussed the case (i) above, and the case (ii) is discussed below. Consider the case (iii). If the centre of quasi-regularity of $\mathcal R(t^*)$ is different from $\mathcal O(t)$, then, as discussed above, we have the lemma in this case. Otherwise, the number of robots having free-paths is increased (since the smallest enclosing circle $S(t)$ remains the same and robots take the step-aside movements w.r.t. $\mathcal O(t)$). Since, by \textbf{Lemma} \ref{lemma-mcp-1}, robots do not collide during this phase, if this process is repeated, within finite time, we have either case (i) or case (ii). This completes the proof in this case.
\item {\bf $\boldsymbol{\mathcal{R}}(t)$ is not quasi regular:} In this case, robots compute $\mathcal P'(t)$. A robot $r_i$ moves if $r_i(t)\in S_{out}(t)$ and $r_i(t)\notin P'(t)$. By \textbf{Lemma} \ref{lemma-sec}, the robot positions in $\mathcal P'(t)$ are sufficient to maintain $S(t)$. The
moving robots only have step-in movements for $O(t)$. Due to these movements, $\mathcal R(t)$ is converted into configuration $\mathcal R(t^*)$ having any one of the following properties: (i) $\mathcal R(t^*)$ in \2 or (ii) $\mathcal R(t^*)$ in \3 with $|S_{in}(t^*)| > |S_{in}(t)|$. We are done in case (i). Now for case (ii), there are two possibilities: either $\mathcal R(t^*)$ is quasi-regular or $\mathcal R(t^*)$ is not quasi-regular. If $\mathcal R(t^*)$ is quasi-regular, then by the above case analysis, either we reach a configuration with a multiplicity point or a configuration which is not quasi-regular. Whenever a quasi-regular configuration $\mathcal R(t^*)$ is reached from another quasi-
 regular configuration $\mathcal R(t)$, we have $|S_{in}(t^*) |> |S_{in}(t)|$ (since the algorithm maintains the smallest enclosing circle in those cases). Since the number of robots is finite, within finite time either the system has a multiplicity point, or the configuration belongs to \2. The latter case also implies the former case in finite time. \textbf{Lemma} \ref{lemma-mcp-1} guarantees the uniqueness of the multiplicity point. Now we show that the multiplicity point created during this phase is stable. If at least two robots from $\mathcal R_g$ lie at the multiplicity point, then the multiplicity point is stable. Since robots are indistinguishable, the multiplicity point can contain a robot from $\mathcal R_f$. Since the robots' movements are collision-free during this phase, at most one point from $\mathcal R_f$ lies at the multiplicity point. If the multiplicity point contains a robot from $\mathcal R_f$, then
this robot can recognize that it lies at a multiplicity point using
its local weak multiplicity detection capability. Such a robot moves
from the multiplicity point only after the formation phase starts,
that is, when $|\mathcal R(t)| \leq N_f+1.$ At this time, at most one robot from $\mathcal R_g$ can lie outside
the multiplicity point. Hence, since $|\mathcal R_g|\geq 6$, at least
$|\mathcal R_g|-1\geq 5$ robots from $\mathcal R_g$ remain at the
multiplicity point. Therefore, the multiplicity point remains stable
even after the robot from $\mathcal R_f$ leaves it.
This completes the proof of the lemma.

\end{itemize}
\end{itemize}
\end{proof}

% \begin{lemma}
% \label{}
% Let $p_m$ be the stable multiplicity point created during the
% multiplicity-creation phase.

\begin{lemma}
\label{lemma-gf-collision-free}
If the multiplicity point created during the multiplicity creation
phase does not contain a robot from $\mathcal R_f$, then
\textsc{MoveToDestination}() provides collision-free movements during
the gathering and formation phases.
\end{lemma}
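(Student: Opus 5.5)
The plan follows the template of Lemma~\ref{lemma-mcp-1}. For any two robots $r_i,r_j$ activated in the same round $t$ after $p_m$ exists, we show that their movement paths are separated by a curve (a circle or a line) that neither path crosses. The hypothesis is what makes this possible. Every robot at $p_m$ belongs to $\mathcal R_g$, so it stays put by Algorithm~\ref{alg:gathering}. No robot of $\mathcal R_f$ ever executes the multiplicity-point branch of Algorithm~\ref{alg:formation}. Hence the only movements are: direct or step-aside movements of $\mathcal R_g$ robots toward $p_m$; moves of $\mathcal R_f$ robots under \textsc{MultiplicityCreation} while $|\mathcal R(t)|>N_f+1$; and, once $|\mathcal R(t)|\le N_f+1$, the (C.1)--(C.4) moves of $\mathcal R_f$ robots. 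We split into cases by the team pair $(\mathcal R_g,\mathcal R_g)$, $(\mathcal R_f,\mathcal R_f)$ and $(\mathcal R_g,\mathcal R_f)$, and within each pair by which phase rule is used.

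\textbf{Pairs $(\mathcal R_g,\mathcal R_g)$.} Both robots move with respect to $p_m$.
\begin{itemize}
\item If the rays from $p_m$ through $r_i(t)$ and $r_j(t)$ differ, the bisector of $\angle r_i(t)p_mr_j(t)$ separates the paths. A direct move stays on its ray. A step-aside move stays on its circle $C^*_k$ and turns by at most half of the angle to the nearest intersection $z_i$ of a line of $H_i(t)$. That line contains the line through $r_j(t)$ and $p_m$ (which lies in $H_i(t)$ since $p_m$ is an occupied position), so the robot cannot reach the neighbouring ray.
\item If the rays coincide, at most one robot (the nearer one) has a free segment. The circle $C^*_l$ through the nearer robot then separates the two paths, exactly as in Case~1 of Lemma~\ref{lemma-mcp-1}.
\item When $p_m=\mathcal O(t)$, the step-aside of Algorithm~\ref{alg:step-aside} uses the wedge $W_i(t)$ and the fraction $\theta_i(t)/(3n_i)$. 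Two robots on the same ray therefore land on distinct circles at different angles, and the same two arguments apply.
\end{itemize}

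\textbf{Pairs $(\mathcal R_f,\mathcal R_f)$ in the formation phase.}
\begin{itemize}
\item Two robots on $C_{k-1}$ lie on distinct radial rays, and their paths are confined to the annulus between $C_{k-1}$ and $S(t)$. Direct moves are radial, so the bisector of $\angle r_i(t)\mathcal O(t)r_j(t)$ separates them. A \textsc{StepOut} move rotates by at most $\beta_i/(3m_i)<\beta_i/2$, so it stays strictly on its own side of that bisector.
\item A (C.3) robot on $C_{k-2}$ moves only when $C_{k-1}$ holds a single occupied position. Its path stays within the annulus between $C_{k-2}$ and $C_{k-1}$, and it lands on $C_{k-1}$ at an angle distinct from the lone occupant, because it was displaced off its own ray within its wedge. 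That lone occupant moves outward beyond $C_{k-1}$, so the circle $C_{k-1}$ separates the two paths except at the start point, which has already been vacated.
\item Pairs of $\mathcal R_f$ robots still in \textsc{MultiplicityCreation} are covered verbatim by Lemma~\ref{lemma-mcp-1}.
\end{itemize}

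\textbf{Mixed pairs $(\mathcal R_g,\mathcal R_f)$.} These are the main obstacle, because the reference points may differ: $p_m$ for the gathering robot and $\mathcal O(t)$ for the formation robot. Two sub-cases arise.
\begin{itemize}
\item While $|\mathcal R(t)|>N_f+1$, any moving $\mathcal R_f$ robot follows \textsc{MultiplicityCreation}. In \3 that is a step-in or step-aside with respect to $\mathcal O(t)$ or $c_q$. The gathering rule (B.2) freezes step-aside exactly in the case $c_q=\mathcal O(t)$. In the remaining cases, when $p_m=\mathcal O(t)$ both robots use the same reference point and the separating-bisector argument applies. When $p_m\neq\mathcal O(t)$, I would use the lines of $H_i(t)$ through $r_j(t)$: every destination is chosen as a midpoint or bisector before the nearest such line. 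Each robot therefore stays inside the open convex cell of the arrangement of lines through pairs of other robots that contains its current position. Two robots occupy different cells, since the line through them separates nothing yet both are vertices, so their open-segment paths cannot meet. I expect this cell argument to need the most care.
\item Once $|\mathcal R(t)|\le N_f+1$, the hypothesis means at most one $\mathcal R_g$ robot is outside $p_m$; otherwise the count of distinct positions would exceed $N_f+1$. Its path is a segment or circular arc toward $p_m$ inside $S(t)$. Each $\mathcal R_f$ move is an outward move confined between consecutive radial circles and chosen away from lines of $H_i(t)$, which include the line through the gathering robot and $p_m$. I would argue that the gathering robot's step-aside arc stops before the nearest such line, so again the two robots lie in distinct cells and their paths are disjoint.
\end{itemize}
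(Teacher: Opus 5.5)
Your same-team cases match the paper's: a bisector about $p_m$ or $\mathcal O(t)$ for distinct rays, a concentric circle for a common ray, and Lemma~\ref{lemma-mcp-1} for $\mathcal R_f$ robots still in multiplicity creation. Your remark that the line through $r_j(t)$ and $p_m$ belongs to $H_i(t)$ is a useful addition. The mixed-pair part, however, has a genuine gap.

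\textbf{Sequentiality.} You miss the observation the paper's proof is built on: under the hypothesis, the two phases cannot overlap. Argue inductively that no collision has occurred before round $t$. Then the $N_f$ robots of $\mathcal R_f$ occupy $N_f$ distinct points, none equal to $p_m$. Hence $|\mathcal R(t)|=N_f+1+k$, where $k$ is the number of points occupied by $\mathcal R_g$ robots outside $p_m$. So $|\mathcal R(t)|\le N_f+1$ forces $k=0$: formation starts only after gathering is complete, and from then on only $\mathcal R_f$ robots move. Your ``at most one $\mathcal R_g$ robot outside $p_m$'' is off by one. That is the count for the opposite situation of Lemma~\ref{lemma-overlap-transient}, where an $\mathcal R_f$ robot sits at $p_m$. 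Your last sub-case is therefore vacuous, yet you leave it resting on an unproved ``I would argue''.

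\textbf{The cell argument.} The argument you use for mixed pairs during gathering with $p_m\neq\mathcal O(t)$ is not valid, for two reasons.
\begin{enumerate}
\item It is false that every destination stops before the nearest line of the mover's own $H$. A gathering robot with a free segment moves directly to $p_m$, which is an occupied point lying on lines of $H_i(t)$, and $\overline{r_i(t)p_m}$ may cross many lines of $H_i(t)$.
\item The two cells live in different arrangements: $r_i$'s cell is taken in the arrangement of $H_i(t)$ and $r_j$'s in that of $H_j(t)$. These arrangements differ by the lines through $r_j(t)$, respectively $r_i(t)$, and their cells overlap in general. For instance, both contain the midpoint of $\overline{r_i(t)r_j(t)}$ whenever no line through two further robots meets that segment. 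So ``different cells'' gives no disjointness.
\end{enumerate}

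\textbf{What works instead.} The argument is asymmetric, as in the paper. When $r_i\in\mathcal R_g$ moves directly, its whole path lies on the line through $r_i(t)$ and $p_m$. This line belongs to $H_j(t)$, because both points are occupied positions other than $r_j(t)$. Consequently:
\begin{itemize}
\item a step-in of $r_j$ ends in the open segment $(r_j(t),x)$, where $x$ is the crossing point;
\item a step-aside of $r_j$ with respect to $\mathcal O(t)$ stops short of that line.
\end{itemize}
The second point also covers the case your wording implicitly (and correctly) leaves open: (B.2) does not freeze a gathering robot with a free segment while an $\mathcal R_f$ robot steps aside. The paper's claim that $r_i$ stays put there is too strong. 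Only when $r_i$ itself steps aside with respect to $p_m$ is a separate separating line needed. For that case the paper uses the perpendicular bisector of $\overline{r_i(t)x}$, where $x$ is the intersection of $\overline{r_j(t)\mathcal O(t)}$ with $C^*_l(t)\cap V_o$.
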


\begin{proof}
Since the multiplicity point $p_m$ does not contain a robot from
$\mathcal R_f$, the formation phase cannot start before all the robots
in $\mathcal R_g$ gather at $p_m$. Indeed, as long as at least one
robot from $\mathcal R_g$ remains outside $p_m$, the number of
distinct occupied positions is greater than $N_f+1$. Hence, the
condition $|\mathcal R(t)|=N_f+1$ required for the start of the
formation phase is not satisfied. Thus, in this case, the gathering phase and the formation phase are
executed sequentially.

\textbf{During the gathering phase:}
Since robots in $\mathcal R_f$ do not have global weak multiplicity
detection capability, they continue acting according to the
multiplicity creation phase until $|\mathcal R(t)|=N_f+1$. Their
movements are limited to step-in and step-aside movements, both with
respect to a single point, either $\mathcal O(t)$ or $c_q$.
The robots in $\mathcal R_g$ move toward $p_m$ by either direct or
step-aside movements. Let $r_i$ and $r_j$ be two active robots in round $t$.

\begin{itemize}
    \item \textbf{Robots $r_i,r_j\in \mathcal R_g$:}
    If the line segments $\overline{r_i(t)p_m}$ and
    $\overline{r_j(t)p_m}$ are not coincident, then the paths of
    movements of $r_i$ and $r_j$ are completely separated by the
    bisector of the angle $\angle r_i(t)p_mr_j(t)$.
    Next suppose that $\overline{r_i(t)p_m}$ and
    $\overline{r_j(t)p_m}$ are coincident. Without loss of generality,
    suppose that $r_i(t)$ lies on $(r_j(t),p_m)$. Let $r_i(t)$ lie on
    $C_l^*(t)$ and $r_j(t)$ lie on $C_k^*(t)$ for some $l,k$. Let
    $C_{kl}^*(t)$ be the circle having centre at $p_m$ and lying
    midway between $C_l^*(t)$ and $C_k^*(t)$. Then the movement paths
    of $r_i$ and $r_j$ are completely separated by the circle
    $C_{kl}^*(t)$.

    \item \textbf{Robots $r_i,r_j\in \mathcal R_f$:}
    Robots $r_i$ and $r_j$ continue to act according to the
    multiplicity creation phase. Hence, by Lemma~\ref{lemma-mcp-1},
    their movements are collision-free.

    \item \textbf{Robots $r_i,r_j$ belong to different groups:}
    Without loss of generality, suppose that $r_i\in\mathcal R_g$ and
    $r_j\in\mathcal R_f$.
    If $p_m=\mathcal O(t)$, then both robots act exactly as in the
    multiplicity creation phase. Hence, by Lemma~\ref{lemma-mcp-1},
    they do not collide. Now consider the case when $p_m\neq\mathcal O(t)$.
    If $r_j$ does not move, then $r_i$ does not collide with $r_j$.
    If $r_j$ has a step-aside movement during the multiplicity
    creation phase, then according to the gathering rule, $r_i$ does
    not move. Hence, no collision occurs.
    Finally, if both robots move, then $r_j$ has a step-in movement
    with respect to $\mathcal O(t)$, while $r_i$ has either a direct
    movement or a step-aside movement with respect to $p_m$. If $r_i$ has a direct movement and the segments
    $\overline{r_i(t)p_m}$ and $\overline{r_j(t)\mathcal O(t)}$
    intersect at a point $x$, then by construction the destination of
    $r_j$ lies on the open segment $(r_j(t),x)$. Hence, the two robots
    do not collide. If $r_i$ has a step-aside movement and $r_i(t)$ lies on
    $C_l^*(t)$, then its destination lies on $C_l^*(t)\cap V_o$.
    If $\overline{r_j(t)\mathcal O(t)}$ does not intersect
    $C_l^*(t)\cap V_o$, then the paths do not intersect.
    Otherwise, let $x$ be the intersection point. Then the paths are
    separated by the bisector of the segment $\overline{r_i(t)x}$.
\end{itemize}
Thus, the gathering phase is collision-free.

\textbf{During the formation phase:}
Once all robots in $\mathcal R_g$ gather at $p_m$, we have
$|\mathcal R(t)|=N_f+1$, and the robots in $\mathcal R_f$ start the
formation phase.
During this phase only the robots in $\mathcal R_f$ move, and they
have either direct movement or step-out movement with respect to
$\mathcal O(t)$. Their movements are ordered according to their
distance from $\mathcal O(t)$.
Consider two robots $r_i,r_j\in\mathcal R_f$ active in round $t$.
Let $rad_i(t)$ and $rad_j(t)$ intersect $S(t)$ at $u_i$ and $u_j$,
respectively.
If $u_i\neq u_j$ and both $u_i$ and $u_j$ do not contain robot
positions, then the paths of the direct movements of $r_i$ and $r_j$
are separated by the bisector of the angle
$\angle r_i(t)\mathcal O(t)r_j(t)$.
If $u_i\neq u_j$ and exactly one of them contains a robot position,
say $u_i$, then the destination point of $r_i$ is computed so that it
lies on a different side of the bisector of
$\angle r_i(t)\mathcal O(t)r_j(t)$ from $u_j$. Hence, $r_i$ and $r_j$
do not collide.
Finally, if $u_i=u_j$, then the robots lie on the same radial line.
Without loss of generality, suppose that $r_j$ lies on $C_{k-1}(t)$
and $r_i$ lies on $C_{k-2}(t)$. Then the destination point of $r_i$
lies on $C_{k-1}(t)$, and the movement paths of $r_i$ and $r_j$ are
completely separated by $C_{k-1}(t)$.
Therefore, the formation phase is collision-free. This completes the
proof.
\end{proof}

\begin{lemma}
\label{lemma-overlap-transient}
If the multiplicity point created during the multiplicity creation
phase contains a robot from $\mathcal R_f$, then the gathering and
formation phases may overlap. In that case, no two robots from
$\mathcal R_f$ collide during the formation phase. Moreover, if a
robot from $\mathcal R_g$ collides with a robot from $\mathcal R_f$,
the resulting additional multiplicity point is unstable and disappears
once the robot from $\mathcal R_f$ moves away from it.
\end{lemma}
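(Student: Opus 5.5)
We split the statement into its three claims: why overlap can happen, why the $\mathcal R_f$ robots do not collide with each other during formation, and why any mixed collision is only transient.

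\emph{Why overlap can happen.} Suppose $p_m$ contains a robot $r_f\in\mathcal R_f$. Then $r_f$ counts as one of the robots at $p_m$, so the position count $|\mathcal R(t)|$ can reach $N_f+1$ while one robot $r_g\in\mathcal R_g$ is still away from $p_m$. In that situation the $\mathcal R_f$ robots start \textsc{Formation} while $r_g$ is still executing \textsc{Gathering}. We will also show that at most one gathering robot can be outside $p_m$ at this time. This follows by counting distinct positions: the $N_f$ formation robots occupy at most $N_f$ positions, one of which is $p_m$ itself, so at most one extra position remains for a gathering robot. This bound is the one already used at the end of the proof of Lemma~\ref{unqiue_SMP}.

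\emph{No collisions among $\mathcal R_f$ robots.} Take two active robots $r_i,r_j\in\mathcal R_f$ and follow the argument for the formation phase in Lemma~\ref{lemma-gf-collision-free}.
\begin{itemize}
\item If $u_i\neq u_j$, the two paths are separated by the bisector of $\angle r_i(t)\mathcal O(t)r_j(t)$. This uses that a step-out target makes an angle of at most $\beta_i/3m_i$ with $rad_i(t)$, so it stays on the robot's own side of the bisector.
\item If $u_i=u_j$, rules (C.2) and (C.3) allow only the robot on $C_{k-1}$ to go to $S(t)$ and the robot on $C_{k-2}$ to go to $C_{k-1}$. These two paths are separated by $C_{k-1}(t)$.
\item The new case is $r_f$ sitting at the multiplicity point. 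We show that the other formation robots either stay put or move within radial sectors that do not contain $r_f$'s path, so the same bisector and circle separation arguments still apply:
\begin{itemize}
\item if $r_f\notin\partial S(t)$, it moves by \textsc{StepOut} into the largest angular gap, so its path lies strictly inside an empty sector;
\item if $r_f\in\partial S(t)$, it moves to the midpoint of $\overline{r_f\mathcal O(t)}$, a radial segment that no other formation robot uses in that round, because any robot on that radius would have $u$ occupied by $r_f$ and would itself have to step aside.
\end{itemize}
\end{itemize}

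\emph{Mixed collisions are unstable.} The lone $r_g$ moves toward $p_m$ by a direct or step-aside move, and its path may cross a point where a formation robot stops, since the two robots use different reference points ($p_m$ versus $\mathcal O(t)$) and cannot separate their paths. Such a point holds exactly one $\mathcal R_g$ robot and therefore is not a stable multiplicity point. The original $p_m$ keeps at least $|\mathcal R_g|-1\ge5$ gathering robots, so it remains the unique stable multiplicity point, and $\mathcal R_g$ robots keep identifying it by global weak multiplicity detection. It remains to show the spurious point disappears. The formation robot there detects it by local weak multiplicity and, by the first branch of \textsc{Formation}, moves away, either by step-out or by first moving to the midpoint toward $\mathcal O(t)$. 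The gathering robot does not remain there either, because it continues toward $p_m$. So the point dissolves the next time that $\mathcal R_f$ robot is activated, which fairness guarantees.

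The main obstacle is the first branch of \textsc{Formation}: we must show that when $r_f$ leaves a multiplicity point, in particular one on $\partial S(t)$ moving inward, it cannot collide with another $\mathcal R_f$ robot doing a (C.3) step-out. We plan to handle this by showing that at such times $C_{k-1}$ contains no other $\mathcal R_f$ robot on $rad_f(t)$, using the ordering by radial levels.
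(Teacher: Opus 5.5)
Your overall route is the paper's. You reuse the formation-phase separation argument of Lemma~\ref{lemma-gf-collision-free} for pairs of $\mathcal R_f$ robots. You get instability of a mixed collision point from the fact that only one $\mathcal R_g$ robot is outside $p_m$. The paper's proof stops at the assertion that this argument ``depends only on the movements of robots in $\mathcal R_f$''. You rightly notice that more is needed, because the robot $r_f$ at a multiplicity point executes the first branch of \textsc{Formation}, which Lemma~\ref{lemma-gf-collision-free} never analyses. But your treatment of exactly this case fails. \textsc{StepOut} sends a robot along the bisector of the largest angular gap only when $r_f(t)=\mathcal O(t)$, and your empty-sector argument is fine there. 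When $p_m\neq\mathcal O(t)$ (e.g.\ $p_m=c_q\neq\mathcal O(t)$, or after the lone $\mathcal R_g$ robot has changed $S(t)$), $r_f$ moves radially along $rad_f(t)$. If $\overline{r_f(t)u_f(t)}$ is blocked, it instead moves from an arbitrary level $C_l(t)$ to $C_{l+1}(t)$ at angle $\beta_f/(3m_f)$. Meanwhile robots farther out on the same ray may be doing (C.2)/(C.3) step-outs into the same wedge. So the ``only $C_{k-1}$ and $C_{k-2}$ move'' ordering you want to reuse is unavailable. Moreover, the separation by $C_{k-1}(t)$ that you take over from that lemma is itself not automatic: a step-out chord from $C_{k-1}(t)$ to $S(t)$ at angle $\phi$ enters the interior of $C_{k-1}(t)$ whenever $\rho_k\cos\phi<\rho_{k-1}$.

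In the boundary case, you claim that no other formation robot uses the segment from $r_f(t)$ to the midpoint $y_f$ of $\overline{r_f(t)\mathcal O(t)}$. Your justification covers only an \emph{activated} (C.2) robot on $C_{k-1}(t)\cap rad_f(t)$. Under $\mathrm{SSYNC}$, a non-activated robot on $rad_f(t)$ at distance more than $\rho_k/2$ from $\mathcal O(t)$ sits on $r_f$'s path, and the non-rigid adversary can stop $r_f$ there. A (C.3) robot on $C_{k-2}(t)\cap rad_f(t)$ aims at $C_{k-1}(t)$, not at $r_f$'s position. Your proposed remedy does not help; in fact it describes the bad case. If $C_{k-1}(t)$ has no robot on $rad_f(t)$, that (C.3) robot's step-out is a radial move \emph{outward} along $rad_f(t)$ while $r_f$ moves inward to $y_f$. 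When $\rho_k/2<\rho_{k-1}$ the two segments overlap, so both can be stopped at a common point. None of the rules you cite excludes these configurations. This case therefore needs a genuinely new argument (possibly a modified first branch of \textsc{Formation}), not an appeal to Lemma~\ref{lemma-gf-collision-free}.

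Two smaller points. First, the counting step needs the $\mathcal R_f$ robots to occupy exactly, not at most, $N_f$ distinct positions, and no outside $\mathcal R_g$ robot may share a position with an $\mathcal R_f$ robot. Second, the $\mathcal R_g$ robot at the spurious point does not ``continue toward $p_m$''. It sees two multiplicity points it cannot tell apart, and the paper's proof of Lemma~\ref{GPG} has it stay put. Since the lemma only needs the $\mathcal R_f$ robot to leave, simply drop that claim.
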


\begin{proof}
Suppose that the multiplicity point $p_m$ created during the
multiplicity creation phase contains a robot from $\mathcal R_f$.
Since the movements during the multiplicity creation phase are
collision-free, at most one robot from $\mathcal R_f$ can lie at
$p_m$.
The gathering and formation phases may overlap only in the following
situation: exactly one robot from $\mathcal R_g$ remains outside
$p_m$, while at least one robot from $\mathcal R_f$ not lying at
$p_m$ finds $|\mathcal R(t)|=N_f+1$ and starts the formation phase.
This is possible because one robot from $\mathcal R_f$ already lies
at $p_m$ and the robots in $\mathcal R_f$ have only local weak
multiplicity detection capability.
Let $r_i\in\mathcal R_g$ be the unique robot not lying at $p_m$, and
let $r_j\in\mathcal R_f$ be a robot active in the same round. Robot
$r_i$ has either a direct or a step-aside movement toward $p_m$,
whereas $r_j$ has a step-out movement toward the boundary of $S(t)$.
Hence, the paths of these two robots may intersect. Due to non-rigid
movement, the adversary may stop both robots at such an intersection
point, thereby creating another multiplicity point.
However, this multiplicity point is not stable, because it contains
only one robot from $\mathcal R_g$. Once the robot from
$\mathcal R_f$ moves away from that point, the multiplicity is
broken.
Moreover, the collision-avoidance argument used in the formation
phase depends only on the movements of robots in $\mathcal R_f$.
Hence, no two robots from $\mathcal R_f$ collide during the
formation phase, even when the overlap occurs.
Therefore, during the overlap of the gathering and formation phases,
any additional multiplicity point created by a collision between a
robot from $\mathcal R_g$ and a robot from $\mathcal R_f$ is only
temporary and cannot become stable. This completes the proof.
\end{proof}

\begin{lemma}
\label{GPG}
The gathering phase gathers all robots in $\mathcal R_g$ at the unique stable
multiplicity point.
\end{lemma}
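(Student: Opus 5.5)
The proof will be a potential-function argument on the robots of $\mathcal R_g$ that lie outside $p_m$. It rests on three invariants. First, by Lemma~\ref{unqiue_SMP}, the point $p_m$ exists, is unique and is stable. Second, $p_m$ stays a multiplicity point from then on, and every robot of $\mathcal R_g$ identifies it by global weak multiplicity detection. Third, no second stable multiplicity point ever appears. For the second invariant: robots of $\mathcal R_g$ at $p_m$ never move (Algorithm~\ref{alg:gathering}), and an $\mathcal R_f$ robot at $p_m$ leaves only once the formation phase starts, at which point at least $|\mathcal R_g|-1\ge 5$ gathering robots remain there. For the third: Lemmas~\ref{lemma-gf-collision-free} and~\ref{lemma-overlap-transient} show that the only extra multiplicities are unstable ones containing a single $\mathcal R_g$ robot. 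A robot of $\mathcal R_g$ would have to pick the correct $p_m$ even when such a transient point exists, so I will argue it selects the stable one. The worry is that a robot in $\mathcal R_g$ cannot tell a transient multiplicity from $p_m$; I will settle this by noting that the transient case arises only when exactly one $\mathcal R_g$ robot is outside $p_m$. That robot is then itself at the transient point, and it moves toward $p_m$ once the $\mathcal R_f$ robot there departs, as Lemma~\ref{lemma-overlap-transient} guarantees.

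Next, fix a robot $r_i\in\mathcal R_g$ with $r_i(t)\neq p_m$. I will show it reaches $p_m$ in finite time by splitting on the rule it follows. In case (B.1), $(r_i(t),p_m)$ contains no robot, so $r_i$ moves straight. Because movement is non-rigid with minimum step $\delta$, each activation either brings $r_i$ to $p_m$ or cuts its distance by at least $\delta$. Fairness then gives arrival, provided the corridor stays free. To show the corridor stays free, I will use the collision-separation arguments of Lemma~\ref{lemma-gf-collision-free}: other robots move within wedges or circular annuli separated from $\overline{r_i(t)p_m}$, with one exception. An $\mathcal R_f$ robot doing a step-in or step-out may cross the corridor, and its destination is chosen before the crossing point (the intersection-line construction in \textsc{StepIn} and \textsc{StepAside}). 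So at worst the corridor gets blocked, and we fall into case (B.2).

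For case (B.2), I will use the potential $\Phi_i(t)$, the number of distinct robot positions on $(r_i(t),p_m)$. Under a step-aside with respect to $p_m$, $r_i$ moves along its circle $C_k^*(t)$ to a point $x_i$ chosen before any line in $H_i(t)$. Two consequences follow. The new segment $(x_i,p_m)$ hits no robot position that was visible at time $t$, except robots that lie on the same ray and move with $r_i$, and those are handled by the separation by the circle $C^*_{kl}(t)$. Also, the robot nearest $p_m$ on the old segment stays put. Together these give $\Phi_i$ dropping to $0$ after a step-aside, or at least strictly decreasing under non-rigid stopping. Since the distance $\|r_i-p_m\|$ is unchanged by a step-aside and reduced by direct moves, the lexicographic pair $(\Phi_i,\|r_i-p_m\|)$ decreases over every activation of $r_i$ that produces a move.

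The main obstacle is the exceptional subcase of (B.2) in which $r_i$ stays stationary: $\mathcal R(t)$ is \3, quasi-regular, not free-path, with $c_q=\mathcal O(t)$. Here progress must come from the other robots. I plan to invoke the analysis of that case in Lemma~\ref{unqiue_SMP}: the blocking robots inside $S(t)$ perform step-aside movements with respect to $\mathcal O(t)$, and each round strictly increases the number of robots with free paths or destroys the quasi-regularity. Within finite time, then, the configuration leaves this subcase or $r_i$'s segment becomes free. Finally, I will combine the per-robot arguments. The blocking relation is acyclic because blockers are strictly closer to $p_m$, so I can use induction on the distance ranks $\rho^*_k$: the robots on the innermost circle $C^*_1$ gather first, then those on $C^*_2$, and so on. This yields that all of $\mathcal R_g$ is at $p_m$ in finite time.
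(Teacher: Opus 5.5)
Your overall strategy is the paper's. Both arguments take stability and uniqueness of $p_m$ from Lemma~\ref{unqiue_SMP}, use direct moves that gain $\delta$ per activation, and use step-aside moves that shed at least the blocker nearest to $p_m$. Both resolve the exceptional subcase of (B.2) by a change of configuration class, and both treat the transient $\mathcal R_g$--$\mathcal R_f$ multiplicity in the overlap the same way.

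The gap is in your termination device. You show only that $(\Phi_i,\|r_i-p_m\|)$ decreases at activations of $r_i$. But $\Phi_i$ is also changed by the moves of other robots, and nothing in your argument controls this.

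In (B.1) you concede that the corridor may become blocked. That raises $\Phi_i$ from $0$ to a positive value, which is a lexicographic \emph{increase}. Your justification there (the destination is cut off before the crossing point) holds for \textsc{StepIn} and \textsc{StepAside}. It does not hold for \textsc{StepOut} or for the direct radial moves of \textsc{Formation}, which are not cut off by lines of $H_j(t)$ at all. This is exactly why Lemma~\ref{lemma-overlap-transient} must admit genuine $\mathcal R_g$--$\mathcal R_f$ collisions.

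In (B.2) you exclude only robots that were at rest at time $t$. A non-pivotal $\mathcal R_f$ robot on $\partial S(t)$ doing \textsc{StepIn} in the same round is stopped only by lines through two time-$t$ positions. The new segment $(r_i(t+1),p_m)$ is not such a line, so nothing in your argument prevents the non-rigid adversary from stopping that robot on it, and then $\Phi_i$ need not drop.

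The paper treats precisely this interaction as a separate scenario: the step-in destination lies on $(r_j(t),x)$, while that of $r_i$ lies in $V_o$. It handles the overlap not by a blocker count but by the $\delta$-decrease of $\|r_i-p_m\|$ together with the transience of the extra multiplicity.

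Likewise, your claim that ``the nearest blocker stays put'' needs its reason stated. It holds for an $\mathcal R_g$ blocker, which has a free path and moves along $\mathcal L_i$. It holds for an interior $\mathcal R_f$ blocker outside the exceptional subcase, which is stationary under \textsc{MultiplicityCreation}. It can fail during the overlap, when the blocker executes \textsc{Formation} and leaves $\mathcal L_i$.

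To close the gap, you must show that events raising $\Phi_i$ occur only finitely often (or not at all). Then, from some round on, $\Phi_i$ is non-increasing between activations of $r_i$, and only then does the lexicographic argument give termination.

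Three smaller points. First, the circle $C^*_{kl}(t)$ separates the paths of co-radial robots but does not stop them from becoming co-radial again. So ``$\Phi_i$ drops to $0$'' is unjustified; the strict decrease comes only from the nearest blocker.

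Second, in the exceptional subcase the measure from Lemma~\ref{unqiue_SMP} counts free paths to $\mathcal O(t)$, not to $p_m$. There the blocked $\mathcal R_g$ robots are frozen. An $\mathcal R_f$ blocker whose path to $\mathcal O(t)$ is free and whose point $rad_j(t)\cap S(t)$ is unoccupied does not move. What you actually need, as the paper asserts, is that the configuration leaves this class in finite time; after that, $r_i$ steps aside while its blockers stay still.

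Third, the induction over the ranks $\rho^*_k$ is not well founded, since ranks change as robots move. It is also unnecessary once each robot is individually shown to reach $p_m$ in finite time.
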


\begin{proof}
 By \textbf{Lemma} \ref{lemma-mcp-1}, the multiplicity point created during the multiplicity creation phase is stable. Let $p_m$ denote the multiplicity point created during the multiplicity creation phase. Robots in $\mathcal R_g$ can identify the multiplicity point using their global weak multiplicity detection capability. First, consider the case when $p_m$ contains no robot from $\mathcal R_f$. The robots in $\mathcal R_f$ do not have global weak multiplicity detection capability. Thus, they can not detect the start of the gathering phase and continue performing the actions according to the multiplicity creation phase. By \textbf{Lemma} \ref{lemma-gf-collision-free}, robots do not collide during movements in this case. Thus, $p_m$ remains as the unique multiplicity point in the system. Consider a robot $r_i\in\mathcal R_g$. If $r_i(t)=p_m$, robot $r_i$ does not move. If $(r_i(t),\mathcal O(t))$ does not contain any other robot positions, then robot $r_i$ has direct movement towards $\mathcal O(t)$. Since robots have collision-free movements, robot $r_i$ reaches $\mathcal O(t)$ in finite time. Finally, suppose that $(r_i(t),\mathcal O(t))$ contains at least one robot position. We must show that robot $r_i$ gets a free path to $\mathcal O(t)$ in finite time. Robot $r_i$ adopts {\it step-aside movement} w.r.t. $p_m$. Now, if $r_i$ and all the robots on $(r_i(t),p_m)$ move in each round, then $r_i$ may not get a free path to $p_m$ (all these robots may land on a new line passing through $p_m$). To tackle this, we consider the possible scenarios when all the robots on $(r_i(t),p_m)$ move. If $(r_i(t),p_m)$ contains all robots from $\mathcal R_f$ and $\mathcal R(t)$ is quasi-regular but no free-path quasi regular with $c_q=\mathcal O(t)$, then the robots on $(r_i(t),p_m)$ perform {\it step-aside movements}. According to our algorithm, robot $r_i$ does not move in this case. Due to movements of the robots on $(r_i(t),\mathcal O(t))$, robot $r_i$ either gets a free path to $p_m$ or the characteristic of the current configuration is changed. In the first case, we are done. In the second case, robot $r_i$ makes a {\it step-aside} movement and the robots on $(r_i(t),p_m)$ do not move. Thus, $r_i$ gets a free path to $p_m$. Now, if $(r_i(t),\mathcal O(t))$ also contains robots from $\mathcal R_g$, then each movement of the robots from $(r_i(t),p_m)$ reduces the number of robots on the path of $r_i$ to $p_m$. Since the number of robots is finite, robot $r_i$ gets a free direct path to $p_m$ within finite time. Now consider the other scenarios when the $\mathcal R_f$ robots do not have step-aside movements. The $\mathcal R_f$ robots can have step-in movements w.r.t. $\mathcal O(t)$ or $c_q$ depending on the configuration. If robots in $\mathcal R_f$ do not move, then the movements of the robots on $(r_i(t),p_m)$ reduce the number of robots on the direct path of $r_i$ to $p_m$ and in finite time $r_i$ gets a free path to $p_m$. Consider the scenario when robots in $\mathcal R_f$ have step-in movements. Only the robots on $S(t)$ can have step-in movements. We show that robots having step-in movements do not become collinear with $r_i$ and $p_m$. Let $r_j\in\mathcal R_f$ be a robot which takes a step-in movement during this phase. If $\overline{r_i(t)p_m}$ does not intersect $\overline{r_j(t)d}$, then we have nothing to prove where $d\in\{\mathcal O(t),c_q\}$. Otherwise, let $x$ be the intersection point between $\overline{r_i(t)p_m}$ and $\overline{r_j(t)p}$. The destination point of $r_j$ during its step-in movement lies on $(r_j(t),x)$. The destination point of $r_i$ lies in $V_o$. Thus, the paths of these two robots are separated by the perpendicular line to $\overline{r_i(t)r_j(t)}$ passing through the point $x$. This completes the proof in this case.

 Consider when a robot from $\mathcal R_f$ lies at $p_m$. In this case, the gathering phase overlaps with the formation phase. In the proof of \textbf{Lemma} \ref{lemma-gf-collision-free}, we have seen that no two robots $\mathcal R_f$ collided during the formation phase. However, we have shown that exactly one robot from one $\mathcal R_g$, say $r_i$, may collide with a robot from $\mathcal R_f$. This will create a multiplicity point, say $p^*$. According to the algorithm, robots from $\mathcal R_g$ do not move when they find that they lie at a multiplicity point. Since all but one robot from $\mathcal R_g$ lies at $p_m$, a stable multiplicity point, no other robot from $\mathcal R_g$ moves to $p^*$. Thus, the multiplicity point $p^*$ is unstable: once the robot from $\mathcal R_f$ lying at $p^*$ moves, this multiplicity point is broken. The formation phase starts when $\mathcal R(t)=|N_f|$. Thus, according to the algorithm,  the robots from $\mathcal R_f$ lying at the multiplicity point move during this phase. This implies that the multiplicity of $p^*$ will be broken in finite time. Once this happens, robot $r_i$ moves towards $p_m$. Robot $r_i$ may again collide with some other robot from $\mathcal R_f$; however, its distance from $p_m$ is decreased by at least $\delta$ in each movement. Thus, within a finite time, it reaches $p_m$. This completes the proof of the lemma.  
 \end{proof}

 \begin{lemma}
\label{Fpp}
The formation phase places all robots in $\mathcal R_f$ at distinct positions on the
boundary of a common circle.
\end{lemma}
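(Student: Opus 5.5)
The plan is a potential-function argument over the formation rules (C.1)--(C.4), together with the multiplicity rule in Algorithm~\ref{alg:formation}. I work from the first time $t_0$ at which some robot of $\mathcal R_f$ observes $|\mathcal R(t)|\le N_f+1$. By Lemmas~\ref{GPG} and~\ref{lemma-overlap-transient}, from $t_0$ on at most one robot of $\mathcal R_g$ is outside $p_m$, and it reaches $p_m$ in finite time. Any $\mathcal R_g$--$\mathcal R_f$ multiplicity created along the way is unstable and is broken when the $\mathcal R_f$ robot moves.

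\textbf{Step 1: the circle is invariant.} No robot on $\partial S(t)$ moves, except an $\mathcal R_f$ robot that sits at a multiplicity point on $\partial S(t)$. That robot moves inward to the midpoint of $\overline{r_i(t)\mathcal O(t)}$. So I must show this inward move does not change $S(t)$.

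I would argue this via Property~\ref{prop:sec}. The remaining robot at that point (the stable $p_m$, or the $\mathcal R_g$ partner) keeps the point occupied, so the set $S_{out}(t)$ is unchanged. Step-out moves of the other robots land exactly on $S(t)$ and never beyond it, so they cannot enlarge the circle.

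This gives a fixed circle $S$ and centre $\mathcal O$ for the rest of the execution. That fixed circle is the circle $S$ required in item (ii) of the class \4.

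\textbf{Step 2: a progress measure.} I define
\[
\Phi(t)=\bigl(\,N_f-|\mathcal R_f\cap\partial S|,\ \text{radial level of the outermost interior }\mathcal R_f\text{ robot}\,\bigr),
\]
ordered lexicographically, with the second coordinate counted from $\partial S$ inward.

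\begin{enumerate}[(a)]
\item By (C.2), a robot on $C_{k-1}$ either moves directly to $x_i$ or steps out to a point on $S$. In either case it reaches $\partial S$ after finitely many non-rigid steps, because each step covers at least $\delta$ and the step-out target lies on $S$ itself.
\item By (C.3), when $C_{k-1}$ holds exactly one robot, a robot on $C_{k-2}$ is promoted to $C_{k-1}$. That robot then falls under (a).
\item A robot at $\mathcal O$ uses the bisector ray of the largest angular gap, which is a free radial line.
\end{enumerate}

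Lemma~\ref{lemma-gf-collision-free} and Lemma~\ref{lemma-overlap-transient} give collision-freeness among $\mathcal R_f$ robots. Hence once a robot lands on $\partial S$ it is at a position distinct from all others. By (C.1) it then never moves again, unless it sits at an unstable multiplicity, which is handled by Step~1.

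\textbf{Step 3: no deadlock.} I must exclude configurations in which interior $\mathcal R_f$ robots exist but none is enabled. This is the main obstacle.

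If $C_{k-1}$ contains an $\mathcal R_f$ robot, that robot is enabled. Otherwise $C_{k-1}$ contains only the point $p_m$ or the lone $\mathcal R_g$ robot, which is exactly one occupied position, so (C.3) enables the robots on $C_{k-2}$.

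The delicate subcase is when $C_{k-1}$ holds $p_m$ and $C_{k-2}$ is empty of $\mathcal R_f$ robots. Here I would first try to show the following. By the end of the gathering phase the configuration consists of $p_m$ plus the $\mathcal R_f$ positions. The radial levels strictly inside the level of $p_m$ must therefore be reached by first moving robots off $\mathcal O$ or lower levels. Failing a clean argument, I would use the fact that the interior levels contain finitely many robots. Whenever $C_{k-1}$ has exactly one occupied position, the next level is promoted, so by induction on levels every interior robot is eventually promoted.

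\textbf{Conclusion.} Each activation that moves a robot strictly decreases $\Phi$ after finitely many rounds, by fairness and the $\delta$ progress bound. Also $\Phi$ takes finitely many values. Hence $\Phi$ reaches $(0,\cdot)$, which means all $N_f$ robots are at distinct points of $\partial S$. They are then stationary forever, because no multiplicity remains on $\partial S$ once gathering is complete. This proves the lemma.
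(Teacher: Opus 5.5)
Your route is essentially the paper's: keep $S$ fixed, move the $\mathcal R_f$ robots outward level by level via (C.2)/(C.3), and use the collision-freeness lemmas. However, Step~1 has a genuine gap. You claim that from $t_0$ on, the only robots on $\partial S(t)$ that move are $\mathcal R_f$ robots at multiplicity points, so $S$ and $\mathcal O$ stay fixed. This overlooks the last gathering robot $r_k$. In the overlap case (an $\mathcal R_f$ robot sitting at $p_m$), $r_k$ is still outside $p_m$ at $t_0$ and keeps moving toward $p_m$ by direct or step-aside moves. Nothing prevents $r_k$ from lying on $\partial S(t_0)$, for instance as a former pivotal robot, and its inward moves can shrink the smallest enclosing circle. $\mathcal R_f$ robots already placed on the old circle may then lie strictly inside the new one and must move again. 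So the first coordinate of $\Phi$ can increase, and your descent argument fails as written. The paper handles exactly this: it notes that $S(t)$ may change because of $r_k$, waits until $r_k$ reaches $p_m$ (finite by Lemma~\ref{GPG}), and only then uses invariance of the circle. You already quote the needed fact in your opening paragraph. The fix is to run Steps~1--3 from the time $t_1\ge t_0$ at which $r_k$ has reached $p_m$; after that, the only non-$\mathcal R_f$ position is the stationary point $p_m$, and your Step~1 reasoning goes through.

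Once you work from $t_1$, two further points need tightening. First, $\Phi$ is still not monotone. If $p_m\in\partial S$, an $\mathcal R_f$ robot sharing $p_m$ moves inward to the midpoint of $\overline{r_i(t)\mathcal O(t)}$, which raises the first coordinate. You should note that after $t_1$ only $p_m$ can be a multiplicity point, so this happens at most once and $\Phi$ increases only finitely often. Second, your ``delicate subcase'' in Step~3 is vacuous after $t_1$. If $C_{k-1}(t)=\{p_m\}$ and no $\mathcal R_f$ robot lies on $C_{k-2}(t)$, then the level $C_{k-2}(t)$ does not exist, so any remaining interior $\mathcal R_f$ robot sits at $\mathcal O(t)$. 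Such a robot lies on no circle $C_j(t)$, since all radii in $D(t)$ are positive. Unless it is at a multiplicity point, rule (C.4) leaves it stationary. Your item~(c) applies the bisector rule of \textsc{StepOut}, but \textsc{Formation} invokes that rule only from its multiplicity branch. The paper's proof does not treat this case either, but your argument should not claim it is covered.
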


\begin{proof}
This phase starts when robots in $\mathcal R_f$ find $|\mathcal R(t)|\le N_f+1$. We prove the lemma by considering the following two cases separately: (i) when $p_m$ does not contain a robot from $\mathcal R_f$ and (ii) when $p_m$ contains a robot from $\mathcal R_f$.

 \begin{itemize}
     \item {\bf $p_m$ does not contain a robot from $\mathcal R_f$:} In this case, only the robots in $\mathcal R_f$ moves during this phase. By \textbf{Lemma} \ref{lemma-gf-collision-free}, robots do not collide with each other during movements in this case. The robots lying on $S(t)$ do not move. The robots from $\mathcal R_f$ lying inside or at the multiplicity point move according to {\it step-out movement}. Thus, $S(t)$ remains invariant during this phase. The movements of the robots are ordered: a robot $r_i\in \mathcal R_f$ moves if it lies on $C_{k-1}(t)$ or on $C_{k-2}(t)$, where $C_k(t)=S(t)$. Robots lying on $C_{k-2}(t)$ moves to $C_{k-1(t)}$ when there is exactly one robot position on $C_{k-1}(t)$. A robot lying on $C_{k-1}(t)$ moves to $S(t)$. Suppose $r_i$ lies on $C_{k-1}(t)$ and $x$ is the intersection point between $rad_i(t)$ and $S(t)$. If $x$ contains no robot position, then $r_i$ moves straight to $x$. Otherwise, it computes a point on $S(t)$ which does not contain any robot position. By \textbf{Lemma} \ref{lemma-gf-collision-free}, no two robots compute the same destination point on $S(t)$ (otherwise, a collision could happen). Since $S(t)$ remains invariant under the robots' movements, robot $r_i$ reaches $S(t)$ within finite time. If $r_i$ lies on $C_{k-2}(t)$, then by the same approach discuss above, robot $r_i$ first reaches $C_{k-1}(t)$ and then from there to $S(t)$. This implies the lemma in this case.  
     
    \item {\bf $p_m$ contains a robot from $\mathcal R_f$:} Let $r_j\in\mathcal R_f$ lie  at $p_m$. In this case, the formation phase may start before the end of the gathering phase. It may start before exactly one robot, say $r_k$, reaches $p_m$. Thus, the circle $S(t)$ may change with the robot's movement $r_k$. However, the robot $r_k$ reaches $p_m$ within a finite time. Once $r_k$ reaches $p_m$, the smallest enclosing circle of the robot positions becomes stable. Thus, we can conclude the proof in this case using the same argument as in the above.
 \end{itemize}
 \label{pag}
\end{proof}

From the above lemmas, we obtain our main result, stated in Theorem~\ref{thm:main}.

\begin{theorem}
\label{thm:main}
Let
\(
\mathcal R=\mathcal R_g\cup \mathcal R_f
\)
be a set of anonymous, oblivious, fully disoriented robots operating under the
$\mathrm{SSYNC}$ scheduler with non-rigid movements. Suppose that
\(
|\mathcal R_g|\ge6
\)
and
\(
|\mathcal R_f|\ge2
\), and that initially all robots occupy distinct positions.
If robots in $\mathcal R_g$ have global weak multiplicity detection and robots in
$\mathcal R_f$ have local weak multiplicity detection together with the knowledge of
\(
|\mathcal R_f|,
\)
then Algorithm \textsc{PatternFormation()} terminates in finite time such that
all robots in $\mathcal R_g$ gather at one point and all robots in $\mathcal R_f$ occupy
distinct positions on a common circle.
\end{theorem}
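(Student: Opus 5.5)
The theorem follows by chaining the phase lemmas along the execution flow of Figure~\ref{fig:state-transition}. The main work is to confirm three things: the phases are entered in the right order, each lemma's hypotheses hold at its hand-off point, and the final state satisfies both conditions of the \4{} class forever after.

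\textbf{Step 1 (initial configuration and multiplicity creation).} Initially all robots are stationary at distinct positions, so $\mathcal R(0)$ has no multiplicity point and lies in \2{} or \3{}. By Lemma~\ref{lemma-sec}, the pivotal set preserves $S(t)$ whenever non-pivotal boundary robots step in. By Lemma~\ref{lemma-mcp-1}, all moves in this phase are collision-free. Lemma~\ref{unqiue_SMP} then gives a finite time $t_1$ at which a unique stable multiplicity point $p_m$ exists. From $t_1$ on, the priority order of \textsc{PatternFormation()} sends every robot of $\mathcal R_g$ to \textsc{Gathering}. I would also check that no second multiplicity point appears before $t_1$; this is exactly the collision-freeness in Lemma~\ref{lemma-mcp-1}.

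\textbf{Step 2 (gathering).} Split on whether $p_m$ contains a robot of $\mathcal R_f$.
\begin{itemize}
\item If it does not, Lemma~\ref{lemma-gf-collision-free} shows that gathering precedes formation and is collision-free. Lemma~\ref{GPG} then gives a finite time $t_2$ at which all of $\mathcal R_g$ is at $p_m$.
\item If it does, Lemma~\ref{lemma-overlap-transient} shows that the only extra multiplicities are unstable $\mathcal R_g$/$\mathcal R_f$ collisions. Lemma~\ref{GPG} shows the last gathering robot still reaches $p_m$, since each activation decreases its distance by at least $\delta$.
\end{itemize}
In both cases the remaining $\geq 5$ gathering robots at $p_m$ never move, because \textsc{Gathering} keeps a robot at $p_m$ stationary. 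Hence $p_m$ stays the unique stable multiplicity, and condition (i) of \4{} holds for all $t'\ge t_2$.

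\textbf{Step 3 (formation and termination).} Once all of $\mathcal R_g$ sits at $p_m$, we have $|\mathcal R(t)|\le N_f+1$. Every $\mathcal R_f$ robot then runs \textsc{Formation}, and Lemma~\ref{Fpp} places them at distinct points of a common circle within finite time. After the gathering robot $r_k$ in the overlap case arrives, $S(t)$ is invariant, because robots on $\partial S(t)$ do not move (rule C.1). To conclude termination I would argue the following: once all of $\mathcal R_f$ lies on $\partial S(t)$ at distinct positions and none is at a multiplicity point, every $\mathcal R_f$ robot remains stationary by rule C.1, and every $\mathcal R_g$ robot is at $p_m$ and stationary. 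The configuration is therefore fixed for all later times, so condition (ii) of \4{} holds.

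\textbf{Expected main obstacle.} The delicate point is the interaction between phases in the overlap case, for two reasons. First, one must exclude that an $\mathcal R_f$ robot ends permanently on $\partial S(t)$ while coinciding with $p_m$ or with another robot. Second, one must exclude that $S(t)$ keeps changing as a gathering robot moves, which could block the termination argument of Lemma~\ref{Fpp}. My first approach is a potential argument. The quantities are the distance of the unique straggling $\mathcal R_g$ robot to $p_m$ (decreasing by at least $\delta$ per activation) and the number of $\mathcal R_f$ robots not on the current $\operatorname{SEC}$. The second quantity is nonincreasing once the straggler has arrived, because \textsc{Formation} moves robots only outward and never off $\partial S(t)$. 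I would combine this with the fact that an $\mathcal R_f$ robot at a multiplicity point always moves away (the first branch of Algorithm~\ref{alg:formation}), so no $\mathcal R_f$ robot remains at $p_m$ in the final configuration.
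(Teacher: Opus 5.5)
Your proposal is essentially the paper's proof. The paper obtains the theorem simply by combining the preceding lemmas, in the chain you describe: Lemmas~\ref{lemma-sec}, \ref{lemma-mcp-1}, \ref{unqiue_SMP} for multiplicity creation, \ref{lemma-gf-collision-free}, \ref{lemma-overlap-transient}, \ref{GPG} for gathering, and \ref{Fpp} for formation. Your explicit check that the final configuration is a fixed point is a correct addition that the paper leaves implicit.

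One side claim needs fixing. The first branch of \textsc{Formation} does move an $\mathcal R_f$ robot at a multiplicity point on $\partial S(t)$ inward, to the midpoint of $\overline{r_i(t)\mathcal O(t)}$. So rule C.1 does not cover every boundary robot, and your count of $\mathcal R_f$ robots off the $\operatorname{SEC}$ can increase once. $S(t)$ is still unaffected, because $p_m$ stays occupied by the $\mathcal R_g$ robots. Since Step~3 already rests on Lemma~\ref{Fpp}, you do not need the potential argument anyway.
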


\section{Conclusions} 
\label{con}
%\vspace*{-0.2cm}
In this paper, we have extended the study initiated in~\cite{Conflict-1}. We propose a distributed algorithm that solves the problem for disoriented semi-synchronous robots with non-rigid movements. Our algorithm works without any kind of assumption on the local coordinate axes of the robots. The proposed algorithm assumes global weak multiplicity detection capability for the robots, which solves the gathering problem and local weak multiplicity detection capability for the robots, which solves the circle formation problem. The proposed algorithm works with robots that do not have rigid movements. One of the future directions of this work is to extend it to disoriented asynchronous robots.  It would also be interesting to remove the assumptions $|\mathcal{R}_g|\ge 6$ and the knowledge of $|\mathcal{R}_f|$.

% \section*{Acknowledgements}
% ...

\section*{Funding}
Animesh Maiti and Prakhar Shukla were supported by the INSPIRE Fellowship of the Department of Science and Technology (DST), Government of India.

\section*{Declaration of competing interest}
The authors declare that they have no known competing financial
interests or personal relationships that could have appeared to
influence the work reported in this paper.

%
% ---- Bibliography ----
%
% BibTeX users should specify bibliography style 'splncs04'.
% References will then be sorted and formatted in the correct style.
%
% \bibliographystyle{splncs04}
% \bibliography{mybibliography}

\bibliographystyle{plain} 
\bibliography{Biblo}

\end{document}